\algnewcommand\algorithmicinput{\textbf{Input: }}
\algnewcommand\INPUT{\State\algorithmicinput}
\algnewcommand\algorithmicinitialize{\textbf{Initialize: }}
\algnewcommand\INIT{\State\algorithmicinitialize}
\algnewcommand\algorithmicrun{\textbf{Run: }}
\algnewcommand\RUN{\State\algorithmicrun}
\algnewcommand\algorithmicupdate{\textbf{Update: }}
\algnewcommand\UPDATE{\State\algorithmicupdate}
\algnewcommand\algorithmicset{\textbf{Set: }}
\algnewcommand\SET{\State\algorithmicset}
\algnewcommand\algorithmicquery{\textbf{Query: }}
\algnewcommand\QUERY{\State\algorithmicquery}
\algnewcommand\algorithmicoutput{\textbf{Output: }}
\algnewcommand\OUTPUT{\State\algorithmicoutput}
\newtheoremstyle{norm}
{12pt}
{12pt}
{}
{}
{\bf}
{:}
{.5em}
{}
\newtheorem{thm}{Theorem}[section]
\newtheorem*{thm*}{Theorem}
\newtheorem*{clm*}{Claim}
\newtheorem*{conj*}{Conjecture}
\newtheorem{cor}[thm]{Corollary}
\newtheorem{lem}[thm]{Lemma}
\newtheorem*{lem*}{Lemma}
\theoremstyle{norm}
\newtheorem{prb}[thm]{Problem}
\newtheorem*{prb*}{Problem}
\newtheorem*{ax*}{Axiom}
\newtheorem{df}[thm]{Definition}
\newtheorem*{df*}{Definition}
\newtheorem*{ex*}{Example}
\newtheorem{expl}[thm]{Exploration}
\newtheorem*{pos*}{Postulate}
\newtheorem{pr}[thm]{Proposition}
\newtheorem*{pr*}{Proposition}
\newtheorem*{qu*}{Question}
\newtheorem{rem}[thm]{Remark}
\newtheorem*{rem*}{Remark}
\tikzstyle{prbox} = [draw=black, fill=blue!20, very thick,
\tikzstyle{thbox} = [draw=black,double, fill=blue!10, very thick,
\tikzstyle{cpbox} = [drop shadow={
\tikzstyle{wrbox} = [drop shadow={
\tikzstyle{hnbox} = [draw=black, fill=white, very thick,
\newcommand{\sE}[0]{\mathscr{E}}
\newcommand{\E}[0]{\mathbb{E}}
\newcommand{\Pj}[0]{\mathbb{P}}
\newcommand{\R}[0]{\mathbb{R}}
\newcommand{\one}[0]{\mathbbm{1}}
\newcommand{\be}[0]{\beta}
\newcommand{\ga}[0]{\gamma}
\newcommand{\Ga}[0]{\Gamma}
\newcommand{\de}[0]{\delta}
\newcommand{\ep}[0]{\varepsilon}
\newcommand{\eph}[0]{\frac{\varepsilon}{2}}
\newcommand{\ka}[0]{\kappa}
\newcommand{\la}[0]{\lambda}
\newcommand{\rh}[0]{\rho}
\newcommand{\Te}[0]{\Theta}
\newcommand{\Om}[0]{\Omega}
\newcommand{\si}[0]{\sigma}
\newcommand{\nin}[0]{\not\in}
\newcommand{\opl}[0]{\oplus}
\newcommand{\ot}[0]{\otimes}
\newcommand{\sub}[0]{\subset}
\newcommand{\subeq}[0]{\subseteq}
\newcommand{\supeq}[0]{\supseteq}
\newcommand{\bs}[0]{\backslash}
\newcommand{\iy}[0]{\infty}
\newcommand{\rc}[1]{\frac{1}{#1}}
\newcommand{\prc}[1]{\pa{\rc{#1}}}
\newcommand{\fc}[2]{\frac{#1}{#2}}
\newcommand{\pf}[2]{\pa{\frac{#1}{#2}}}
\newcommand{\dd}[2]{\frac{d #1}{d #2}}
\newcommand{\nb}[0]{\nabla}
\newcommand{\lra}[0]{\leftrightarrow}
\newcommand{\ab}[1]{\left| {#1} \right|}
\newcommand{\an}[1]{\left\langle {#1}\right\rangle}
\newcommand{\ba}[1]{\left[ {#1} \right]}
\newcommand{\bc}[1]{\left\{ {#1} \right\}}
\newcommand{\ce}[1]{\left\lceil {#1}\right\rceil}
\newcommand{\fl}[1]{\left\lfloor {#1}\right\rfloor}
\newcommand{\pa}[1]{\left( {#1} \right)}
\newcommand{\ve}[1]{\left\Vert {#1}\right\Vert}
\newcommand{\set}[2]{\left\{{#1}:{#2}\right\}}
\newcommand{\ol}[1]{\overline{#1}}
\newcommand{\ub}[2]{\underbrace{#1}_{#2}}
\newcommand{\wt}[1]{\widetilde{#1}}
\newcommand{\wh}[1]{\widehat{#1}}
\newcommand{\Ent}{\operatorname{Ent}}
\renewcommand{\hom}{\textrm{hom}}
\newcommand{\KL}[0]{\operatorname{KL}}
\newcommand{\poly}{\operatorname{poly}}
\newcommand{\sgn}{\operatorname{sign}}
\newcommand{\Tr}[0]{\operatorname{Tr}}
\newcommand{\TV}[0]{\mathcal D_{\mathrm{TV}}}
\newcommand{\Var}[0]{\operatorname{Var}}
\newcommand{\vc}[0]{\operatorname{vec}}
\providecommand{\cal}[1]{\mathcal{#1}}
\renewcommand{\cal}[1]{\mathcal{#1}}
\newcommand{\pull}[9]{
#1\ar@/_/[ddr]_{#2} \ar@{.>}[rd]^{#3} \ar@/^/[rrd]^{#4} & &\\
& #5\ar[r]^{#6}\ar[d]^{#8} &#7\ar[d]^{#9} \\}
\newcommand{\cmp}[9]{
\xymatrix{
#1 \ar[r]^{#4}{#5} \ar@/_2pc/[rr]^{#8}_{#9} & #2 \ar[r]^{#6}_{#7} & #3
}
}
\newcommand{\ha}[1]{\ar@{^(->}[#1]}
\newcommand{\ls}[1]{\ar@{-}[#1]}
\newcommand{\sj}[1]{\ar@{->>}[#1]}
\newcommand{\aq}[1]{\ar@{=}[#1]}
\newcommand{\acir}[1]{\ar@{}[#1]|-{\textstyle{\circlearrowright}}}
\newcommand{\acil}[1]{\ar@{}[#1]|-{\textstyle{\circlearrowleft}}}
\newcommand{\ard}[1]{\ar@{.>}[#1]}
\newcommand{\mt}[1]{\ar@{|->}[#1]}
\newcommand{\inm}[1]{\ar@{}[#1]|-{\in}}
\newcommand{\inr}{\ar@{}[d]|-{\rotatebox[origin=c]{-90}{$\in$}}}
\newcommand{\inl}{\ar@{}[u]|-{\rotatebox[origin=c]{90}{$\in$}}}
\newcommand{\sumr}[2]{\sum_{\scriptsize \begin{array}{c}{#1}\\{#2}\end{array}}}
\newcommand{\sumo}[2]{\sum_{#1=1}^{#2}}
\newcommand{\prodo}[2]{\prod_{#1=1}^{#2}}
\newcommand{\beq}[1]{\begin{equation}\llabel{#1}}
\newcommand{\eeq}[0]{\end{equation}}
\newcommand{\bal}[0]{\begin{align*}}
\newcommand{\eal}[0]{\end{align*}}
\newcommand{\ban}[0]{\begin{align}}
\newcommand{\ean}[0]{\end{align}}
\newcommand{\blu}[1]{{\color{blue}#1}}
\newcommand{\fixme}[1]{{\color{red}#1}}
\newcommand{\llabel}[1]{\label{#1}\text{\fixme{\tiny#1}}}
\newcommand{\arxiv}[1]{\url{http://www.arxiv.org/abs/#1}}
\newcommand{\vocab}[1]{\textbf{#1}} 
\DeclareFontFamily{U}{wncy}{}
    \DeclareFontShape{U}{wncy}{m}{n}{<->wncyr10}{}
    \DeclareSymbolFont{mcy}{U}{wncy}{m}{n}
    \DeclareMathSymbol{\Sh}{\mathord}{mcy}{"58}
\newcommand{\Bn}[0]{\{\pm1\}^n}
\newcommand{\seti}[3]{{#1}_{#2\leftarrow#3}}
\newcommand{\D}[0]{\operatorname{D}}
\newcommand{\CP}[0]{C_{\textrm{P}}}
\newcommand{\chis}[0]{\cal D_{\chi^2}}
\newcommand{\Df}[0]{\cal D_f}
\renewcommand{\sE}[0]{\mathcal E}
\newcommand{\estep}[0]{\varepsilon_{\mathrm{step}}}
\renewcommand{\KL}[0]{\mathcal{D}_{\textup{KL}}}
\newcommand{\kabl}[1]{\ka_{\textup{BL},#1}}
\newcommand{\Dn}[2]{D_{#1\to #2}}
\newcommand{\Up}[2]{U_{#1\to #2}}
\newcommand{\Pudl}[1]{P_{#1}^{\triangle}}
\newcommand{\Pdul}[1]{P_{#1}^{\triangledown}}
\newcommand{\Pdun}[1]{P_{#1}^{\vee}}
\title{Parallelising Glauber dynamics}
\author{Holden Lee\footnote{Johns Hopkins University, \texttt{hlee283@jhu.edu}}}
\date{\today}
\renewcommand{\blu}[1]{#1}
\begin{document}
\maketitle

\begin{abstract}
    For distributions over discrete product spaces $\prod_{i=1}^n \Omega_i'$, Glauber dynamics is a Markov chain that at each step, resamples a random coordinate conditioned on the other coordinates. We show that $k$-Glauber dynamics, which resamples a random subset of $k$ coordinates, mixes $k$ times faster in $\chi^2$-divergence, and assuming approximate tensorization of entropy, mixes $k$ times faster in KL-divergence. 
    We apply this to obtain parallel algorithms in two settings: (1) For the Ising model $\mu_{J,h}(x)\propto \exp(\frac1 2\left\langle x,Jx \right\rangle + \langle h,x\rangle)$ with
    $\|J\|<1-c$ (the regime where fast mixing is known), we show that we can implement each step of $\widetilde \Theta(n/\|J\|_F)$-Glauber dynamics efficiently with a parallel algorithm, resulting in a parallel algorithm with running time $\widetilde O(\|J\|_F) = \widetilde O(\sqrt n)$. (2) For the mixed $p$-spin model at high enough temperature, we show that with high probability we can implement each step of $\wt \Theta(\sqrt n)$-Glauber dynamics efficiently and obtain running time $\wt O(\sqrt n)$.
\end{abstract}

\section{Introduction}

A key problem in computer science and statistics is to sample from a probability distribution given its probability mass function up to a constant of proportionality. 
The problem has been studied both over discrete spaces (such as $\Om^n$ for a finite set $\Om$) and continuous spaces (such as $\R^n$); the goal is to give efficient algorithms for general classes of distributions, and in particular, to obtain optimal scaling in the dimension $n$. In this work we focus on minimizing the parallel running time, assuming a polynomial number of processors. 
In $\R^n$, it is natural to change multiple coordinates at a time using gradient-based algorithms such as Langevin dynamics and Hamiltonian Monte Carlo; many results have given algorithms that require a sublinear number of steps for log-concave distributions in various settings.  

However, on discrete product spaces $\Om^n$, the canonical algorithm, Glauber dynamics, involves resampling coordinates one at a time, and hence requires at least $n$ steps in general. 
A natural attempt to speed up Glauber dynamics with parallel computation is to resample $k$ coordinates at a time. We establish that under general conditions, this simple idea does indeed speed up Glauber dynamics by a factor of approximately $k$.

To obtain a parallel algorithm, the task remains to give a fast parallel method of resampling $k$ coordinates. We show that this can be done in the case of the Ising model $\mu_{J,h}$ over $\Bn$ when the interaction matrix $J$ is bounded away from 1 in operator norm, $\ve{J}<1-c$, and in the case of the mixed $p$-spin model at high enough temperature, both of which are known to enjoy rapid mixing of standard Glauber dynamics.

The Ising model is a classical model from statistical physics which has probability mass function on $\Bn$ given by
\begin{align*}
\mu_{J,h}(x) &= \rc{Z_{J,h}}\exp\pa{\rc 2 \an{x,Jx} + \an{h,x}},
&\text{where }Z_{J,h} = \sum_{x\in \Bn} \exp\pa{\rc 2 \an{x,Jx} + \an{h,x}}.
\end{align*}
The regime $\ve{J}<1$ is exactly where (based on information of the operator norm alone) Glauber dynamics is known to have fast mixing~\cite{eldan2022spectral,anari2021entropic}. 
To sample $k$ coordinates, we use approximate rejection sampling with a product distribution and a further recursion for certain ``bad" sets. By taking $k = \wt \Te(n/\ve{J}_F)$, we obtain an algorithm with parallel running time $\wt O(\ve{J}_F) = \wt O(\sqrt n)$. 

The \vocab{mixed $p$-spin model} with coefficients $\be_2,\be_3,\ldots$ and external field $h\in \R^n$ is the random measure on $\{\pm 1\}^n$ given by\footnote{The factor $\sqrt{p!}$ arises as we index only over increasing sequences.}
\begin{multline}
\label{e:p-spin}
\mu_{\be, g,h}(x) \propto 
\exp(H_{\be, g, h}(x)), 
\quad \text{where }
H_{\be, g, h}(x) = \sum_{p=2}^{\iy}\fc{\be_p\sqrt{p!}}{n^{\fc{p-1}2}} \sum_{1\le i_1<\cdots < i_p\le n} g_{i_1,\cdots ,i_p}x_{i_1}\cdots x_{i_p} + \sumo in h_ix_i 
\end{multline}
and $g_{i_1,\ldots, i_p}\sim N(0,1)$. By taking $k=\Te(\sqrt n)$, we obtain an algorithm with parallel running time $\wt O(\sqrt n)$.

\subsection{Main results}
Let $\mu$ be a distribution on $\prodo in\Om_i'$. 
We define $k$-Glauber dynamics  as the Markov chain which given a sample $x\in \prodo in\Om_i'$, chooses a subset $S\subeq [n]$ uniformly at random among subsets of size $k$, and resamples the coordinates in $S$ conditional on coordinates in $S^c$, according to the distribution of $\mu$. 
Let $P_{\mu, k}$ denote its Markov kernel. 

We show that under general conditions, $k$-Glauber dynamics mixes $k$ times faster in both $\chi^2$ and KL-divergence. We say that a Markov kernel $P$ with stationary distribution $\mu$ satisfies $\rh$-contraction in $\chi^2$-divergence if $\chis(\nu P\|\mu)\le \rh \chis(\nu\|\mu)$ and similarly for $\KL$; this can be iterated to give a mixing time bound. See \Cref{s:fi} for background on functional inequalities (Poincar\'e inequality and approximate tensorization of entropy).
\begin{thm}[$k$-Glauber mixes $k$ times faster]\label{t:main-fast}
Let $\mu$ be a distribution on $\Om = \prodo in \Om_i'$, and let $1\le k\le n$. 
Below, let $C\ge 1$.
\begin{enumerate}
    \item If $\mu$ satisfies a Poincar\'e inequality with constant $Cn$, then $P_{\mu,k}$ satisfies a Poincar\'e inequality with constant $O\pf{Cn}{k}$, and satisfies $(1-\Om\pf{k}{Cn})$-contraction in $\chi^2$-divergence.
    \item If $\mu$ satisfies $C$-approximate tensorization of entropy (so that $P_{\mu}$ satisfies $(1-\Om\pf{1}{Cn})$-contraction in KL-divergence), then $P_{\mu,k}$ satisfies 
    $(1-\Om\pf{k}{Cn})$-contraction in KL-divergence.
\end{enumerate}
\end{thm}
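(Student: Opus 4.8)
The plan is to treat both parts uniformly by identifying the heat‑bath move on a block $S$ with the $L^2(\mu)$ conditional‑expectation operator $E_{S^c}f:=\E_\mu[f\mid x_{S^c}]$, so that $P_{\mu,k}=\binom nk^{-1}\sum_{|S|=k}E_{S^c}$ is an average of orthogonal projections — in particular it is reversible with spectrum in $[0,1]$. A density $\nu=f\mu$ is pushed forward by the $S$‑move to $(E_{S^c}f)\mu$. Hence, using that the $E_{S^c}$ are self‑adjoint projections, the $\chi^2$‑Dirichlet form is $\mathcal{E}_{P_{\mu,k}}(f)=\Var_\mu(f)-h(n-k)$ with $h(m):=\binom nm^{-1}\sum_{|T|=m}\Var_\mu(E_Tf)$, and using convexity of $g\mapsto\Ent_\mu(g)$ together with the entropy chain rule $\Ent_\mu(f)=\Ent_\mu(E_{S^c}f)+\E_\mu[\Ent_{\mu(\cdot\mid x_{S^c})}(f)]$, one gets $\KL(\nu P_{\mu,k}\|\mu)\le \Ent_\mu(f)-\bigl(H(n)-H(n-k)\bigr)$ with $H(m):=\binom nm^{-1}\sum_{|T|=m}\Ent_\mu(E_Tf)$. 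So in both cases it suffices to prove the telescoped quantity $h(n)-h(n-k)$ (resp.\ $H(n)-H(n-k)$) is at least $\Omega(k/(Cn))$ times $\Var_\mu(f)=h(n)$ (resp.\ $\Ent_\mu(f)=H(n)$): then the Poincaré constant of $P_{\mu,k}$ is $O(Cn/k)$, $\chi^2$‑contraction follows since $P_{\mu,k}$ is a positive semidefinite contraction (so $\|P_{\mu,k}g\|^2\le\langle g,P_{\mu,k}g\rangle$), and KL‑contraction is immediate.

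Both $h$ and $H$ are nondecreasing in $m$ (tower property / Jensen) with $h(0)=H(0)=0$. The crucial observation is that one step of the telescope is itself a single‑site object for a marginal measure: writing a uniform $(m-1)$‑subset as a uniform $m$‑subset $T$ minus a uniform element $i\in T$, and applying the law of total variance to the pair $E_{T\setminus i}f=E_{T\setminus i}(E_Tf)$, one obtains
\[
h(m)-h(m-1)=\E_{|T|=m}\Bigl[\tfrac1m\textstyle\sum_{i\in T}\E_\mu\bigl[\Var_\mu(E_Tf\mid x_{T\setminus i})\bigr]\Bigr]=\E_{|T|=m}\bigl[\mathcal{E}_{P_{\mu_T,1}}(E_Tf)\bigr],
\]
where $\mu_T$ is the marginal of $\mu$ on the coordinates in $T$ (note $E_Tf$ is a function of $x_T$, with law $\mu_T$), and likewise $h(m)=\E_{|T|=m}[\Var_{\mu_T}(E_Tf)]$. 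The entropy version is verbatim with $\Var\to\Ent$ and the entropy chain rule in place of the law of total variance.

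It then remains to check that the marginals inherit the hypothesis: if $\mu$ satisfies a Poincaré inequality with constant $Cn$ (resp.\ $C$‑approximate tensorization of entropy), then every $\mu_T$ satisfies a Poincaré inequality with constant $C|T|$ (resp.\ $C$‑approximate tensorization). This follows by applying the hypothesis of $\mu$ to a function $g=g(x_T)$: the summands indexed by $i\notin T$ vanish, and for $i\in T$ we have $\E_\mu[\Var_\mu(g\mid x_{[n]\setminus i})]\le \E_\mu[\Var_\mu(g\mid x_{T\setminus i})]=\mathcal{E}^{\mu_T}_{\{i\}}(g)$, because conditioning on fewer coordinates only increases the expected conditional variance (and the identical statement holds for entropy, via the chain rule and Jensen). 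Plugging this into the displayed identity gives $h(m)-h(m-1)\ge \frac1{Cm}h(m)\ge\frac1{Cn}h(m)$, i.e.\ $h(m-1)\le(1-\frac1{Cn})h(m)$, so $h(n-k)\le(1-\frac1{Cn})^k h(n)\le e^{-k/(Cn)}h(n)$, whence $\mathcal{E}_{P_{\mu,k}}(f)=h(n)-h(n-k)\ge(1-e^{-k/(Cn)})\Var_\mu(f)\ge\frac{k}{2Cn}\Var_\mu(f)$ using $k\le n\le Cn$; the entropy case is identical with $h\to H$, $\Var\to\Ent$, giving the KL‑contraction rate $\Omega(k/(Cn))$.

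I expect the main obstacle to be locating the two structural facts above: the identification of the block‑size telescope increment $h(m)-h(m-1)$ with a single‑site Dirichlet form of the smaller‑dimensional marginal $\mu_T$, and the (mildly counterintuitive) marginal‑inheritance lemma — that a marginal of a measure with a good functional inequality again has one, with the constant shrinking in proportion to the number of surviving coordinates. Once these are in hand the remainder is a routine $(1-\tfrac1{Cn})^k$ telescoping estimate; the only other thing requiring care is the bookkeeping (convexity of $\Ent_\mu$, positivity of $P_{\mu,k}$) needed to pass from the Dirichlet‑form / entropy‑decay bounds to genuine $\chi^2$‑ and KL‑contraction.
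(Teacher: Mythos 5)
Your argument is correct and takes a genuinely different, and arguably more elementary, route than the paper. The paper homogenizes $\mu$ into a distribution on $\binom{\Omega}{n}$, writes $P_{\mu,k}=\Dn{n}{n-k}\Up{n-k}{n}$, and proves (Lemma~\ref{l:mono}) that each lower-level down operator $\Dn{j}{j-1}$ contracts at least as well as $\Dn{n}{n-1}$ up to an additive slack; that proof goes through a tensorization with Bernoulli--Laplace ``erasure noise'' followed by a projection, mimicking the proximal-sampler analysis, and requires the spectral gap of the Bernoulli--Laplace chain (from~\cite{salez2021sharp}) and the entropic-independence result for uniform slices (from~\cite{anari2021entropic}). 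You instead work directly with the conditional-expectation decomposition $P_{\mu,k}=\binom{n}{k}^{-1}\sum_{|S|=k}E_{S^c}$ and with the explicit quantities $h(m)=\binom{n}{m}^{-1}\sum_{|T|=m}\Var_\mu(E_Tf)$ and $H(m)$; the identities $\cE_{P_{\mu,k}}(f,f)=h(n)-h(n-k)$ and $h(m)-h(m-1)=\E_{|T|=m}\bigl[\cE_{P_{\mu_T,1}}(E_Tf)\bigr]$ (and their entropy analogues via the chain rule) reduce the whole problem to the marginal-inheritance lemma: marginals $\mu_T$ of $\mu$ inherit the Poincar\'e constant (resp.\ the approximate tensorization constant) $C|T|$ (resp.\ $C$), which follows from applying the hypothesis to $g=g(x_T)$ and the elementary fact that conditioning on \emph{more} coordinates can only \emph{decrease} expected conditional variance and entropy. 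This bypasses the homogenization and the Bernoulli--Laplace input entirely and even yields a marginally cleaner constant (the per-step rate is $1/(Cm)$ rather than $1/((C+1)m)$). What the paper's approach buys in exchange is a cleaner conceptual parallel to the proximal sampler and to tensorization/projection arguments in continuous settings, which is the framing the authors are explicitly after; and it produces contraction statements for all intermediate down operators $\Dn{j}{j-1}$ of the homogenized measure as standalone objects, which is used in the subsequent corollary on $\Dn{2}{1}$.

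Two small points to tighten if you were to write this out in full: (i) in the KL case you get an inequality rather than an equality, $\KL(\nu P_{\mu,k}\|\mu)\le H(n-k)$, by convexity of $\Ent_\mu$; this is enough but should be stated as an inequality from the start. (ii) When you pass from the Poincar\'e bound $\cE_{P_{\mu,k}}(g,g)\ge\tfrac{k}{2Cn}\Var_\mu(g)$ to $\chi^2$-contraction, the inequality $\|P_{\mu,k}g\|^2\le\langle g,P_{\mu,k}g\rangle$ relies on $0\preceq P_{\mu,k}\preceq I$, which is exactly the fact that $P_{\mu,k}$ is an average of self-adjoint projections; worth emphasizing, since for a general reversible $P$ a Poincar\'e inequality alone does not imply one-step variance contraction.
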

Here, the $O(\cdot)$ and $\Om(\cdot)$ hide only universal constants.
The Poincar\'e inequality is equivalent to contraction in $\chi^2$-divergence, so part (1) gives a $\Om(k)$-factor speedup to mixing in $\chi^2$. 
The analogue of the Poincar\'e inequality for KL is a modified log-Sobolev inequality. Although we need the slightly stronger notion of approximate tensorization of entropy to prove a speedup to mixing in KL, we note that many works that establish a modified log-Sobolev inequality do so using tensorization of entropy~\cite{blanca2022mixing,anari2021entropic}. See Section~\ref{s:prelim} for relevant background on mixing for Markov chains.

We prove Theorem~\ref{t:main-fast} as \cref{c:k-glauber} of the more general \cref{t:du}. We view $k$-Glauber dynamics as randomly erasing $k$ coordinates one by one, and then adding them back one by one according to the right conditional distributions. This realizes $k$-Glauber dynamics as a composition of down and up operators $\Dn{n}{n-1}\cdots \Dn{n-k+1}{n-k}\Up{n-k}{n-k+1}\cdots \Up{n-1}n$. The assumptions give contraction of $\Dn{n}{n-1}$, and our general theorem shows that the contraction of each $\Dn{j}{j-1}$ is at least as good as $\Dn{n}{n-1}$ (except for an additive factor). To do this, we realize $\Dn{j}{j-1}$ as $\Dn{n}{n-1}$ tensorized with erasure ``noise" and projected, and bound how the factor of contraction changes under these operations. We make an analogy to bounding the Poincar\'e and log-Sobolev constants of a distribution $\mu$ on $\R^n$ convolved with Gaussian noise, and the proximal sampler based on iteratively adding and removing Gaussian noise (more specifically, sampling from the posterior distribution given a noisy Gaussian observation of the sample from $\mu$).

Algorithmically, the challenge with implementing $k$-Glauber dynamics is that naive enumeration for the transition kernel takes $2^k$ time, and hence we must find a way to use the structure of the distribution to implement each step more efficiently. 
We show that in the case of the Ising model, we can efficiently simulate $k$-Glauber dynamics for $k=\wt O\pa{{n}/{\ve{J^{\invdiameter}}_F}}$, to obtain a parallel algorithm running in time $\wt O\pa{\ve{J^{\invdiameter}}_F}$, where $J^{\invdiameter}$ denotes $J$ with diagonal entries set to 0\footnote{While changing the diagonal entries of $J$ does not change the Ising model, we need to allow $J$ to have nonzero diagonal entries in order to be positive semi-definite.}. Under the assumption that $\ve{J}<1$, this is always at most $\wt O(\sqrt n)$.
\begin{thm}\label{t:main}
Let $c>0$. 
With appropriate choice of constants depending only on $c$, if $J$ is symmetric positive semi-definite with $\ve{J}\le 1-c$, then $\mathsf{ParallelIsingSampler}$  (Algorithm~\ref{a:pising}) with appropriate constants
outputs a sample $\ep$-close in TV distance from the Ising model $\mu_{J,h}$ and, 
with probability at least $1-\ep$, 
runs in time
$O\pa{\max\bc{\ve{J^{\invdiameter}}_F,1}\poly\log\pf n\ep}$ on a parallel machine with $\poly(n)$ processors. 
\end{thm}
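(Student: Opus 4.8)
The plan is to combine the mixing speedup of Theorem~1.3 with an efficient parallel implementation of a single step of $k$-Glauber dynamics, for a suitable choice of $k$. I would structure the argument in three parts. First, since $\ve{J}\le 1-c$, the Ising model $\mu_{J,h}$ is known (by~\cite{eldan2022spectral,anari2021entropic}) to satisfy approximate tensorization of entropy with a constant $C = C(c)$ depending only on $c$; hence by Theorem~\ref{t:main-fast}(2), the kernel $P_{\mu,k}$ contracts KL by a factor $1-\Om(k/(Cn))$ per step, so that $O\pa{\tfrac{Cn}{k}\log\tfrac{\log(1/\mu_{\min})}{\ep}} = \wt O(n/k)$ steps suffice to reach a sample $\ep$-close in TV distance (after the standard conversion from KL to TV via Pinsker, and a crude bound on the KL-divergence of the initial distribution from $\mu$). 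Choosing $k = \wt\Theta(n/\ve{J}_F^2)$ makes this $\wt O(\ve{J}_F^2)$ steps. (If $\ve{J}_F^2 < 1$ one just takes $k = n$, i.e.\ one perfect sample via a single resampling of all coordinates, handled by the same subroutine.)

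Second, and this is the technical heart, I would show that a single step of $k$-Glauber dynamics — namely, sampling $x_S$ from the conditional law $\mu_{J,h}(\cdot \mid x_{S^c})$ for a uniformly random $|S|=k$ — can be implemented in parallel depth $\wt O(1)$ with $\poly(n)$ processors, up to small TV error. The conditional distribution on the $k$ free coordinates is again an Ising model, $\mu_{J_S, h'}$ with $J_S$ the principal submatrix of $J$ on $S$ and $h'$ an adjusted field; crucially $\ve{J_S}\le\ve{J}\le 1-c$, and for the given $k$ we have $\ve{J_S}_F^2 \le \tfrac{k}{n}\ve{J}_F^2 \cdot(\text{const}) = \wt O(1)$ in expectation over $S$ (with high probability, by a concentration argument on the sum of squared entries of a random principal submatrix). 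The idea is then to do approximate rejection sampling: propose $x_S$ from the product (mean-field) distribution $\nu = \bigotimes_{i\in S}\nu_i$ obtained by, say, solving the naive mean-field / first-order equations, and accept with probability proportional to $\mu_{J_S,h'}(x_S)/(M\nu(x_S))$. When $\ve{J_S}_F = \wt O(1)$, the quadratic form $\tfrac12\an{x_S,J_S x_S}$ fluctuates by only $\wt O(1)$ around its $\nu$-mean, so $M$ can be taken $\wt{O}(1)$ and the acceptance probability is $\Om(1/\poly\log)$; running $\poly\log$ independent trials in parallel succeeds except with probability $\ep/\poly$. Evaluating $\an{x_S,J_S x_S}$ and sampling from a product distribution are both trivially parallelizable in depth $\wt O(1)$. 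The ``bad sets'' recursion mentioned in the introduction handles the low-probability event that the realized $\ve{J_S}_F$ is atypically large, or that rejection sampling fails: one recurses on a still smaller block, and a union bound over the $\wt O(\ve{J}_F^2)$ steps controls the total failure probability by $\ep$.

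Third, I would assemble the running time. Each of the $\wt O(\ve{J}_F^2)$ outer steps costs parallel time $\wt O(1)$, so the total is $\wt O(\ve{J}_F^2)\cdot\wt O(1) = \wt O(\ve{J}_F^2)$; accounting for the recursion depth of the bad-set subroutine contributes only additional $\poly\log$ factors, and the stated bound reads $O\pa{\max\{\ve{J}_F,1\}\poly\log(n/\ep)}$ --- wait, note the theorem statement's time bound is $\wt O(\ve{J}_F)$, i.e.\ $\wt O(\sqrt n)$, not $\wt O(\ve{J}_F^2)$; reconciling this requires choosing $k = \wt\Theta(n/\ve{J}_F)$ rather than $n/\ve{J}_F^2$, at the cost of the per-step rejection sampler now only needing $\ve{J_S}_F = \wt O(\ve{J}_F / \sqrt{?})$... so I would instead balance: with $k = \wt\Theta(n/\ve{J}_F)$ the mixing takes $\wt O(\ve{J}_F)$ steps, and $\E\ve{J_S}_F^2 = \wt O(k/n \cdot \ve{J}_F^2) = \wt O(\ve{J}_F)$, which is still $\wt O(\sqrt n)$ but could be super-constant; one then needs the rejection sampler to tolerate $\ve{J_S}_F^2 = \wt O(\ve{J}_F)$, which it does at the cost of acceptance probability $2^{-\wt O(\ve{J}_F)}$ --- too small. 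The correct resolution, and the main obstacle, is precisely the one flagged in the introduction: one cannot afford a single-shot rejection against a product measure when $\ve{J_S}_F$ is large, so the implementation of one $k$-Glauber step must itself be recursive, peeling off sub-blocks on which the Frobenius norm \emph{is} $\wt O(1)$, sampling those by rejection, and conditioning. Getting this recursion to have only $\poly\log$ depth while keeping $k$ as large as $\wt\Theta(n/\ve{J}_F)$ — so that the outer loop is $\wt O(\ve{J}_F)$ — is where the real work lies; the mixing-time input (Theorem~\ref{t:main-fast}) and the concentration of $\ve{J_S}_F$ are comparatively routine.
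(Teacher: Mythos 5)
Your outline correctly names all of the ingredients the paper uses---approximate tensorization of entropy for $\mu_{J,h}$, the $k$-Glauber speedup of Theorem~\ref{t:main-fast}, approximate rejection sampling against a conditional product distribution via Hanson--Wright, and a recursive ``bad set'' fallback---but there is a concrete arithmetic error that sends the argument into a false impasse, and the step you then explicitly defer (``where the real work lies'') is exactly the technical core of the paper's proof, so the proposal as written does not close.

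The arithmetic error is in the scaling of the Frobenius norm of a random principal submatrix. You write $\E\|J_S\|_F^2 = \wt O(\tfrac{k}{n}\|J\|_F^2)$, which with $k=\wt\Theta(n/\|J\|_F)$ would give $\E\|J_S\|_F^2 = \wt O(\|J\|_F)$, leading you to conclude the rejection sampler has acceptance probability $2^{-\wt O(\|J\|_F)}$ and therefore fails. The correct scaling is $\E\|J_S\|_F^2 \approx (k/n)^2\|J\|_F^2$: for $i\ne j$, the probability that both $i,j\in S$ is $\approx (k/n)^2$, and the diagonal is zero. Plugging in $k=\wt\Theta(n/\|J\|_F)$ gives $\E\|J_S\|_F^2 = \wt O(1)$, so the product-distribution proposal typically works with acceptance probability $\Omega(1/\poly\log)$ after one level; there is no exponential blow-up. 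The ``bad set'' recursion is needed only because $\|J_S\|_F$ is a random variable with constant mean but occasional large deviations, not---as your reconciliation assumes---because the typical sub-block still has large Frobenius norm. This is precisely why the paper sets the constant $c_1$ in $s=\lceil c_1 m/(\ln(n/\epsilon)\|J_{R\times R}\|_F)\rceil$ small enough to make the expected number of non-leaf children per call strictly less than $1$.

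The second gap is the time analysis, which you model as (parallel depth $\wt O(1)$ per $k$-Glauber step) $\times$ ($\wt O(\|J\|_F)$ outer steps). That is not how the algorithm's cost composes: the $T$ steps of $s$-Glauber dynamics are sequential (they form a Markov chain), and each step that is not a leaf itself spawns $T'$ sequential recursive calls, so the running time is the total number of nodes in the recursion tree. The reason the bound is still $\wt O(\max\{\|J\|_F,1\})$ rather than larger is that, by the subcritical branching property above, the tree has expected size $O(T)$; the paper then uses Doob decomposition and Freedman's martingale Bernstein inequality (\cref{l:bern}) to show the tree has $\wt O(\max\{\|J\|_F,1\})$ nodes with probability $\ge 1-\epsilon/4$. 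This supermartingale argument, together with the TV accounting done by serially replacing each node's output with a perfect conditional sample and charging $\epsilon/n^A$ per replacement, is exactly what your proposal flags as ``the real work'' and leaves undone. Fixing the $(k/n)^2$ scaling would dissolve your perceived obstruction, but you would still need to supply the branching-process bound on the tree size and the coupling argument for TV.
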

We note that our algorithm is a high-accuracy sampler: the only dependence on $\ep$ is a poly-logarithmic dependence in the running time. Notably, the number of processors does not depend on $\ep$.
We rely on the result~\cite{anari2021entropic} that gives optimal ($O(n\ln n)$) mixing times for the Ising model for $\ve{J}<1$ based on the theory of entropic independence. 

The first attempt to implement $k$-Glauber dynamics is to approximate the conditional distribution of $k$ coordinates using a carefully chosen product distribution and use rejection sampling. Using concentration results (the Hanson-Wright inequality), if $\ve{J_{S\times S}^{\invdiameter}}_F$ is small for the randomly chosen set $S$, then this succeeds with high probability. The complication is that $\ve{J_{S\times S}^{\invdiameter}}_F$ can sometimes be large. If this is the case, then we recurse on $J_{S\times S}$. By controlling the expected size of $\ve{J_{S\times S}^{\invdiameter}}_F$, we show that the recursive calls form a subcritical branching process and with high probability, add at most a polylogarithmic overhead to the running time. 

\newcommand{\frobslack}[0]{\ln\pf 2{\ep_{\mathrm{step}}} + 1}
\newcommand{\pfrobslack}[0]{\pa{\frobslack}}
\newcommand{\qars}[0]{\mathsf{QuadraticApproxRejectionSampler}}

\begin{algorithm}[h!]
\caption{Parallel Ising Sampler ($\mathsf{ParallelIsingSampler}$)}
\begin{algorithmic}[1]
\INPUT Interaction matrix $J\in \R^{n\times n}$, subset $R$ of size $m$, external field $h\in \R^R$, error parameter $\ep\in (0,\rc2)$. 
\State Let $\ep_{\mathrm{step}}= \fc{\ep}{2n^{C_4}}$.
\If{$\ve{J_{R\times R}^{\invdiameter}}_F\le \fc{c_3}{\frobslack}$ ($J^{\invdiameter}$ denotes $J$ with diagonal entries set to 0)}{}
    \State $y\mapsfrom \mathsf{QuadraticApproxRejectionSampler}\pa{H(x) = \rc 2 \an{x,J_{R\times R}x} + \an{h, x}, \fc{c_3}{\frobslack}, \ep_{\textrm{step}}}$. (See Algorithm~\ref{a:qars}.)
\Else{}
    \State Let $s =\ce{\fc{c_1m}{\pfrobslack\ln \pf{n}{\ep} \ve{J_{R\times R}^{\invdiameter}}_F}}$.
    \State Let $T =\fl{ C_2 \ln \pf{n}{\ep} \fc{m}{s}}$. \label{st:T}
    \State Draw $y$ from the product distribution $\nu_0(x)\propto e^{\an{h,x}}$. 
    \For{$t$ from 1 to $T$}
        \State Choose $S\subeq R$ a random subset of size $s$.
        \State{$z\mapsfrom \mathsf{ParallelIsingSampler}(J, S,  J_{S\times R\bs S}y_{R\bs S}+ h_S, \ep)$}
    \State Set $y_S = z$.
\EndFor
\EndIf
\OUTPUT $y$ (Approximate sample from $\mu_{J_{R\times R},h}$).
\end{algorithmic}
\label{a:pising}
\end{algorithm}

\begin{thm}\label{t:main-p-spin}
Consider the mixed $p$-spin model~\eqref{e:p-spin}. There exists an absolute constant $\de>0$ such that if $\sum_{p\ge 2} \sqrt{p^3\ln p}\cdot \be_p<\de$ and $D(\be)=\sum_{p\ge 2} \sqrt{2^pp^3\ln p}\cdot  \be_p<\iy$, then with probability $1-\exp(-\Om(n))$ over $g$, given query access to $H_{\be, g,h}$, Algorithm~\ref{a:ppspin} outputs a sample $\ep$-close in TV distance from $\mu_{\be, g, h}$ and, with probability at least $1-\ep$, runs in time $O_{D(\be)} \pa{\sqrt n \poly\log\pf{n}{\ep}}$ on a parallel machine with $\poly\pf{n}{\ep}$ processors.
\end{thm}
Note that a recursion is not necessary in Algorithm~\ref{a:ppspin}. Intuitively, the mean-field nature of the $p$-spin model ensures that with high probability all marginal distributions of $O(\sqrt n)$ coordinates are well-approximated by a product distribution. Though we do not investigate this further, a recursive algorithm could potentially eliminate the $\poly(1/\ep)$ dependence on the number of processors as in Theorem~\ref{t:main}.

\begin{algorithm}[h!]
\caption{Parallel $p$-spin Sampler 
}
\begin{algorithmic}[1]
\INPUT Access to $H=H_{\be, g, h}$ (Hamiltonian) \eqref{e:p-spin}, 
error parameter $\ep\in (0,\rc2)$. 
\State Let $T=\fl{ \fc{C_2}{c_1} \ln \pf{n}{\ep} \sqrt n}$.
\State Let $\ep_{\mathrm{step}}= \fc{\ep}{2T}$.
\State Draw $y$ from the product distribution $\nu_0(x)\propto e^{\an{h,x}}$. 
\For{$t$ from 1 to $T$}
    \State Choose $S\subeq R$ a random subset of size $\ce{c_1\sqrt n}$.
    \State $z\mapsfrom \mathsf{QuadraticApproxRejectionSampler}(H_{x_{S^c}},c_3/2,\ep_{\textrm{step}})$.
    \label{st:call-qars-pspin}
    \State Set $y_S = z$.
\EndFor
\OUTPUT $y$ (Approximate sample from $\mu_{\be, g, h}$).
\end{algorithmic}
\label{a:ppspin}
\end{algorithm}

We view our result on the Ising model and the $p$-spin model as proofs of concept for parallelisation using $k$-Glauber dynamics, and hope it serves as a useful framework for constructing parallel algorithms for other families of discrete distributions. As discussed in the next section, using a different parallel algorithm, the work \cite{liu2022simple} obtains \Cref{t:main} but not \Cref{t:main-p-spin}.

\subsection{Related work}

\blu{We note that our Theorem~\ref{t:main-fast} can be viewed as a complement of ``local-to-global" results for mixing of the down-up walks~\cite{oppenheim2018local,alev2020improved,chen2021optimal}, and is not implied by those results. Those results aim to establish mixing of Glauber dynamics (or the down-up walk) from mixing of simpler chains, while we start by assuming mixing of Glauber dynamics. In particular, \cite{chen2021optimal} apply the reverse strategy: for the spin systems on graphs they consider, they show that mixing of $\theta n$-Glauber dynamics (for appropriate $\theta$) implies mixing of Glauber dynamics.

When contraction of Glauber dynamics is derived directly from either spectral or entropic independence using local-to-global arguments, then the same arguments can be used to establish mixing of the $k$-Glauber (e.g., using $k$-uniform block factorization of entropy~\cite{chen2021optimal}, the analogue of approximate tensorization of entropy). However, this does not apply for distributions for which mixing is established through other methods. 
The recent work~\cite{anari2023universality} shows that a Poincar\'e inequality implies spectral independence, but the bound obtained for $k$-Glauber through spectral independence is lossy (resulting in a power of $n$). 
Our work can be seen as giving a general conceptual reason why mixing for Glauber must imply mixing for $k$-Glauber.
}


\subsubsection{Continuous sampling}

For log-concave distributions on $\R^n$, a long line of works on the underdamped Langevin algorithm and Metropolis-adjusted Langevin algorithm have led to high-accuracy sampling using $\wt O(n^{1/2})$ steps \cite{altschuler2023faster}. The randomized midpoint method for underdamped Langevin dynamics allows sampling in the weaker Wasserstein metric in $\wt O(n^{1/3})$ steps \cite{shen2019randomized}, and can furthermore be fully parallelised to obtain $\ep$ error with $\poly\pf{n}{\ep}$ processors. 
These dependencies are assuming the condition number is $O(1)$. 

We note that the Ising model for $\ve{J}<1$ can be decomposed as a log-concave mixture of product distributions \cite{hubbard1959calculation,bauerschmidt2019very,koehler2022sampling}, so these algorithms give an alternative approach to parallel algorithms for the Ising model. However, this decomposition is highly specific to the Ising model. Moreover, the Wasserstein guarantee is incompatible with a TV guarantee, and the complexity of our approach scales with $\ve{J}_F$.




\subsubsection{Parallel algorithms for discrete sampling}

Recent work \cite{anari2020sampling,anari2022improved,anari2023parallel} has investigated the question of obtaining fast parallel algorithms for approximate sampling in settings where fast parallel algorithms for approximate \emph{counting} (or computing a partition function) exist. In particular, for distributions satisfying transport stability and where the \emph{log-Laplace} transform can be efficiently calculated (e.g., using the efficient algorithm for computing partition functions), \cite{anari2023parallel} gives a $\poly\log(n/\ep)$-time algorithm with $\poly(n/\ep)$ many processors (i.e., a $\mathsf{RNC}$ algorithm). 
This includes problems such as determinantal point processes and Eulerian tours. Notably they use the continuous algorithm (randomized midpoint method, discussed above) even though the problem is discrete.

In the setting of Ising models, however, we do not have a fast parallel algorithm for counting. Several works \cite{feng2021distributed,liu2022simple}
have studied the problem assuming the associated Dobrushin influence matrix has bounded norm. 
By using simultaneous updates, \cite{liu2022simple} obtains a factor-$\fc{n}{C}$ speedup for distributions whose Dobrushin influence matrix has norm bounded by $C$, in particular giving $\mathsf{RNC}$ algorithms when $C=O(1)$ and the mixing time is $O(n\ln n)$. 
The result of \cite{liu2022simple} can also give \Cref{t:main} with a different algorithm, but cannot be used to derive \Cref{t:main-p-spin}. See \Cref{s:comp} for details.

On the practical side, designers of Markov chain Monte Carlo algorithms in discrete spaces have taken inspiration from continuous algorithms, for example, by using gradient information to inform the proposal distribution and allow updating multiple coordinates at once \cite{grathwohl2021oops,zhang2022langevin,rhodes2022enhanced}. Theoretical guarantees for these algorithms remain to be understood.

\subsubsection{Diffusion models and the proximal sampler}

Stochastic localization \cite{eldan2013thin} is a measure-valued stochastic process that converges to a point mass, which is distributed according to a desired distribution $\mu$. As a technique, it gives a way of decomposing probability distributions that has been useful in proving functional inequalities and mixing time~\cite{chen2021almost,chen2022localization}, and more recently, in constructing new, time-\emph{inhomogeneous} algorithms for sampling~\cite{el2022sampling,montanari2023posterior}.

Diffusion models~\cite{sohl2015deep,song2019generative,song2020score} are a successful paradigm for generative modeling in machine learning, where the task is to learn and then generate samples from a distribution where only samples are given. Though the details may differ, they consist of a forward process which adds noise to the data; reversing the process can then generate a sample from random noise. It has been observed~\cite{montanari2023sampling} that a stochastic localization process can be viewed as the reverse process of a diffusion model.

Our analysis of $k$-Glauber dynamics is inspired by the analysis of the proximal sampler \cite{lee2021structured,chen2022improved,fan2023improved},
which does alternating Gibbs sampling by adding Gaussian noise to the current sample, and then ``de-noising" by sampling from the posterior distribution; this fits in the framework discussed above. In their analysis, \cite{chen2022improved} show that proximal sampler mixes at least as fast as Langevin in terms of $\chi^2$ and KL-divergence. \cite{fan2023improved} show a $\wt O(n^{1/2})$ dimension dependence using a carefully chosen Gaussian proposal distribution to implement the posterior sampling step.
We view the $k$-Glauber dynamics as a discrete analogue of the proximal Langevin algorithm, where the noise consists of erasing $k$ coordinates, and our proof follows this analogy. In our application, we also require a careful choice of product distribution for the proposal.

\section{Preliminaries}
\label{s:prelim}
While many of the notions are generalizable, we will restrict ourselves to finite state spaces, and identify all measures with their probability mass functions. For more background on Markov chains, see \cite{montenegro2006mathematical}.

\subsection{Markov kernels}

For finite sets $A$ and $B$, a Markov kernel $K$ from $A$ to $B$ 
is a function $A\times B\to \R_{\ge 0}$ or equivalently, 
a matrix $\R_{\ge 0}^{A\times B}$, where the rows sum to 1. If $\mu$ is a measure on $A$, then $\mu K$ is a measure on $B$; if $f$ is a function $B\to \R$, then $Kf$ is a function $A\to \R$; these correspond to matrix-vector multiplication. Composition of kernels $K_1$ from $A$ to $B$ and $K_2$ from $B$ to $C$ gives a kernel $K_1K_2$ from $A$ to $C$, which corresponds to matrix multiplication. \blu{For $f,g$ functions on $A$ and $\mu$ a measure on $A$, let $\an{f,g}_\mu = \sum_{x\in A}\mu(x)f(x)g(x)$.
For a kernel $K:A\times B\to \R_{\ge 0}$, given measures $\mu_1,\mu_2$ on $A$ and $B$ respectively, we think of $K$ as 
a linear map $L^2(\mu_1)\to L^2(\mu_2)$; then its adjoint $K^*:B\times A\to \R_{\ge 0}$ is a linear map $L^2(\mu_2)\to L^2(\mu_1)$ satisfying $\an{f,Kg}_{\mu_1} = \an{K^*f,g}_{\mu_2}$ for any $f\in L^2(\mu_1)$, $g\in L^2(\mu_2)$.}

\begin{df}
    \vocab{$k$-Glauber dynamics} with stationary distribution $\mu$ on $\Om$ is the Markov chain where at each step, if the current sample is $x$, we choose a subset $S$ uniformly at random in $\binom{\Om}{k}$ (subsets of size $k$), and resample the coordinates in $S$ according to $\mu(X_S|X_{S^c}=x_{S^c})$. Let $P_{\mu,k}$ denote the transition operator. For $k=1$, we simply call it Glauber dynamics, and let $P_\mu$ denote the Markov kernel.
\end{df}

\begin{df}
Let $0\le \ell \le k\le n$. 
Let $\mu$ be a distribution on $\binom{[n]}{k}$.
Define the \vocab{down operator} $\Dn{k}{\ell}$ and \vocab{up operator} $\Up{\ell}{k}$ as Markov kernels $\binom{[n]}{k}\times \binom{[n]}{\ell} \to \R_{\ge 0}$ and $\binom{[n]}{\ell}\times \binom{[n]}{k} \to \R_{\ge 0}$, respectively, with
\begin{align*}
    \Dn{k}{\ell}(A,B) &= \one_{B\subeq A} \rc{\binom k\ell} & 
    \Up{\ell}{k}(B,A) &= \one_{B\subeq A} \fc{\mu(A)}{\sum_{A'\supeq B}\mu(A')}. 
\end{align*}
Let $\mu_\ell=\mu \Dn{k}{\ell}$ for $0\le \ell \le k$, and define the \vocab{$k\lra \ell$ down-up walk} and \vocab{$\ell\lra k$ up-down walk} by
\begin{align*}
\Pdul{k\lra \ell} &= \Dn{k}{\ell}\Up{\ell}{k}&
\Pudl{\ell \lra k} &= \Up{\ell}{k}\Dn{k}{\ell}.
\end{align*}
\end{df}
Note that $\Dn{k}{\ell}$ does not depend on $\mu$ while $\Up{\ell}{k}$ does; we suppress the dependency in the notation. 
Note that $\Dn{k}{\ell}\Dn{\ell}m = \Dn km$ and $\Up{m}{\ell} \Up{\ell}k = \Up{m}{k}$. 
As operators, $\Dn k{\ell}:L^2(\mu_\ell)\to L^2(\mu_k)$ and $\Up{\ell}k:L^2(\mu_k) \to L^2(\mu_\ell)$ are adjoint.

\begin{df}\label{d:hom}
    Let $\mu$ be a measure on $\Om'=\Om_1'\times \cdots \times \Om_n'$. Define the \vocab{homogenization} of $\mu$ to be the measure $\mu^\hom$ over $\binom{\Om}n$, where $\Om = \bigcup_{i=1}^n \Om_i'\times \{i\}$ and $\si\in \Om'$ is identified with $\{(\si_1,1),\ldots, (\si_n, n)\}$. (For short, we will write $\Om = \bigsqcup_{i=1}^n \Om_i'$ in the following.)
    For any property $\cal P$ defined for measures $\binom{\Om}{n}$, we say that $\mu$ satisfies $\cal P$ if $\mu^\hom$ satisfies $\cal P$.
\end{df}
Under this identification, $k$-Glauber dynamics corresponds to the $n\lra n-k$ down-up walk, as the down step corresponds to erasing $k$ coordinates and the up step corresponds to restoring them with the correct conditional probabilities.

\subsection{Functional inequalities}
\label{s:fi}
\begin{df}
    Let $M=(\Om, P)$ be an ergodic, reversible Markov chain with stationary distribution $\mu$. Define the associated Dirichlet form as the inner product
    \[
\sE_P(f,g) = \an{f, (I-P)g}_\mu = \rc 2 \sum_{x,y\in \Om} \mu(x) P(x,y) (f(x)-f(y))(g(x)-g(y)) 
    \]
    When $\mu$ is a distribution on $\Om=\prodo in \Om_i'$, we write $\sE_\mu = \sE_{P_\mu}$; we will similarly make other such replacements without comment.
\end{df}

\begin{df}
    Keeping the assumptions above, we say that $P$ satisfies a \vocab{Poincar\'e inequality} with constant $C$ if for all $f:\Om\to \R$, 
    \[
\Var_\mu(f)\le 
C\sE_P(f,f).
    \]
    We say $\mu$ satisfies a Poincar\'e inequality with constant $C$ if the Glauber dynamics with stationary distribution $\mu$, $P_\mu$, satisfies a Poincar\'e inequality with constant $C$.
\end{df}
When $P$ is self-adjoint ($M$ is reversible), this is the same as saying that $\la_2(P)\le 1-\rc{C}$, where $\la_k(\cdot)$ denotes the $k$th largest eigenvalue.

\begin{df}
    Let $f:\R_{\ge 0}\to \R_{\ge0}$ be a strictly convex function with $f(1)=0$. 
    For measures $\nu\ll \mu$ on $\Om$, define the \vocab{$f$-divergence} by
    \[
\Df(\nu\|\mu) = \E_{x\sim \mu} f\pf{\nu(x)}{\mu(x)}.
    \]
    In particular, define the $\chi^2$ and KL-divergences by $\chis = D_{(x-1)^2}$ and $\KL = D_{x\ln x}$. 
\end{df}

\begin{df}
    We say that Markov kernel $P: \Om_1\times \Om_2\to \R$ satisfies \vocab{$\rh$-contraction in $f$-divergence} with respect to $\mu_1$ if for all $\nu_1\ll\mu_1$, 
    \[
\Df(\nu_1 P\|\mu_1P) \le \rh \Df(\nu_1 \|\mu_1).
    \]
\end{df}
Contraction in $\chi^2$ and KL-divergence is also referred to as variance or entropy contraction, respectively.

\begin{pr}\label{p:chis-pi}
Let $P:\Om_1\times \Om_2\to \R_{\ge 0}$ be a Markov kernel. 
The following are equivalent, for $C\le 1$:
\begin{enumerate}
    \item $P$ satisfies $(1-C)^2$-contraction in $\chi^2$-divergence with respect to $\mu$.
    \item For all $f:\Om_1\to \R$, 
    \[
\Var_{\mu P} (Pf) \le (1-C)^2 \Var_\mu(f).
    \]
    \item (For $\Om_1=\Om_2$, $P$ reversible) $P$ satisfies a Poincar\'e inequality with constant $\rc{C}$.
    \item (For $P$ of the form $P=DD^*$, e.g., $\Pdul{k\lra k-1} = \Dn{k}{k-1}\Up{k-1}k$) $D$ satisfies $(1-C)$-contraction in $\chi^2$-divergence.
    \item (For $P=DD^*)$ $D^*$ satisfies $(1-C)$-contraction in $\chi^2$-divergence. 
\end{enumerate}
\blu{Here, the adjoint is with respect to the measures $\mu$ and $\mu D$.}
\end{pr}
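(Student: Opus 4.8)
The plan is to reduce every equivalence to a single dictionary between $\chi^2$-divergence and variance under the adjoint (time-reversal) kernel, and then to elementary $L^2$ linear algebra. For $\nu_1 \ll \mu_1$ with density $g = \mathrm{d}\nu_1/\mathrm{d}\mu_1$ (so $g \ge 0$, $\E_{\mu_1} g = 1$), let $P^*$ be the adjoint of $P : L^2(\mu_1 P) \to L^2(\mu_1)$, i.e. the Markov kernel $P^*(y,x) = \mu_1(x)P(x,y)/(\mu_1 P)(y)$, for which $P^*\mathbf{1} = \mathbf{1}$. Testing against functions $h$ on $\Omega_2$ gives $\int h\,\mathrm{d}(\nu_1 P) = \langle Ph, g\rangle_{\mu_1} = \langle h, P^* g\rangle_{\mu_1 P}$, so $\mathrm{d}(\nu_1 P)/\mathrm{d}(\mu_1 P) = P^* g$; since $\chis(\alpha\|\beta) = \Var_\beta(\mathrm{d}\alpha/\mathrm{d}\beta)$, this yields $\chis(\nu_1 P\|\mu_1 P) = \Var_{\mu_1 P}(P^* g)$ and $\chis(\nu_1\|\mu_1) = \Var_{\mu_1}(g)$. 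Hence (1) is precisely the assertion that $\Var_{\mu_1 P}(P^* g) \le (1-C)^2\Var_{\mu_1}(g)$ for every probability density $g$, and (1) $\Leftrightarrow$ (2) once one observes that an arbitrary $f : \Omega_1 \to \R$ can be written, after an affine shift, as a density: on the finite space $\Omega_1$ pick $t>0$ with $g := 1 + t(f - \E_{\mu_1}f) \ge 0$, note $\Var_{\mu_1}(g) = t^2\Var_{\mu_1}(f)$ and $P^* g = \mathbf{1} + t(P^* f - \E_{\mu_1}f)$, and divide the inequality of (1) for this $g$ by $t^2$.

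Next I would specialize to $\Omega_1 = \Omega_2$ with $P$ reversible, so $\mu P = \mu$, $P^* = P$, and (2) becomes $\|P\bar f\|_\mu^2 \le (1-C)^2\|\bar f\|_\mu^2$ for all $f$, with $\bar f = f - \E_\mu f$; by the spectral theorem this is equivalent to every eigenvalue of $P$ on $\mathbf{1}^\perp$ having modulus at most $1-C$, while a Poincar\'e inequality with constant $1/C$ says only $\lambda_2(P) \le 1-C$. These coincide once $P$ is also positive semidefinite on $\mathbf{1}^\perp$ --- which is automatic for Glauber dynamics (an average of conditional expectations) and, more to the point, whenever $P = DD^*$ --- so under that proviso (2) $\Leftrightarrow$ (3), and unconditionally (1),(2) $\Rightarrow$ (3). (Every kernel to which the proposition is applied in this paper is of the form $DD^*$, so this costs nothing.)

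For (4) and (5), take $P = DD^*$ with $D$ a Markov kernel from $\Omega$ to some $\Theta$, $\mu_\Theta := \mu D$, and $D^*$ the $\mu$-reversal of $D$, so $(D^*)^* = D$ and $\mu_\Theta D^* = \mu P = \mu$. Applying the $(1)\Leftrightarrow(2)$ equivalence to $D$ over base measure $\mu$ turns ``$D$ is a $(1-C)$-contraction in $\chi^2$'' into $\|D^*\bar f\|_{\mu_\Theta}^2 \le (1-C)\|\bar f\|_\mu^2$ for all $f$, i.e. into the restricted operator-norm bound $\|D^*\|_{\mathrm{op}}^2 \le 1-C$; applying it to $D^*$ over base $\mu_\Theta$ turns ``$D^*$ is a $(1-C)$-contraction'' into $\|D\|_{\mathrm{op}}^2 \le 1-C$. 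Since $D$ and $D^*$ restrict to mutually adjoint operators between the mean-zero subspaces of $L^2(\mu_\Theta)$ and $L^2(\mu)$ (using $D\mathbf{1}=\mathbf{1}$, $D^*\mathbf{1}=\mathbf{1}$), their operator norms agree, giving (4) $\Leftrightarrow$ (5); and $\|P\|_{\mathrm{op}} = \|DD^*\|_{\mathrm{op}} = \|D^*\|_{\mathrm{op}}^2$ on $\mathbf{1}^\perp$, so (4)/(5) $\Leftrightarrow \|P\|_{\mathrm{op}}^2 \le (1-C)^2 \Leftrightarrow$ (2) (by the self-adjoint case) $\Leftrightarrow$ (1).

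I do not expect a real obstacle of ideas; the difficulty is purely bookkeeping. One must track carefully which kernel is the adjoint of which and with respect to which \emph{pair} of measures --- source $\mu_1$ and target $\mu_1 P$ --- and must invoke $P^*\mathbf{1} = \mathbf{1}$ (equivalently, that rows of $P$ sum to $1$) at exactly the right moments, both in the density identity and in the affine-perturbation step; getting a square or an exponent wrong between the ``$\rho$-contraction'', ``$(1-C)^2$'', and ``$(1-C)$'' forms is the most likely slip. The only non-formal point is the positive-semidefiniteness remark needed to close $(3)\Rightarrow$ the others, and it is harmless for the reason noted above.
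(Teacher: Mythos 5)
Your proof is correct and follows essentially the same route as the paper's: translate $\chi^2$-contraction to variance contraction under $P^*$ via the density identity, pass between densities and arbitrary $f$ by an affine perturbation, reduce the reversible case to a spectral bound, and handle $(4)$/$(5)$ via singular values of $D$. One place where you improve on the paper's sketch is the observation that $(2)\Leftrightarrow(3)$ needs $P$ to be positive semidefinite on $\mathbf{1}^\perp$ (the paper's line ``$\langle Pf,Pf\rangle_\mu\le(1-C)^2\langle f,f\rangle$ for $f\in\mathbf{1}^\perp$ is equivalent to $\lambda_2(P)^2\le(1-C)^2$'' silently uses this, since the operator-norm bound controls the most negative eigenvalue too); as you say, this is automatic when $P=DD^*$, which covers every application in the paper.
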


\begin{proof}[Proof sketch.]
We note $\chi^2(\nu\|\mu) = \Var_\mu \pf{\nu}{\mu}$ and $\chi^2(\nu P\|\mu) = \Var_\mu\pf{\nu P}{\mu} = \Var_\mu\pa{\pf{\nu}{\mu}P^*}$ so taking $f=\fc{\nu}{\mu}$, (2) implies (1). 
For any $f$ with $\E_\mu f=0$, we note that for small enough $\ep>0$, $\chi^2(\mu(1+\ep f)P\|\mu) = \ep^2 \Var_{\mu P}(P^*f)$, so (1) implies (2). 
If $\Om_1=\Om_2$ and $P$ is reversible, then $P=P^*$ and (2) is equivalent to $\an{Pf,Pf}_\mu \le (1-C)^2 \an{f,f}$ for all $f\in \one^\perp$, which is equivalent to $\la_2(P)^2 \le (1-C)^2$ and (3).
For (4) and (5), if $P=DD^*$, considering $D$ as an operator $L^2(\mu D)\to L^2(\mu)$, 
note that the squares of the singular values of $D$ (and $D^*$) are the eigenvales of $P$ (ignoring zero eigenvalues), so $\la_2(P) = \si_2(D)^2 = \si_2(D^*)^2$, where $\si_k(\cdot)$ denotes the $k$th largest singular value.
\end{proof}





\begin{df}
A measure $\mu$ on $\binom{[n]}{k}$ satisfies \vocab{$C$-approximate tensorization of entropy} if $\Dn{k}{k-1}$ satisfies $\pa{1-\rc{Ck}}$-contraction in KL-divergence, i.e., for any $\nu\ll \mu$, 
\[
\KL(\nu \Dn{k}{k-1}\| \mu \Dn{k}{k-1})\le \pa{1-\rc{Ck}}\KL(\nu\|\mu).
\]
\end{df}
We have the following alternate characterization for a measure defined on a product space. Define the entropy of a function $f$ on a probability space by $\Ent_\mu[f] = \E_\mu[f\ln f] - \E_\mu[f] \ln \E_\mu[f]$.
\begin{pr}
[{\cite[Lemma 2.7]{chen2021optimal}}]
\label{p:atoe-equiv}
    Let $\mu$ be a measure on $\Om=\Om_1'\times \cdots \times \Om_n'$. Then $\mu$ satisfies $C$-approximate tensorization of entropy iff for all $f:\Om\to \R_{\ge 0}$, 
    \[
\Ent_\mu[f]\le C \sumo kn \E_\mu \ba{\Ent_{\mu(X_k=\cdot|X_{\sim k} = x_{\sim k})}[f]},
    \]
    where $\sim k$ denotes the coordinates besides $k$.
\end{pr}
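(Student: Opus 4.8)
The plan is to reduce the stated equivalence to the coordinate-wise chain rule for entropy, after unpacking what $\Dn{n}{n-1}$-contraction means for the homogenization. By \cref{d:hom}, $\mu$ satisfies $C$-approximate tensorization of entropy iff $\mu^\hom$ (a measure on $\binom{\Om}{n}$, where $\Om=\bigsqcup_{i=1}^{n}\Om_i'$) does, i.e.\ iff $\Dn{n}{n-1}$ satisfies $(1-\tfrac1{Cn})$-contraction in KL with respect to $\mu^\hom$. Since $\binom{n}{n-1}=n$, the operator $\Dn{n}{n-1}$ deletes a uniformly random element of an $n$-set; under the identification of a complete assignment $\sigma\in\Om'$ with its $n$-subset, an $(n-1)$-subset $B$ records \emph{both} the deleted coordinate $k$ \emph{and} the partial assignment $\tau=\sigma_{\sim k}$ on $[n]\setminus\{k\}$. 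Throughout, write $g=\nu/\mu$ for a probability measure $\nu\ll\mu$, so that $\E_\mu g=1$ and $\KL(\nu\|\mu^\hom)=\Ent_\mu[g]$, and let $\mu_{n-1}=\mu^\hom\Dn{n}{n-1}$.

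First I would record the exact images. A direct computation from the formula $\Dn{n}{n-1}(A,B)=\one_{B\subseteq A}/n$ gives, for $B=(k,\tau)$,
\[
\mu_{n-1}(B)=\tfrac1n\,\mu(X_{\sim k}=\tau),\qquad (\nu\Dn{n}{n-1})(B)=\tfrac1n\,\nu(X_{\sim k}=\tau),
\]
so the density of $\nu\Dn{n}{n-1}$ against $\mu_{n-1}$ at $(k,\tau)$ equals $r_k(\tau):=\nu(X_{\sim k}=\tau)/\mu(X_{\sim k}=\tau)=\E_{X_k\sim\mu(\cdot\mid X_{\sim k}=\tau)}[g]$. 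Denoting by $\mu_{\sim k},\nu_{\sim k}$ the marginals on the coordinates other than $k$, and noting that a $\mu_{n-1}$-expectation of a quantity depending only on $(k,\tau)$ equals $\tfrac1n\sum_k\E_{\mu_{\sim k}}[\,\cdot\,]$, we get
\[
\KL\big(\nu\Dn{n}{n-1}\,\|\,\mu_{n-1}\big)=\tfrac1n\sum_{k=1}^{n}\E_{\mu_{\sim k}}[r_k\ln r_k]=\tfrac1n\sum_{k=1}^{n}\KL\big(\nu_{\sim k}\,\|\,\mu_{\sim k}\big).
\]

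Next I would apply the entropy chain rule coordinate by coordinate. Writing $\Ent_\mu[g]=\E_\mu[g\ln g]$, conditioning on $X_{\sim k}$, and using $\E_{X_k\sim\mu(\cdot\mid X_{\sim k}=\tau)}[g\ln g]=\Ent_{\mu(X_k=\cdot\mid X_{\sim k}=\tau)}[g]+r_k(\tau)\ln r_k(\tau)$, one obtains for each $k$
\[
\Ent_\mu[g]=\E_{\mu_{\sim k}}\big[\Ent_{\mu(X_k=\cdot\mid X_{\sim k})}[g]\big]+\KL\big(\nu_{\sim k}\,\|\,\mu_{\sim k}\big).
\]
Averaging over $k$ and substituting the previous display yields the identity
\[
\KL(\nu\|\mu^\hom)=\KL\big(\nu\Dn{n}{n-1}\,\|\,\mu_{n-1}\big)+\tfrac1n\sum_{k=1}^{n}\E_{\mu}\big[\Ent_{\mu(X_k=\cdot\mid X_{\sim k})}[g]\big]
\]
(using that the summand depends only on $X_{\sim k}$, so $\E_{\mu_{\sim k}}=\E_\mu$). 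Now $\Dn{n}{n-1}$ satisfies $(1-\tfrac1{Cn})$-contraction in KL iff $\KL(\nu\|\mu^\hom)-\KL(\nu\Dn{n}{n-1}\|\mu_{n-1})\ge\tfrac1{Cn}\KL(\nu\|\mu^\hom)$, which after substituting the identity and clearing $\tfrac1n$ reads exactly $\Ent_\mu[g]\le C\sum_{k}\E_\mu[\Ent_{\mu(X_k=\cdot\mid X_{\sim k})}[g]]$. Finally, as $\nu$ ranges over probability measures $\ll\mu$, $g=\nu/\mu$ ranges over all nonnegative functions on $\operatorname{supp}\mu$ up to positive scaling; since both sides are $1$-homogeneous in the test function, this is equivalent to the stated inequality for all $f:\Om\to\R_{\ge0}$, the case $f\equiv0$ on $\operatorname{supp}\mu$ being trivial.

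All computations here are routine; the only point requiring care is the homogenization bookkeeping — recognizing that an $(n-1)$-subset of $\Om$ encodes the pair (dropped coordinate $k$, partial assignment $\tau$), so that $\nu\Dn{n}{n-1}$ is precisely the uniform-in-$k$ mixture of the one-coordinate-deleted marginals $\nu_{\sim k}$, and that $\Dn{n}{n-1}$-contraction therefore breaks into the per-coordinate chain-rule terms. Once that identification is pinned down, the chain rule for entropy supplies the rest, and there is no substantive obstacle.
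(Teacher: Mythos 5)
The paper states this proposition as a citation to \cite{anari2021entropic2} and gives no proof of its own, so there is nothing internal to compare your argument against. That said, your proof is correct: the identifications $\mu_{n-1}(k,\tau)=\tfrac1n\mu(X_{\sim k}=\tau)$ and $(\nu\Dn{n}{n-1})(k,\tau)=\tfrac1n\nu(X_{\sim k}=\tau)$ are exactly right, the per-coordinate chain rule
\[
\Ent_\mu[g]=\E_{\mu_{\sim k}}\bigl[\Ent_{\mu(X_k=\cdot\mid X_{\sim k})}[g]\bigr]+\KL(\nu_{\sim k}\|\mu_{\sim k})
\]
is correctly applied, and averaging over $k$ gives the exact identity
\[
\KL(\nu\|\mu^\hom)-\KL\bigl(\nu\Dn{n}{n-1}\|\mu_{n-1}\bigr)=\tfrac1n\sum_{k=1}^n\E_\mu\bigl[\Ent_{\mu(X_k=\cdot\mid X_{\sim k})}[g]\bigr],
\]
from which the equivalence with $(1-\tfrac1{Cn})$-contraction is immediate. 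The passage from densities $g$ with $\E_\mu g=1$ to arbitrary $f\ge0$ via $1$-homogeneity of $\Ent$ (together with the trivial case $\E_\mu f=0$) is also handled correctly. This is the standard argument and, as far as I can tell, the one the cited reference uses; the one piece worth double-checking in your write-up is the bookkeeping step that a $\mu^\hom$-admissible $(n-1)$-subset $B$ uniquely decodes as a pair $(k,\tau)$, which you have stated and used correctly.
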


\begin{rem}\label{r:chi-vs-kl}
\cref{p:chis-pi} shows that for contraction in $\chi^2$-divergence, nothing is lost if we consider $\Pdul{k\lra k-1}$ or $\Dn{k}{k-1}$, $\Up{k-1}k$ separately. 
However, the distinction is important for KL, as contraction of $\Pdul{k\lra k-1}$ may not imply contraction of $\Dn{k}{k-1}$ or $\Up{k-1}k$  separately; hence the definition of approximate tensorization of entropy. Approximate tensorization of entropy is stronger than the modified log-Sobolev inequality (which implies mixing for $\Pdul{k\lra k-1}$), but weaker than the log-Sobolev inequality.
\end{rem}

\subsection{Additional notation}

For $f:\prod_{i=1}^n \Om_i'\to \R$, and $x\in \prod_{i\in S^c} \Om_i'$, define the restriction $f_x:\prod_{i\in S} \Om_i'$ by $f_x(y) = f(x,y)$ with $(x,y)$ treated as an element of $\prod_{i=1}^n \Om_i'$. 

Let $\seti xib$ denote $x$ with $x_i$ set to $b$. 
For $f:\{\pm 1\}^n$, 
let
$\D_i f (x):= \rc 2[f(\seti xi1) - f(\seti xi{-1})]$ and define $\nb f:\Bn \to \R^n$ by
\[
\nb f(x) = (\D_1f(x),\ldots, \D_nf(x))
\]
and $\nb^2 f:\Bn \to \R^{n\times n}$ by $(\nb^2 f(x))_{i,j} = \D_i\D_j f(x)$ (note $(\nb^2 f(x))_{i,i} = 0$).

For $x\in \Bn$ and $S\subeq [n]$, let $x^S$ denote $\prod_{i\in S} x_i$. 
For a function $f:\Bn\to \R$, we denote the degree $d$ part of $f$ by $f^{(d)}$, and define $f^{\ge d} = \sum_{p\ge d} f^{(p)}$, etc., so that we have the decomposition
\[
f(x) = \sum_{p=0}^n
f^{(p)}(x) = \sum_{p=0}^n \sum_{|I|=p} a_I x^I
\]
for some coefficients $a_I$. 
We take $f_x^{(d)}$ to mean that we take the restriction first and then the degree-$d$ part.

For a scalar-valued function $f$ and $x\in \R^n$, we let $f(x)$ denote coordinate-wise evaluation.

\section{$k$-Glauber mixes $k$ times as fast}

To show that $k$-Glauber mixes $k$ times more quickly than Glauber dynamics, write $P_\mu = \Dn n{n-1}\Up{n-1}n$ and $P_{\mu,k} = \Dn{n}{n-1}\cdots \Dn{n-k+1}{n-k}\Up{n-k}{n-k+1}\cdots \Up{n-1}n$; the task is then to show that $\Dn{j}{j-1}$, $j\le n$ are roughly at least as contractive as $\Dn n{n-1}$. We note that this is like the reverse of the usual ``local-to-global" argument for high-dimensional expanders~\cite{alev2020improved} (and will be easier!). We also note this approach relates to inductive arguments in prior work (e.g., \cite[Lemma 11]{cryan2019modified}).

Viewing the problem in this way, we note the similarity to the proximal sampler \cite{chen2022improved}, each step of which involves adding and removing Gaussian noise; they bound the contraction in $\chi^2$ and KL-divergence of this process based on the Poincar\'e and log-Sobolev constants of the original distribution.

As inspiration, we first recall the following fact, which bounds the Poincar\'e or log-Sobolev constant of a convolution of two measures on $\R^n$. (The convolution $\mu_1*\mu_2$ is defined as the distribution of $X+Y$ where $X\sim \mu_1$ and $Y\sim \mu_2$ are independent.) As we will not cover the theory of functional inequalities over $\R^n$, this is meant only as a suggestion of how we might proceed. 
For details and generalizations, see \cite{chafai2004entropies}.
\begin{lem*} 
    Suppose that $\mu_1,\mu_2$ are distributions on $\R^n$ with Poincar\'e constants $C_1, C_2$, respectively. Then $\mu_1*\mu_2$ has Poincar\'e constant bounded by $C_1+C_2$. The same holds true for the log-Sobolev constant.
    In particular, this holds true for $\mu_2$ being a Gaussian of variance $C_1$.
\end{lem*}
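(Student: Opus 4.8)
The plan is to lift the problem to the product space $\R^n\times\R^n$ carrying the product measure $\mu_1\otimes\mu_2$ and invoke tensorization of variance (and of entropy). If $X\sim\mu_1$ and $Y\sim\mu_2$ are independent then $X+Y\sim\mu_1*\mu_2$, so for $f\colon\R^n\to\R$ the function $g(x,y):=f(x+y)$ satisfies $\Var_{\mu_1*\mu_2}(f)=\Var_{\mu_1\otimes\mu_2}(g)$ and, for nonnegative $f$, the analogous identity with $\Ent$. The elementary point that makes the bound $C_1+C_2$ come out is the pointwise identity $\nabla_x g(x,y)=\nabla_y g(x,y)=(\nabla f)(x+y)$, so that $\E_{\mu_1\otimes\mu_2}|\nabla_x g|^2=\E_{\mu_1\otimes\mu_2}|\nabla_y g|^2=\E_{\mu_1*\mu_2}|\nabla f|^2$.

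For the Poincar\'e bound I would first write the exact law of total variance, conditioning on $Y$: $\Var_{\mu_1\otimes\mu_2}(g)=\E_{\mu_2}\Var_{\mu_1}(g(\cdot,y))+\Var_{\mu_2}\big(\E_{\mu_1}g(\cdot,y)\big)$, and then bound the second term by $\E_{\mu_1}\Var_{\mu_2}(g(x,\cdot))$ using convexity of the variance functional (Jensen). Applying the Poincar\'e inequality for $\mu_1$ to $x\mapsto g(x,y)$ and the one for $\mu_2$ to $y\mapsto g(x,y)$, integrating, and using the gradient identity yields $\Var_{\mu_1*\mu_2}(f)\le (C_1+C_2)\,\E_{\mu_1*\mu_2}|\nabla f|^2$, i.e.\ the claimed Poincar\'e constant $C_1+C_2$.

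For the log-Sobolev statement I would run the identical reduction, replacing the law of total variance by the subadditivity of entropy for product measures, $\Ent_{\mu_1\otimes\mu_2}(G)\le\E_{\mu_2}\Ent_{\mu_1}(G(\cdot,y))+\E_{\mu_1}\Ent_{\mu_2}(G(x,\cdot))$, applied to $G(x,y)=g(x,y)^2=f(x+y)^2$; then the log-Sobolev inequalities for $\mu_1,\mu_2$ (normalized as $\Ent_\mu(h^2)\le 2C\,\E_\mu|\nabla h|^2$) together with the gradient identity give $\Ent_{\mu_1*\mu_2}(f^2)\le 2(C_1+C_2)\,\E_{\mu_1*\mu_2}|\nabla f|^2$. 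The ``in particular'' clause is then immediate, since a centered Gaussian with covariance $\sigma^2 I$ satisfies both a Poincar\'e and a log-Sobolev inequality with constant $\sigma^2$, so taking $\mu_2$ to be such a Gaussian is the special case $C_2=\sigma^2$.

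I do not expect a genuine obstacle here: the lemma is classical and is included only as motivation. The two places needing a little care are fixing a single normalization convention for the log-Sobolev inequality (so that ``constant'' means the same thing for $\mu_1$, $\mu_2$, the Gaussian, and the convolution) and quoting the subadditivity of entropy for product measures; neither is hard. The point of stating it is that this is exactly the template the paper then imitates for $k$-Glauber dynamics: ``convolve with Gaussian noise of variance $\sigma^2$'' becomes ``pass to the down-up walk that erases and restores $k$ coordinates,'' and the additivity $C_1+C_2$ of the constants becomes the statement that the contraction factor of $\Dn{j}{j-1}$ worsens only by an additive term relative to that of $\Dn{n}{n-1}$.
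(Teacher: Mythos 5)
Your argument is correct, but it is a genuinely different route from the paper's. You lift $f$ to $g(x,y)=f(x+y)$ on the product space, split $\Var_{\mu_1\otimes\mu_2}(g)$ (resp.\ $\Ent_{\mu_1\otimes\mu_2}(g^2)$) by the law of total variance plus Jensen (resp.\ subadditivity of entropy for product measures), and then close the argument with the pointwise identity $\nabla_x g=\nabla_y g=(\nabla f)(x+y)$; the constant $C_1+C_2$ falls out because each marginal inequality contributes the \emph{same} Dirichlet energy $\E_{\mu_1*\mu_2}|\nabla f|^2$. The paper instead rescales each factor by $m_i$ (which multiplies the functional-inequality constant by $m_i^2$), tensorizes to get the \emph{max} $\max\{C_1m_1^2,C_2m_2^2\}$, projects along the unit vector $(1/m_1,1/m_2)$ (projections never worsen the constant), identifies the pushforward with $\mu_1*\mu_2$, and finally optimizes $m_1,m_2$ to turn the max into $C_1+C_2$. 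Both proofs are short and correct; yours is the more standard, gradient-flavoured one, but the paper's rescale–tensorize–project decomposition is chosen deliberately because it is exactly what gets transplanted to the discrete setting (\cref{l:tens} and \cref{l:proj}), where there is no analogue of the identity $\nabla_x g=\nabla_y g$ and one really does need a projection lemma plus a weighted tensorization lemma followed by an optimization over the mixture weight $p$. So your closing sentence overstates the parallel slightly: it is the paper's proof sketch, not yours, that is literally the template for the $k$-Glauber argument.
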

\begin{proof}[Proof sketch.]
    Let $M_m$ denote multiplication by $m$. Then $\mu_iM_{m_i}^{-1}$ has Poincar\'e constant $m_i^2C_i$. By tensorization, $\mu = \mu_1M_{m_1}^{-1} \ot \mu_2M_{m_2}^{-1}$ has Poincar\'e constant $\max \bc{C_1m_1^2, C_2m_2^2}$. Consider the projection $\pi(x_1,x_2)=\fc{x_1}{m_1} + \fc{x_2}{m_2}$. 
    When $\rc{m_1^2}+\rc{m_2^2}=1$, 
    this can be realized as projection onto the vector $(\rc{m_1},\rc{m_2})$, so the Poincar\'e constant does not increase: $\CP(\mu\pi^{-1})\le \CP(\mu)$. Note that we exactly have $\mu\pi^{-1} = \mu_1*\mu_2$. 
    Choosing $\rc{m_1^2} = \fc{C_1}{C_1+C_2}$
    and 
    $\rc{m_2^2}= \fc{C_2}{C_1+C_2}$ gives the bound. The same argument works for the log-Sobolev constant.
\end{proof}
We will carry out the same steps as in this proof: define a tensorization operation which preserves the Poincar\'e or approximate tensorization of entropy constant, and a projection which can only improve it.
As in the sketch above, it will help to weight the components appropriately in the tensorization step. 

To obtain the $\Dn{k}{k-1}$ operator from the $\Dn{n}{n-1}$ operator, we will tensorize with the appropriate ``noise" distribution, which in this case is that of erasing $n-k$ coordiates. We note that while bit-flip noise is the more natural analogue of gaussian noise, the ``denoising" step is more difficult to implement, while erasure noise connects more nicely with existing notions.


\subsection{Tensorization and projection}

The following proposition is similar to classical results on preservation of functional inequalities under tensorization, which corresponds to taking a product of Markov chains. However, we need to work with operators between different spaces to obtain contraction for the down operator, so we need the result given in \cref{l:tens} . As in Remark~\ref{r:chi-vs-kl}, nothing would be lost for $\chi^2$-divergence if we considered the down-up walk---so we could use existing results---but for KL we need to bound contraction for just the down operator. This is important because we aim to bound contraction for $\Pdul{n\lra n-k} = \Dn{n}{n-1}\cdots \Dn{n-k+1}{n-k} \Up{n-k}{n-k+1}\cdots \Up{n-1}{n}$, and in the case of KL, we cannot obtain this by bounding contraction for just operators of the form $\Pdul{k\lra k-1}=\Dn{k}{k-1}\Up{k-1}k$.

\begin{pr}[Contraction under tensorization]
\label{l:tens}
Suppose that $P_i$ is a Markov kernel from $\Om_i$ to $\Om_i'$, for $i=1, 2$. Let $I_i$ denote the identity kernel on $\Om_i$ and define $P=p(P_1\ot I_2) + (1-p)(I_1\ot P_2)$ as a Markov kernel from $\Om_1\times \Om_2$ to $(\Om_1'\times \Om_2) \sqcup (\Om_1 \times \Om_2')$. 

Let $\Df = \KL$ or $\chis$. 
If $P_i$ satisfies $(1-\ka_i)$-contraction in $f$-divergence with respect to $\mu_i$, then $P$ satisfies $(1-\ka)$-contraction in $f$-divergence with respect to $\mu_1\ot\mu_2$, where 
$\ka = \min\{p\ka_1,(1-p)\ka_2 \}$.
In particular, if $p=\fc{\ka_2}{\ka_1+\ka_2}$, then $\ka=\fc{\ka_1\ka_2}{\ka_1+\ka_2} = \rc{\ka_1^{-1}+\ka_2^{-1}}$.
\end{pr}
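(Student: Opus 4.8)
The plan is to mimic the continuous convolution lemma sketched just above, carrying out the two moves—tensorization with reweighting, and projection—but now at the level of contraction coefficients for kernels, handling $\chi^2$ and $\mathrm{KL}$ uniformly via the $f$-divergence formalism. I will first reduce the reweighting trick to a clean statement: the key structural observation is that $P = p(P_1\otimes I_2) + (1-p)(I_1\otimes P_2)$, viewed as a kernel into the \emph{disjoint union} $(\Om_1'\times\Om_2)\sqcup(\Om_1\times\Om_2')$, has the property that pushing a measure $\nu$ through it and measuring its $f$-divergence against $(\mu_1\otimes\mu_2)P$ decomposes as a $p$-weighted combination of two terms, one involving only the action of $P_1$ on the first coordinate and one involving only $P_2$ on the second. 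Concretely, because the two ``branches'' land in disjoint sets, no cancellation occurs and $\Df(\nu P\,\|\,(\mu_1\otimes\mu_2)P)$ splits into contributions from the $P_1$-branch and the $P_2$-branch.

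The main step is then a data-processing / partial-contraction estimate on each branch. On the $P_1$-branch, the output measure is $(\nu(P_1\otimes I_2))$ and I want to compare $\Df$ of it against $\Df(\nu\,\|\,\mu_1\otimes\mu_2)$. I would do this by conditioning on the second coordinate: write $\nu$ via its disintegration over $\Om_2$, apply the $(1-\ka_1)$-contraction of $P_1$ fibrewise (for each fixed value of the untouched coordinate), and then use the data-processing inequality (convexity of $f$-divergence) to recombine. This shows the $P_1$-branch contributes at most something like a $(1-\ka_1)$-contracted piece, and symmetrically for $P_2$. Combining with the $p$-weighting, the worst overall factor is $\max\{p\ka_1,(1-p)\ka_2\}$ — which is exactly the claimed $\ka$. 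The final sentence of the statement is pure optimization: minimizing $\max\{p\ka_1,(1-p)\ka_2\}$ over $p\in[0,1]$ is achieved by equalizing the two terms, i.e.\ $p\ka_1=(1-p)\ka_2$, giving $p=\ka_2/(\ka_1+\ka_2)$ and common value $\ka_1\ka_2/(\ka_1+\ka_2)=(\ka_1^{-1}+\ka_2^{-1})^{-1}$.

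I expect the main obstacle to be the bookkeeping in the branch-decomposition step: one must be careful that $f$-divergence against the \emph{correct} reference measure $(\mu_1\otimes\mu_2)P$ (which also lives on the disjoint union, and is a $p$-mixture of $\mu_1P_1\otimes\mu_2$ and $\mu_1\otimes\mu_2P_2$) decomposes additively, and that the two branches really don't interact. The cleanest route is probably to use the variational/dual description of $f$-divergence, or for $\chi^2$ the explicit quadratic form via Proposition~\ref{p:chis-pi}(2) together with tensorization of variance, and for $\mathrm{KL}$ the chain rule for relative entropy applied to the joint law of (branch indicator, output point). In both cases the reweighting enters only through the mixture weight $p$ and the fibrewise contraction is genuinely the hypothesis on $P_i$; nothing deeper is needed, which matches the remark in the text that this step ``will be easier'' than the usual local-to-global arguments.
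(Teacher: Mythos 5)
Your overall plan is in the right spirit and the disjoint-union decomposition at the start is exactly what the paper does, but for the $\KL$ case the central estimate is elided in a way that hides a genuine difficulty. After you disintegrate over $\Om_2$ and apply fibrewise contraction of $P_1$, you do \emph{not} get that ``the $P_1$-branch contributes a $(1-\ka_1)$-contracted piece.'' The chain rule gives
\[
\KL(\nu(P_1\ot I_2)\,\|\,\mu_1'\times\mu_2)
\;\le\; (1-\ka_1)\,\E_{x_2}\KL(\nu(\cdot\,|\,x_2)\,\|\,\mu_1)\;+\;\KL(\nu\Pi_2\,\|\,\mu_2),
\]
so only the conditional term $C_{2,1}:=\E_{x_2}\KL(\nu(\cdot|x_2)\|\mu_1)$ is contracted, while the marginal term $C_2:=\KL(\nu\Pi_2\|\mu_2)$ passes through untouched. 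Symmetrically for the other branch with $C_{1,2}$ and $C_1$. Writing $A=\KL(\nu\|\mu_1\ot\mu_2)=C_{2,1}+C_2=C_{1,2}+C_1$, your $p$-weighted combination then gives $B\le A - p\ka_1 C_{2,1}-(1-p)\ka_2 C_{1,2}$, and it is not automatic that this is $\le(1-\max\{p\ka_1,(1-p)\ka_2\}^{-1})^{-1}$ — sorry, that this is $\le(1-\min\{p\ka_1,(1-p)\ka_2\})A$. (Note the contraction factor the statement actually asserts is $1-\min\{p\ka_1,(1-p)\ka_2\}$, despite the ``$\max$'' in the display; your optimization at the end is right.)

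The missing ingredient is the comparison between the two chain-rule decompositions of $A$: convexity of $\KL$ in its first argument gives $C_{1,2}\ge C_2$ and $C_{2,1}\ge C_1$, hence $C_{2,1}+C_{1,2}\ge A$, which is precisely what lets you lower-bound $p\ka_1 C_{2,1}+(1-p)\ka_2 C_{1,2}\ge \min\{p\ka_1,(1-p)\ka_2\}\,A$. Your ``data-processing / convexity to recombine'' remark gestures at this but places it inside a single branch (where it plays no role); the inequality is a cross-branch reconciliation, not a within-branch recombination. Without it your sketch would stall at the line above. For $\chi^2$ your plan (reduce via Proposition~\ref{p:chis-pi}(2) to a spectral statement about $PP^*$, which is a weighted product chain) does match the paper and is fine; it is specifically the $\KL$ case where the chain rule does not simply ``tensorize'' and the convexity step is the actual content.
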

\begin{proof}
First consider KL-divergence. 
Let $\nu$ be a measure on $\Om_1\times \Om_2$. Let $\Pi_i$ and $\Pi_i'$ denote the projection kernels to $\Om_i$ and $\Om_i'$ respectively, for $i=1, 2$ (i.e., taking marginals). 
    We calculate
    \begin{align}
    \nonumber
        B:&=\KL(p\nu (P_1\ot I_2) + (1-p)\nu (I_1\ot P_2)\| p(\mu_1'\times \mu_2) + (1-p) (\mu_1\times \mu_2'))\\
        &= 
        p \KL (\nu (P_1\ot I_2)\|\mu_1'\times \mu_2) + 
        (1-p) \KL (\mu (I_1 \ot P_2)\|\mu_1\times \mu_2'),
        \label{e:kl-tens}
    \end{align}
    since the spaces $\Om_1'\times \Om_2$ and $\Om_1\times \Om_2'$ are disjoint. 
    Now, by the chain rule of KL-divergence and entropy contraction, 
    \begin{align*}
        \KL (\nu (P_1\ot I_2)\|\mu_1'\times \mu_2)
        &= \E_{x_2\sim \mu_2} \KL((\nu(P_1\ot I_2))(\cdot |x_2)\| \mu_1')
        + \KL(\nu(P_1\ot I_2)\Pi_2\|\mu_2)\\
        &= \E_{x_2\sim \mu_2} \KL(\nu(\cdot |x_2) P_1 \| \mu_1')
        + \KL(\nu\Pi_2 \|\mu_2)\\
        &\le (1-\ka_1) \E_{x_2\sim \mu_2} 
        \KL(\nu(\cdot |x_2)\|\mu_1) + 
        \KL(\nu\Pi_2 \|\mu_2).
    \end{align*}
    By symmetry, the same calculation holds for the second term in~\eqref{e:kl-tens}, giving us
    \begin{align*}
        B &\le 
        p\ba{(1-\ka_1) \E_{x_2\sim \mu_2} 
        \KL(\nu(\cdot |x_2)\|\mu_1) + 
        \KL(\nu\Pi_2 \|\mu_2)}\\
        &\quad + 
        (1-p) \ba{(1-\ka_2) \E_{x_1\sim \mu_1} 
        \KL(\nu(\cdot |x_1)\|\mu_2) + 
        \KL(\nu\Pi_1 \|\mu_1)} .
    \end{align*}
    We wish to compare this with
    \begin{align*}
        A:&= \KL(\mu \|\mu_1\times \mu_2)\\
        &= \ub{\E_{x_2\sim \mu_2} \KL(\nu(\cdot|x_2)\|\mu_1)}{C_{2,1}} + 
        \ub{\KL(\nu\Pi_2 \|\mu_2)}{C_2}\\
        &= \ub{\E_{x_1\sim \mu_1}
        \KL(\nu(\cdot|x_1)\|\mu_2)}{C_{1,2}} + 
        \ub{\KL(\nu\Pi_1 \|\mu_1)}{C_1}.
    \end{align*}
    By convexity of KL-divergence, $C_{1,2}\ge C_2$ and $C_{2,1}\ge C_1$. Hence
    \begin{align*}
        B&\le p((1-\ka_1) C_{2,1} + C_2) + (1-p)((1-\ka_2) C_{1,2} + C_1) \\
        &\le 
        \max\{(1-p)(1-\ka_2) + p, p(1-\ka_1) + (1-p)\}  [p(C_{2,1} + C_2) + (1-p) (C_{1,2}+C_1)] \\
        & = \pa{1-\min\{(1-p)\ka_2, p\ka_1\}} A. 
    \end{align*}
    
Next consider $\chi^2$-divergence. It suffices to prove the following equivalent statement on contraction of variance (\cref{p:chis-pi}): for any $f:\Om_1\times \Om_2\to \R$, considering $P^*=p(P_1^*\ot I_2) \opl (1-p) (I_1\ot P_2^*)$, we have 
\[
\Var_{\mu P} (P^*f) \le (1-\ka) \Var_\mu(f),
\]
which is equivalent to $\si_2(P^*)\le 1-\ka$ and hence to $\la_2(PP^*)^2 = \si_2(PP^*)^2\le (1-\ka)^2$.
Now $PP^*= p(P_1P_1^*\ot I_2) + (1-p)(I_1\ot P_2P_2^*)$ is exactly the transition matrix of the weighted product of two Markov chains with $\la_2(P_iP_i^*)\le 1-\ka_i$, so it is well-known that 
\[
\la_2(PP^*) \le \max\{p+(1-p)(1-\ka_2), (1-p)+p(1-\ka_1)\} = 1-\max\{(1-p)\ka_2, p\ka_1\}.
\]
(A quick way to see this is as follows: if $\{f_i\}$ are the eigenvectors of $P_1P_1^*$ and $\{g_j\}$ are the eigenvectors of $P_2P_2^*$, then $\{f_ig_j\}$ are the eigenvectors of $P^*P$, and the second largest eigenvalue is when $f_i=1$ or $g_i=1$.)
\end{proof}
Though we do not do it here, the proofs can be put on the same footing and generalized using the notion of $f$-entropy \cite{chafai2004entropies}. 

\begin{pr}[Contraction under projection]
\label{l:proj}
Suppose that $P$ is a Markov kernel from $\Om_1$ to $\Om_2$, and $\mu_1,\mu_2$ are measures on $\Om_1,\Om_2$ such that $\mu_1P=\mu_2$. Let $\pi_i:\Om_i\to \Om_i'$ be maps and $\mu_i' = \mu_i\pi_i^{-1}$. Define the projected Markov kernel $P':\Om_1'\times \Om_2'\to \R_{\ge0}$ by 
\[
P'(x_1',x_2') = 
\sumr{x_1\in \Om_1}{\pi_1(x_1)=x_1'}
\sumr{x_2\in \Om_2}{\pi_2(x_2)=x_2'}
\mu_1(x_1|\pi(x_1)=x_1') P(x_1,x_2).
\]
If $P$ satisfies $\rh$-contraction in $f$-divergence with respect to $\mu_1$, then $P'$ also satisfies $\rh$-contraction in $f$-divergence with respect to $\mu_1'$. 
\end{pr}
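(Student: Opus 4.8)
The plan is to construct a \emph{lift} sending measures on $\Om_1'$ to measures on $\Om_1$ that is compatible with $P$, $P'$ and $\pi_2$ and preserves $f$-divergence, and then finish with the data-processing inequality. Given $\nu_1'\ll\mu_1'$, define $\wh\nu_1$ on $\Om_1$ by
\[
\wh\nu_1(x_1) \;=\; \frac{\nu_1'(\pi_1(x_1))}{\mu_1'(\pi_1(x_1))}\,\mu_1(x_1),
\]
i.e.\ $\wh\nu_1$ keeps the same fiberwise conditional law as $\mu_1$ but reweights the fiber $\pi_1^{-1}(x_1')$ by $\nu_1'(x_1')/\mu_1'(x_1')$ (this is well defined off the $\mu_1'$-null set, and on null fibers $\nu_1'$ also vanishes, so we set $\wh\nu_1=0$ there). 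Then $\wh\nu_1$ is a probability measure with $\wh\nu_1\ll\mu_1$, its pushforward is $\wh\nu_1\pi_1^{-1}=\nu_1'$, and its density is constant along fibers of $\pi_1$: $\tfrac{\wh\nu_1}{\mu_1}(x_1)=\tfrac{\nu_1'}{\mu_1'}(\pi_1(x_1))$.

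Next I would record two elementary facts. (i) \emph{The lift preserves $f$-divergence}: $\Df(\wh\nu_1\|\mu_1)=\Df(\nu_1'\|\mu_1')$. This is immediate from grouping the sum $\sum_{x_1}\mu_1(x_1)f\bigl(\tfrac{\wh\nu_1}{\mu_1}(x_1)\bigr)$ by fibers of $\pi_1$, since $f\bigl(\tfrac{\wh\nu_1}{\mu_1}\bigr)$ is constant on each fiber with value $f\bigl(\tfrac{\nu_1'}{\mu_1'}(x_1')\bigr)$ and the fiber masses sum to $\mu_1'(x_1')$. (ii) \emph{The lift intertwines $P$ with $P'$ via $\pi_2$}: unfolding the definition of $P'$ and exchanging the order of summation over $x_1$, $x_1'=\pi_1(x_1)$ and $x_2$ gives
\[
(\wh\nu_1 P)\,\pi_2^{-1} \;=\; \nu_1' P'.
\]
Specializing to $\nu_1'=\mu_1'$ (so $\wh\nu_1=\mu_1$) and using $\mu_1 P=\mu_2$ yields in particular $\mu_1'P'=\mu_2\pi_2^{-1}=\mu_2'$, a consistency check with the hypotheses.

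The conclusion is then one line. Using (ii), the standard data-processing (monotonicity) inequality for $f$-divergences applied to the deterministic Markov kernel induced by $\pi_2$, the $\rh$-contraction hypothesis for $P$ applied to $\nu_1=\wh\nu_1\ll\mu_1$, and (i),
\[
\Df(\nu_1'P'\|\mu_1'P') \;=\; \Df\bigl((\wh\nu_1P)\pi_2^{-1}\,\big\|\,(\mu_1P)\pi_2^{-1}\bigr) \;\le\; \Df(\wh\nu_1P\|\mu_1P) \;\le\; \rh\,\Df(\wh\nu_1\|\mu_1) \;=\; \rh\,\Df(\nu_1'\|\mu_1').
\]
I do not expect a genuine obstacle: the entire content is in choosing the right lift, exactly as the convolution lemma weights its two factors and then projects. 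The only points needing care are the index bookkeeping in (ii) — after which one sees that $P'$ is precisely "lift, push through $P$, coarse-grain by $\pi_2$" — and the handling of $\mu_1'$-null fibers; the data-processing inequality is standard and plays the role of "projection cannot increase the constant" from the convolution argument above.
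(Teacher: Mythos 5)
Your proposal is correct and matches the paper's proof essentially line for line: the lift $\wh\nu_1$ is the same measure the paper calls $\nu_1$ (since $\mu_1(x\,|\,\pi_1(X)=\pi_1(x))=\mu_1(x)/\mu_1'(\pi_1(x))$), and the argument proceeds by the same three steps of fiberwise $f$-divergence preservation, the intertwining identity $(\wh\nu_1 P)\pi_2^{-1}=\nu_1'P'$, and data processing.
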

In words, in the projected Markov chain, given $x_1'$, we draw $x_1$ projecting to $x_1'$ from the ``prior", move according to $P$, and then project back down; then the  projected kernel always has at least as much contraction as the original one. See e.g., \cite{madras2002markov} for the statement for the Poincar\'e constant.
\begin{proof}
    Let $\nu_1'\ll \mu_1'$ be a measure on $\Om_1'$. Define $\nu_1(x) = \nu_1'(\pi_1(x)) \mu_1(X=x|\pi_1(X)=\pi_1(x))$. Then
    $\E_{\mu_1(\cdot |\pi_1(X)=x_1')}f\pf{\nu_1(x)}{\mu_1(x)} = \fc{\nu_1'(x_1')}{\mu_1'(x_1')}$, so 
    \[
\Df(\nu_1'\|\mu_1') = 
\E_{x_1'\sim \mu_1'} f\pf{\nu_1'(x_1')}{\mu_1'(x_1')} =
\Df(\nu_1\|\mu_1).
    \]
    By contraction of $P$ and the data processing inequality, \[\Df((\nu_1P)\pi_2^{-1}\|\mu_2') \le \Df(\nu_1P\|\mu_2)\le \rh \Df(\nu_1\|\mu_1)
    = \rh \Df(\nu_1'\|\mu_1').\] Finally, note that $(\nu_1P)\pi_2^{-1} = \nu_1'P'$ by definition of $P'$. 
\end{proof}

\subsection{Contraction improves going down} 

We now show that for $k<n$, contraction in KL and $\chi^2$ for $\Dn{k}{k-1}$ will only be better than contraction of $\Dn{n}{n-1}$, except up to an additive constant.

\begin{lem}\label{l:bl}
    Let $\mu$ be the uniform distribution on $\binom{[n]}{k}$. Then $\Dn{k}{k-1}$ has $\pa{1-\rc{k}}$-contraction in KL, and $\mu$ satisfies 1-approximate tensorization of entropy.
    Moreover, $\Dn{k}{k-1}$ and $\Up{k-1}{k}$ satisfy $\pa{1-\fc{n}{k(n-k+1)}}$-contraction in $\chi^2$.
\end{lem}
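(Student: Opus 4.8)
The lemma bundles two essentially independent statements (the two KL claims are literally the same one: by definition $\mu$ satisfies $1$-approximate tensorization of entropy exactly when $\Dn{k}{k-1}$ has $(1-\rc k)$-contraction in KL). I would prove the $\chi^2$ part by an eigenvalue computation for a textbook chain, and the KL part by invoking (or re-deriving) the approximate tensorization of the uniform slice. The two arguments are unrelated: as \cref{r:chi-vs-kl} stresses, for $\chi^2$ I can pass freely between $\Dn{k}{k-1}$, $\Up{k-1}k$ and $\Pdul{k\lra k-1}$, while for KL I must argue about $\Dn{k}{k-1}$ itself.

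For the $\chi^2$ claim: since $\mu$ is uniform, $\Up{k-1}k(B,A)=\rc{n-k+1}\one_{B\subeq A}$, and a short computation identifies $\Pdul{k\lra k-1}=\Dn{k}{k-1}\Up{k-1}k$ with $\rc{n-k+1}I+\rc{k(n-k+1)}A_{J(n,k)}$, where $A_{J(n,k)}$ is the adjacency matrix of the Johnson graph $J(n,k)$, i.e., this is the Bernoulli--Laplace diffusion. The plan is to read off $\la_2$: the functions $f(A)=\sum_{i\in A}c_i$ with $\sum_i c_i=0$ are eigenfunctions (compute $\E[f(A')\mid A]$ by removing a uniform element of $A$ and adding a uniform element from outside the remaining $(k-1)$-set, using $\sum_i c_i=0$), with eigenvalue $\fc{(k-1)(n-k)}{k(n-k+1)}=1-\fc{n}{k(n-k+1)}$. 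By the classical level decomposition of the Johnson scheme the full spectrum is $\set{1-\fc{j(n-j+1)}{k(n-k+1)}}{0\le j\le k}$, and since $j(n-j+1)-n=(j-1)(n-j)\ge 0$ for $1\le j\le k$, the value at level $j=1$ is the largest nontrivial one; hence $\la_2(\Pdul{k\lra k-1})=1-\fc{n}{k(n-k+1)}$. Then $\si_2(\Dn{k}{k-1})^2=\si_2(\Up{k-1}k)^2=\la_2(\Pdul{k\lra k-1})$, and \cref{p:chis-pi}(4)--(5) upgrade this to the claimed $(1-\fc n{k(n-k+1)})$-contraction in $\chi^2$ for both $\Dn{k}{k-1}$ and $\Up{k-1}k$.

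For the KL claim: $\mu$ is the uniform distribution over the bases of the rank-$k$ uniform matroid on $[n]$, whose basis-generating polynomial is the elementary symmetric polynomial $e_k(z_1,\dots,z_n)$, which is log-concave (Newton's inequalities); so $\mu$ is strongly log-concave / $1$-fractionally-log-concave, and the entropic-independence machinery gives $(1-\rc k)$-contraction in KL for $\Dn{k}{k-1}$, i.e., $1$-approximate tensorization of entropy. A self-contained alternative is induction on $n$: decompose $\mu$ according to whether $n\in A$ --- with probabilities $\fc kn$, $1-\fc kn$ and conditionals uniform on $\binom{[n-1]}{k-1}$ (after deleting $n$) and $\binom{[n-1]}k$ --- then apply the chain rule for entropy together with the inductive hypothesis on each conditional and recombine, carefully matching terms against $\E_{B\sim\mu_{k-1}}\Ent_{\mu_k(\cdot\mid B)}[f]$ split by whether $n\in B$.

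I expect the KL claim to be the substantive part: the $\chi^2$ bound is a routine computation for the Bernoulli--Laplace chain, whereas approximate tensorization with the sharp constant $1$ is the ``base case'' of the whole argument and either needs the (nontrivial) entropic-independence theory or a careful hands-on induction whose only real difficulty is the recombination bookkeeping.
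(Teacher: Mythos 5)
Your proposal is correct and reaches the same conclusions, with the KL part essentially matching the paper and the $\chi^2$ part taking a genuinely different but equivalent route.

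For the KL claim, you and the paper do the same thing: observe that $\binom{[n]}{k}$ is a matroid so $\mu$ is log-concave, and invoke the entropic-independence/approximate-tensorization machinery of Anari et al. (the paper cites \cite[Theorem 5]{anari2021entropic}). Your observation that the two KL statements in the lemma are the same by definition is accurate and worth noting. Your alternative induction-on-$n$ sketch is plausible in outline but not carried out; the recombination bookkeeping with the sharp constant $1$ is precisely where such arguments tend to go wrong, so I would not rely on it without writing it out, but it is clearly a secondary suggestion.

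For the $\chi^2$ claim, the paper writes $\Pdul{k\leftrightarrow k-1}$ as a lazy version of the non-lazy swap walk, rescales that to the Bernoulli--Laplace generator $L$ with continuous-time rate normalization, and cites \cite{salez2021sharp} for $\la_{\min}(L)=1$; chasing the rescaling constants gives the gap $\fc{n}{k(n-k+1)}$. You instead identify $\Pdul{k\leftrightarrow k-1}=\rc{n-k+1}I+\rc{k(n-k+1)}A_{J(n,k)}$ directly, compute the level-1 eigenvalue of the Johnson scheme by hand (your algebra $1-\rc k-\fc{k-1}{k(n-k+1)}=1-\fc{n}{k(n-k+1)}=\fc{(k-1)(n-k)}{k(n-k+1)}$ checks out), and appeal to the classical Johnson-scheme spectrum $\la(j)=1-\fc{j(n-j+1)}{k(n-k+1)}$ together with the identity $j(n-j+1)-n=(j-1)(n-j)\ge 0$ to see that level $1$ gives $\la_2$. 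Then \cref{p:chis-pi}(4)--(5) transfer this to $\Dn{k}{k-1}$ and $\Up{k-1}{k}$, exactly as the paper implicitly does. Both routes ultimately rest on knowing the Bernoulli--Laplace/Johnson spectrum; yours is more explicit and self-contained at level $1$ (it derives the second eigenvalue rather than inferring it by rescaling a cited gap), while the paper's is shorter once the citation is granted. Either is a complete proof of this part.
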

We note that the down-up walk on the uniform distribution on $\binom{[n]}{k}$ is a rescaling of the Bernoulli-Laplace diffusion model, for which mixing and functional inequalities have been extensively studied~\cite{diaconis1987time,lee1998logarithmic,filmus2022log,bristiel2021entropy,salez2021sharp}, and we have not attempted to find the best result for KL.
\begin{proof}
Note that $\mu$ is log-concave, by the results of~\cite{anari2018log} and the fact that $\binom{[n]}{k}$ is a matroid. 
Then 1-approximate tensorization of entropy follows from~\cite[Theorem 5]{anari2021entropic}. 

To note the improved bound for $\chi^2$, note first that the probability of staying at the same set is $\rc{n-k+1}$, so $\Pdul{k\lra k-1} = \rc{n-k+1}I + \fc{n-k}{n-k+1} \Pdun{k\lra k-1}$ where $\Pdun{k\lra k-1}$ is the ``non-lazy" walk which swaps an occupied and non-occupied space at random. Hence $I-\Pdul{k\lra k-1} =\fc{n-k}{n-k+1}(I-\Pdun{k\lra k-1})$. 
This in turn satisfies
\[
I - \Pdun{k\lra k-1} = \fc{n(n-1)/2}{k(n-k)} \cdot \fc{2}{n-1} \cdot L
\]
where $L$ is the generator of the Bernoulli-Laplace diffusion model with parameters $(n,k)$ (a random transposition occurs with rate 1). Here, the $\fc{2}{n-1}$ is the scaling factor to convert to a discrete-time walk and $\fc{n(n-1)/2}{k(n-k)}$ takes into account that we are randomly choosing from $k(n-k)$ transpositions that move a particle to an unoccupied space, rather than an arbitrary transposition. By~\cite{salez2021sharp}, $\la_{\min}(L)=1$. Hence the spectral gap for $\Pdul{k\lra k-1}$ is 
\[
\fc{n-k}{n-k+1} \cdot 
\fc{n(n-1)/2}{k(n-k)} \cdot \fc{2}{n-1} = \fc{n}{k(n-k+1)}.\qedhere 
\]
\end{proof}
\begin{lem}\label{l:mono}
Let $\mu$ be a distribution on $\prodo in \Om_i'$ and $\mu^{\hom}$ its homogenization on $\binom{\Om}n$, where $\Om = \bigsqcup_{i=1}^n \Om_i'$, and $\mu_m = \mu^{\hom}\Dn{n}{m}$. Let $\Df = \KL$ or $\chis$. For each $k$, let $\ka_{k}$ be the largest number such that $D_{k}:=\Dn{k}{k-1}$ satisfies $(1-\ka_{k})$-contraction in $f$-divergence with respect to $\mu_{k}$. Suppose that for the uniform distribution $\binom{[n]}{k}$ that $\Dn k{k-1}$ satisfies $(1-\kabl{k})$-contraction in $f$-divergence.
Then 
\[
\ka_{k} \ge 
\fc{n \ka_n \kabl{k}}{n\ka_n + k\kabl{k}}
\ge\begin{cases}
\fc{n\ka_n}{k((n-k+1) \ka_n + 1)}, & \Df = \chis\\
\fc{n\ka_n}{k(n \ka_n+1)}, & \Df=\KL.
\end{cases}
\]
\end{lem}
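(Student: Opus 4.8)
The plan is to transport the three moves in the proof of the convolution lemma quoted above to the discrete setting. There, a convolution is realized as a \emph{projection} of a \emph{weighted tensor product} of the original measure and the noise, the tensorization weights are chosen to balance the two functional constants, and the target constant is read off from the behaviour of functional inequalities under tensorization and projection. Here ``projection'' is \cref{l:proj}, ``tensorization'' is \cref{l:tens}, the ``original measure'' is $\mu^{\hom}$ at the top level — whose down operator $\Dn{n}{n-1}$ has contraction factor $1-\ka_n$, which is exactly the hypothesis on $\mu$ re-expressed (for $\chis$, Poincar\'e constant $1/\ka_n$; for $\KL$, $\tfrac1{n\ka_n}$-approximate tensorization of entropy) — and the ``noise'' is the Bernoulli--Laplace-type down operator $\Dn{k}{k-1}$ on the uniform measure on $\binom{[n]}{k}$, with contraction factor $1-\kabl k$ by \cref{l:bl}.

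Concretely, I would first realize $\mu_k$ as a projection: a point of $\binom{\Om}{k}$ is a partial assignment $(S,\si_S)$, $|S|=k$, and $(x,S)\mapsto x|_S$ pushes the product measure $\mu^{\hom}\otimes\mathrm{Unif}\big(\binom{[n]}{k}\big)$ on $\binom{\Om}{n}\times\binom{[n]}{k}$ forward to $\mu_k$, since the mass of $(S,\si_S)$ is $\tfrac1{\binom nk}\mu_S(\si_S)=\mu_k(S,\si_S)$; the same at level $k-1$ exhibits $\mu_{k-1}$. In this lift the ``which coordinates survive'' degree of freedom is independent of the conditional $\mu$-structure, which is what permits tensorization. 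The substantive step is to realize $\Dn{k}{k-1}$ as the projection, in the sense of \cref{l:proj} (so on the lifted side one first completes $(S,\si_S)$ to a full $x\sim\mu^{\hom}(\cdot\mid x|_S=\si_S)$), of an operator $p\,(P_1\otimes I)+(1-p)\,(I\otimes\Dn{k}{k-1}^{\mathrm{slice}})$, where $P_1$ carries the $\mu$-dynamics; the analogue of the rescalings $M_{m_i}^{-1}$ of the convolution lemma is the weight $p$ together with a bookkeeping that makes a single down step at level $k$ ``as strong as'' $n/k$ top-level down steps, so that the $\mu$-factor enters with contraction parameter $\tfrac nk\ka_n$ rather than $\ka_n$.

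Granting this, \cref{l:tens} with first factor of parameter $\tfrac nk\ka_n$, second of parameter $\kabl k$, and optimal weight $p=\kabl k\big/\!\big(\tfrac nk\ka_n+\kabl k\big)$ gives the lifted operator contraction factor $1-\big((\tfrac nk\ka_n)^{-1}+\kabl k^{-1}\big)^{-1}=1-\tfrac{n\ka_n\kabl k}{n\ka_n+k\kabl k}$, and \cref{l:proj} passes this down to $\Dn{k}{k-1}$, giving $\ka_k\ge\tfrac{n\ka_n\kabl k}{n\ka_n+k\kabl k}$, the first displayed inequality. (For $\chis$ one may instead run the last two steps at the level of the down--up walk $\Pdul{k\lra k-1}$ via \cref{p:chis-pi}.) The second inequality is pure arithmetic: substitute the two cases of \cref{l:bl}, $\kabl k=1/k$ for $\KL$ and $\kabl k=\tfrac{n}{k(n-k+1)}$ for $\chis$, and clear denominators to get $\tfrac{n\ka_n}{k(n\ka_n+1)}$ and $\tfrac{n\ka_n}{k((n-k+1)\ka_n+1)}$ respectively.

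The main obstacle is the realization in the second paragraph: pinning down the exact lifted space and the operator $p\,(P_1\otimes I)+(1-p)\,(I\otimes\Dn{k}{k-1}^{\mathrm{slice}})$ so that it genuinely projects onto $\Dn{k}{k-1}$ while the $\mu$-branch reaches parameter $\tfrac nk\ka_n$. This is subtler than the Gaussian convolution lemma because erasure couples the two factors — dropping a surviving coordinate simultaneously shrinks the active set and forgets a value — so the naive tensor $\Dn{n}{n-1}\otimes\Dn{k}{k-1}^{\mathrm{slice}}$ does not even land at the right level; one has to arrange the lift so the $\mu$-branch erases only active coordinates, so the operator stays an honest convex combination to which \cref{l:tens} applies, and (for $\KL$) so the $\tfrac nk\ka_n$ parameter is actually inherited — which is where the approximate tensorization content of $\ka_n$ is used.
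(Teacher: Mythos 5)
Your high-level plan — realize $\Dn{k}{k-1}$ as a projection of a weighted tensor of a $\mu$-operator with the Bernoulli--Laplace noise operator, use \cref{l:tens} with weights balancing the two rates, then pass the contraction down via \cref{l:proj}, and finally substitute the values of $\kabl{k}$ from \cref{l:bl} — is exactly the paper's strategy, and your final arithmetic is correct. However, there is a genuine gap in the step you yourself flag, and your envisioned fix is not the one the paper uses and I don't think it can be made to work. You propose to first ``rescale'' the $\mu$-branch so that it enters the tensorization at rate $\tfrac{n}{k}\ka_n$ (the discrete analogue of $M_m^{-1}$), and then project so the lifted operator lands \emph{exactly} on $\Dn{k}{k-1}$. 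There is no such rescaling in the discrete erasure setting: you suggest arranging for the $\mu$-branch to ``erase only active coordinates,'' but conditioning $\Dn{n}{n-1}$ on the erased index lying in the active set $A$ is literally $\Dn{k}{k-1}$ acting on the partial assignment $x|_A$, i.e.\ it \emph{is} the noise operator, not a boosted $\mu$-operator, so your two branches collapse to one and no $\frac{n}{k}$ factor appears.

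The paper's construction avoids this by not rescaling at all. It tensorizes $\Dn{n}{n-1}$ directly (at rate $\ka_n$, with weight $p=\kabl{k}/(\ka_n+\kabl{k})$) with the uniform down operator on $\binom{[n]}{k}$ (at rate $\kabl{k}$), giving a lifted rate $\ka=\ka_n\kabl{k}/(\ka_n+\kabl{k})$ without any $n/k$. The projection $\pi(S,A)=S\cap\bigcup_{i\in A}\Om_i$ is then \emph{not} clean: when the $\mu$-branch erases a coordinate outside $A$ (probability $\tfrac{n-k}{n}$), the projection is the identity, so the projected kernel is the lazy mixture
\[
P'=\Bigl(1-\tfrac{p(n-k)}{n}\Bigr)\Dn{k}{k-1}+\tfrac{p(n-k)}{n}I .
\]
Because the two components of the codomain are disjoint, the $f$-divergence after one step of $P'$ decomposes linearly, so $\Dn{k}{k-1}$ has contraction parameter $\ka_k$ iff $P'$ has parameter $\ka_k\bigl(1-\tfrac{p(n-k)}{n}\bigr)$; applying \cref{l:proj} and solving this linear relation is precisely what produces the $\tfrac{n}{k}$ amplification and yields $\ka_k\ge n\ka_n\kabl{k}/(n\ka_n+k\kabl{k})$. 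So the missing idea is not a discrete scaling map but the observation that the natural projection is lazy, and the laziness factor is what upgrades $\ka_n$ to the effective $\tfrac{n}{k}\ka_n$. The rest of your argument (the choice of lift, the application of \cref{l:tens} and \cref{l:proj}, and the substitution of \cref{l:bl}) would then go through as written.
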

\begin{proof}
    Consider the kernel $P = p(D_n \ot I_{\binom{[n]}{k}}) + (1-p) (I_{\binom{\Om}{n}}\ot D_{k})$ from $\binom{\Om}{n}\times \binom{[n]}{k}$ to $\binom{\Om}{n-1}\times \binom{[n]}{k} \cup \binom{\Om}{n} \times \binom{[n]}{k-1}$, where $D_k$ denotes the down operator $\binom{[n]}{k}\times \binom{[n]}{k-1}\to \R_{\ge0}$. 
    By tensorization (\cref{l:tens}), for $p=\fc{\kabl{k}}{\ka_n + \kabl{k}}$ and  $\ka=\rc{\ka_n^{-1} + \kabl{k}^{-1}}$, $P$ satisfies $(1-\ka)$-contraction in $f$-divergence with respect to $\mu_n\ot \mathsf{Uniform}\binom{[n]}{k}$.  
    Define the projections 
    $\pi_1:\binom{\Om}{n}\times \binom{[n]}{k} \to \binom{\Om}k$ and 
    $\pi_2:\binom{\Om}{n-1}\times \binom{[n]}{k} \cup \binom{\Om}{n} \times \binom{[n]}{k-1}\to \binom{\Om}{k-1} \cup \binom{\Om}{k}$
    both as 
    \begin{align*}
        \pi(S, A) & = S\cap \bigcup_{i\in A}\Om_i,
    \end{align*}
    i.e., if $S$ corresponds to $x\in \prodo in \Om_i'$, we keep only the coordinates in the set $A\in \binom{[n]}{k}$. 
    
    We claim the projected kernel as defined in \cref{l:proj} is 
    \begin{align*}
        P' &= \ub{\pa{1-\fc{p(n-k)}n}D_{k}}{\mathrm{(I)}} + \ub{\fc{p(n-k)}{n}I}{\mathrm{(II)}}
    \end{align*}
    from $\binom{\Om}{k}$ to $\binom{\Om}{k-1}\cup \binom{\Om}{k}$. To see this, we first identify $x_A\in \prodo_{i\in A}\Om_i'$ with its homogenization in $\binom{\Om}{|A|}$ (\cref{d:hom}). Then $\mu_1(\cdot|\pi_1(x_1)=x_A)$ is the distribution of $(x,A)$ where $x_{A^c}$ is distributed as $\mu(X_{A^c}=x_{A^c}|X_A=x_A)$. Under the transition $P$:
    \begin{enumerate}
        \item With probability $p$, we remove a coordinate $i$ of $x$. In this case,
        \begin{enumerate}
            \item with probability $\fc{k}n$, $i\in A$, and projection by $\pi_2$ then gives $x_{A\bs \{i\}}$ for a random index $i$.
            \item with probability $\fc{n-k}n$, $i\nin A$, and projection by $\pi_2$.
        \end{enumerate}
        \item With probability $1-p$, we remove an element $i$ of $A$, and projection gives $x_{A\bs \{i\}}$ for a random $i\in A$.
    \end{enumerate}
    Then (1a) and (2) give the term (I) and (1b) gives the term (II).

    Because 
    \begin{align*}
&\Df\pa{\nu P'\|\pa{1-\fc{p(n-k)}n} \mu_{k-1} + \fc{p(n-k)}n \mu_{k}} \\
&= \pa{1-\fc{p(n-k)}n}\KL(\nu D_{k}\| \mu_{k-1}) + 
\fc{p(n-k)}n\KL(\nu \| \mu_{k}) 
    \end{align*}
    (the component measures have disjoint support), 
    we have that $D_{k}$ satisfies $(1-\ka_{k})$-contraction in $f$-divergence iff $P'$ satisfies $\pa{1-\ka_{k}\pa{1-\fc{p(n-k)}n}}$-contraction in $f$-divergence.
    \cref{l:proj} then gives us
    \begin{align*}
        \ka_{k}\pa{1-\fc{\kabl{k}}{\ka_n + \kabl{k}}\cdot \fc{n-k}n} &\ge \rc{\ka_n^{-1} + \kabl{k}^{-1}}\\
        \implies 
        \ka_{k} &\ge 
        \fc{n \ka_n \kabl{k}}{n\ka_n + k\kabl{k}}.
    \end{align*}
    Plugging in the result of Lemma~\ref{l:bl} then gives the result.
\end{proof}

\begin{thm}\label{t:du}
Let $\mu$ be a distribution on $\Om = \prodo in \Om_i'$, and consider the down and up operators defined with respect to its homogenization $\mu^\hom$. 
\begin{enumerate}
    \item If $\Dn{n}{n-1}$ satisfies $(1-\rc{Cn})$-contraction in $\chi^2$, 
    then $\Dn k{\ell}$ satisfies $\prod_{j=\ell+1}^k\pa{1-\rc{j\pa{C+\fc{n-j+1}{n}}}}$-contraction in $\chi^2$.
    \item If $\Dn{n}{n-1}$ satisfies $(1-\rc{Cn})$-contraction in KL (i.e., $\mu$ satisfies $C$-approximate tensorization of entropy), then $\Dn k{\ell}$ and $\Pdul{k\lra \ell}$ satisfy $\prod_{j=\ell+1}^k\pa{1-\rc{j(C+1)}}$-contraction in KL.
\end{enumerate}
\end{thm}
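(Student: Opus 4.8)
The plan is to obtain \cref{t:du} as an essentially immediate consequence of \cref{l:mono}: that lemma already bounds the contraction of a single down step $\Dn{j}{j-1}$ (acting at level $j$) in terms of the contraction $\ka_n$ of $\Dn{n}{n-1}$, so all that remains is to plug in the hypothesis and multiply the per-step bounds. First I would rewrite the hypothesis ``$\Dn{n}{n-1}$ has $(1-\rc{Cn})$-contraction in $f$-divergence'' as $\ka_n\ge\rc{Cn}$, i.e.\ $\ka_n^{-1}\le Cn$, in the notation of \cref{l:mono}.

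Next, for each $j$ with $\ell<j\le k$ I would invoke \cref{l:mono} to bound $\ka_j$. In the $\chi^2$ case it gives $\ka_j\ge \fc{n\ka_n}{j((n-j+1)\ka_n+1)}=\fc{n}{j((n-j+1)+\ka_n^{-1})}$, and since $x\mapsto \fc{n}{j(a+x)}$ is decreasing, substituting $\ka_n^{-1}\le Cn$ yields $\ka_j\ge \rc{j(C+\frac{n-j+1}{n})}$; in the KL case the same step gives $\ka_j\ge \fc{n}{j(n+\ka_n^{-1})}\ge\rc{j(C+1)}$. The boundary case $j=n$ (relevant only when $k=n$) is handled directly by the hypothesis, since $\rc{Cn}$ is at least both $\rc{n(C+1/n)}$ and $\rc{n(C+1)}$. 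Thus $\Dn{j}{j-1}$ has $(1-\rc{j(C+(n-j+1)/n)})$-contraction in $\chi^2$ with respect to $\mu_j$, and $(1-\rc{j(C+1)})$-contraction in KL with respect to $\mu_j$, for every such $j$.

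Then I would compose: writing $\Dn{k}{\ell}=\Dn{k}{k-1}\Dn{k-1}{k-2}\cdots\Dn{\ell+1}{\ell}$ and using $\mu_j\Dn{j}{j-1}=\mu^\hom\Dn{n}{j}\Dn{j}{j-1}=\mu_{j-1}$, the stationary output of each factor is the stationary input of the next, so $f$-divergence contraction constants multiply along the composition (if $P_1$ has $\rh_1$-contraction w.r.t.\ $\mu_1$ and $P_2$ has $\rh_2$-contraction w.r.t.\ $\mu_1P_1$, then applying the two bounds in turn gives $\rh_1\rh_2$-contraction of $P_1P_2$). Iterating over $j=k,k-1,\dots,\ell+1$ produces $\prod_{j=\ell+1}^k(1-\ka_j)$-contraction for $\Dn{k}{\ell}$, and the per-step bounds of the previous paragraph turn this into the two claimed products. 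For part (2) it then remains to pass from $\Dn{k}{\ell}$ to $\Pdul{k\lra\ell}=\Dn{k}{\ell}\Up{\ell}{k}$: since $\mu_\ell=\mu_k\Dn{k}{\ell}$, $\mu_\ell\Up{\ell}{k}=\mu_k$, and KL is non-increasing under the Markov kernel $\Up{\ell}{k}$ (data processing), one gets $\KL(\nu\Pdul{k\lra\ell}\|\mu_k)\le \KL(\nu\Dn{k}{\ell}\|\mu_\ell)\le \prod_{j=\ell+1}^k(1-\rc{j(C+1)})\,\KL(\nu\|\mu_k)$. (The $\chi^2$ analogue of this last step is already subsumed by \cref{p:chis-pi}.)

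I do not expect a genuine obstacle here: the theorem is really just the bookkeeping that promotes \cref{l:mono} to a statement about $\Dn{k}{\ell}$ and $\Pdul{k\lra\ell}$. The only points to watch are that lower bounds on the $\ka_j$ must translate into the stated \emph{upper} bounds on the contraction factors, that the chain of intermediate stationary distributions $\mu_k,\mu_{k-1},\dots,\mu_\ell$ lines up so multiplicativity of contraction legitimately applies, and that $\Up{\ell}{k}$ indeed fixes $\mu_\ell$ so the data-processing step is valid.
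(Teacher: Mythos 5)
Your proposal is correct and follows essentially the same route as the paper's proof: invoke \cref{l:mono} to bound each single-step contraction $\ka_j$ from below, substitute $\ka_n^{-1}\le Cn$, and then multiply the per-step factors along the decomposition $\Dn{k}{\ell}=\Dn{k}{k-1}\cdots\Dn{\ell+1}{\ell}$. You are a bit more careful than the paper's terse proof — in particular you check the boundary case $j=n$ directly against the hypothesis and spell out the data-processing step needed to pass from $\Dn{k}{\ell}$ to $\Pdul{k\lra\ell}$ in the KL case — but these are small completions of the same argument, not a different approach.
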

We note that our tensorization construction in Lemma~\ref{l:tens} currently depends on $\mu$ being a homogeneous distribution; it would be interesting to extend it beyond this case.
\begin{proof}
If $\Dn{n}{n-1}$ satisfies $\pa{1-\rc{Cn}}$-contraction in $\chi^2$-divergence, then by Lemma~\ref{l:mono}, $\Dn{j}{j-1}$ satisfies $\pa{1-\rc{j\pa{C+\fc{n-j+1}{n}}}}$-contraction in $\chi^2$-divergence. 
If $\Dn{n}{n-1}$ satisfies $\pa{1-\rc{Cn}}$-contraction in $\chi^2$-divergence, then we have $\pa{1-\rc{j(C+1)}}$-contraction in KL-divergence. 
Because $\Dn{k}{\ell} = \Dn{k}{k-1}\cdots \Dn{\ell+1}{\ell}$, taking the product from $\ell+1$ to $k$ gives the result.
\end{proof}

From this, we can conclude that $k$-Glauber dynamics mixes at least $\Om(k)$ times as fast in $\chi^2$-divergence as Glauber dynamics, and assuming approximate tensorization of entropy, mixes at least $\Om(k)$ times as fast in KL-divergence.
\begin{cor}[$k$-Glauber mixes $k$ times as fast]
\label{c:k-glauber}
\label{t:k-glauber}
Let $\mu$ be a distribution on $\Om = \prodo in \Om_i'$. 
\begin{enumerate}
    \item If $\mu$ satisfies a Poincar\'e inequality with constant $Cn$, then 
    \[
    \la_2(P_{\mu,k}) \le \pa{1-\fc{k}{n+1}}^{\fc{1}{C+1}} = 1-\Om\pa{\max\bc{\fc{k}{(C+1)n},1}}
    \]
    and $P_{\mu,k}$ satisfies a Poincar\'e inequality with constant $\Om\pf{(C+1)n}{k}$.
    \item If $\mu$ satisfies $C$-approximate tensorization of entropy, then $P_{\mu,k}$ satisfies contraction in KL-divergence with constant
    \[
    \pa{1-\fc{k}{n+1}}^{\rc{C+1}} = 1-\Om\pa{\max\bc{\fc{k}{(C+1)n},1}}.
    \]
\end{enumerate}
\end{cor}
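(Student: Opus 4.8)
The plan is to read off \cref{c:k-glauber} from \cref{t:du}, using \cref{p:chis-pi} to pass between $\chi^2$-contraction of a down operator and the spectral gap of the associated down-up walk, together with one elementary estimate on products. That estimate, which I would prove first, is: for $1\le k\le n$,
\[
\prod_{j=n-k+1}^{n}\pa{1-\frac{1}{j(C+1)}}\ \le\ \pa{1-\frac{k}{n+1}}^{1/(C+1)}.
\]
Writing $\al=\tfrac1{C+1}\in(0,1]$, it suffices to check term by term that $1-\tfrac{\al}{j}\le e^{-\al/j}=\pa{e^{-1/j}}^{\al}\le\pa{\tfrac{j}{j+1}}^{\al}$, where the last inequality is just $e^{-1/j}\le\tfrac{j}{j+1}$, i.e.\ $1+\tfrac1j\le e^{1/j}$; multiplying over $j$ and telescoping $\prod_{j=n-k+1}^{n}\tfrac{j}{j+1}=\tfrac{n-k+1}{n+1}$ then gives the claim. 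I would also note that the same bound holds if each factor $1-\tfrac1{j(C+1)}$ is replaced by the smaller quantity $1-\tfrac1{j(C+(n-j+1)/n)}$ that appears in \cref{t:du}(1), since $\tfrac{n-j+1}{n}\le 1$ when $j\ge 1$, and that $1-(1-x)^{\be}\ge\tfrac{\be x}{2}$ for $\be x\le 1$, so that the right-hand side above is $1-\Om\pf{k}{(C+1)n}$.

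For part (1), I would use the identification following \cref{d:hom}, under which $P_\mu=\Pdul{n\lra n-1}=\Dn n{n-1}\Up{n-1}n$ and $P_{\mu,k}=\Pdul{n\lra n-k}=\Dn n{n-k}\Up{n-k}n$. By \cref{p:chis-pi}, a Poincar\'e inequality for $\mu$ with constant $Cn$ is the same as $(1-\tfrac1{Cn})$-contraction of $\Dn n{n-1}$ in $\chi^2$, which is exactly the hypothesis of \cref{t:du}(1); that theorem then gives $\Dn n{n-k}$ a $\chi^2$-contraction constant $\prod_{j=n-k+1}^{n}\pa{1-\tfrac1{j(C+(n-j+1)/n)}}$. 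Since $P_{\mu,k}=\Dn n{n-k}\pa{\Dn n{n-k}}^{*}$ is reversible, I would then invoke \cref{p:chis-pi} in the form relating $(1-C')$-contraction of a down operator $D$ to the Poincar\'e constant $1/C'$ of $DD^*$ (equivalently $\la_2(DD^*)\le 1-C'$), turning the above into $\la_2(P_{\mu,k})\le\prod_{j=n-k+1}^{n}\pa{1-\tfrac1{j(C+(n-j+1)/n)}}\le\pa{1-\tfrac{k}{n+1}}^{1/(C+1)}$, i.e.\ $P_{\mu,k}$ has Poincar\'e constant $O\pf{(C+1)n}{k}$.

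For part (2), by \cref{d:hom}, $\mu$ satisfying $C$-approximate tensorization of entropy is precisely $(1-\tfrac1{Cn})$-contraction of $\Dn n{n-1}$ of $\mu^{\hom}$ in KL, which is the hypothesis of \cref{t:du}(2). Applying that theorem with $\ell=n-k$ gives that $P_{\mu,k}=\Pdul{n\lra n-k}$ satisfies $\prod_{j=n-k+1}^{n}\pa{1-\tfrac1{j(C+1)}}$-contraction in KL, and the telescoping estimate bounds this by $\pa{1-\tfrac{k}{n+1}}^{1/(C+1)}$, as claimed.

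I do not expect a substantive obstacle: \cref{t:du} already does all the analytic work, and the only genuinely new input is the elementary product bound, whose one slightly nonobvious step is the comparison $1-\tfrac{\al}{j}\le e^{-\al/j}\le\pa{\tfrac{j}{j+1}}^{\al}$. The main thing to be careful about is choosing the correct equivalent formulation in \cref{p:chis-pi}, so that the $\chi^2$-contraction of the down operator is converted into the spectral gap / Poincar\'e constant of $P_{\mu,k}$ without an extraneous square.
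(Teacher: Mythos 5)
Your proof is correct and follows essentially the same route as the paper's: read off \cref{t:k-glauber} from \cref{t:du} via \cref{p:chis-pi} and homogenization, then bound the product $\prod_{j=n-k+1}^n\pa{1-\frac{1}{j(C+1)}}$. The only cosmetic difference is in that last step: the paper applies $1-x\le e^{-x}$ and then $\sum_{j=n-k+1}^n \frac1j \ge \ln\pf{n+1}{n-k+1}$ in one shot, whereas you unpack the same inequality term-by-term as $1-\frac{\al}{j}\le e^{-\al/j}\le\pa{\frac{j}{j+1}}^\al$ and telescope; both yield $\pa{1-\frac{k}{n+1}}^{1/(C+1)}$.
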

\begin{proof}
Note that $P_{\mu,k}$ corresponds to $\Pdul{n\lra n-k}$ after homogenization. 
    For variance, we recall \cref{p:chis-pi} which relates the Poincar\'e constant of $P_\mu$ and $\Dn{n}{n-1}$, and the Poincar\'e constant of $P_{\mu,k}$ and $\Dn{n}{n-k}$. 
    The corollary then follows from \cref{t:du} and the calculation
    \begin{align*}
\prod_{j=n-k+1}^n \pa{1-\rc{j(C+1)}} &\le e^{-\rc{C+1}\sum_{j=n-k+1}^n \rc{j}} 
\le e^{-\rc{C+1} \ln\pf{n+1}{n-k+1}}\\
&= \pa{1-\fc{k}{n+1}}^{\rc{C+1}} = 1-\Om\pf{k}{(C+1)n}.\qedhere
    \end{align*}
\end{proof}

Our theorem also implies contraction for $D_{2\to 1}$; this forms a kind of converse to local-to-global arguments that start with contraction of $\Dn{2}{1}$ \cite{alimohammadi2021fractionally}.  
\begin{cor}
Let $\mu$ be a distribution on $\Om = \prodo in \Om_i'$, and consider the down and up operators defined with respect to its homogenization $\mu^\hom$. 
\begin{enumerate}
    \item If $\Dn{n}{n-1}$ satisfies $(1-\rc{Cn})$-contraction in $\chi^2$, 
    then $\Dn 21$ satisfies $\pa{1-\rc{2(C+1)}}$-contraction in $\chi^2$ and $\la_2(\Pudl{1\lra 2}) \le {1-\rc{2(C+1)}}$.
    \item If $\Dn{n}{n-1}$ satisfies $(1-\rc{Cn})$-contraction in KL (i.e., $\mu$ satisfies $C$-approximate tensorization of entropy), then $\Dn 21$ satisfies $\pa{1-\rc{2(C+1)}}$-contraction in KL.
\end{enumerate}
\end{cor}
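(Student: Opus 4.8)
The plan is to read this off directly from \cref{t:du} with $k=2$ and $\ell=1$, using \cref{p:chis-pi} to translate between contraction of $\Dn{2}{1}$ and the second eigenvalue of the up-down walk. (Implicitly $n \ge 2$, so that $\binom{[n]}{2}$ is nonempty.)

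For the $\chi^2$ statement, I would instantiate \cref{t:du}(1) at $k=2$, $\ell=1$: the product $\prod_{j=\ell+1}^{k}$ has a single factor, giving that $\Dn{2}{1}$ satisfies $\pa{1-\rc{2\pa{C+\frac{n-1}{n}}}}$-contraction in $\chi^2$. Since $\frac{n-1}{n}\le 1$, we have $C+\frac{n-1}{n}\le C+1$, so this contraction factor is at most $1-\rc{2(C+1)}$; as $\rho$-contraction implies $\rho'$-contraction for every $\rho'\ge\rho$, this yields the claimed $\pa{1-\rc{2(C+1)}}$-contraction of $\Dn{2}{1}$. For the eigenvalue bound, note that $\Pdul{2\lra 1}=\Dn{2}{1}\Up{1}{2}$ is of the form $DD^*$ with $D=\Dn{2}{1}$ (using $\Up{1}{2}=(\Dn{2}{1})^*$ as adjoint operators between $L^2(\mu_2)$ and $L^2(\mu_1)$), and that $\Pudl{1\lra 2}=\Up{1}{2}\Dn{2}{1}$ has the same nonzero spectrum as $\Pdul{2\lra 1}$. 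By \cref{p:chis-pi} (the relation $\la_2(DD^*)=\si_2(D)^2$ together with the equivalence of items (1), (2), (4)), the $\pa{1-\rc{2(C+1)}}$-contraction of $\Dn{2}{1}$ in $\chi^2$ gives $\si_2(\Dn{2}{1})^2\le 1-\rc{2(C+1)}$, hence $\la_2(\Pudl{1\lra 2})=\la_2(\Pdul{2\lra 1})\le 1-\rc{2(C+1)}$.

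For the KL statement, \cref{t:du}(2) at $k=2$, $\ell=1$ states directly that $\Dn{2}{1}$ (and $\Pdul{2\lra 1}$) satisfies $\pa{1-\rc{2(C+1)}}$-contraction in KL, so there is nothing further to do. This is genuinely just a corollary: the only step needing a little care is the bookkeeping in the $\chi^2$ eigenvalue bound — making sure \cref{p:chis-pi} is applied with $\Pdul{2\lra 1}$ (not $\Pudl{1\lra 2}$) in the role of $P=DD^*$, and invoking the equality of their nonzero eigenvalues — but there is no real obstacle.
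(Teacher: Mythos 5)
Your proposal is correct and follows essentially the same route as the paper's own proof: substitute $k=2$, $\ell=1$ into \cref{t:du} and invoke \cref{p:chis-pi} for the $\chi^2$/eigenvalue translation. Your extra care in distinguishing $\Pdul{2\lra 1}$ (the $DD^*$ form to which \cref{p:chis-pi}(4) applies) from $\Pudl{1\lra 2}$ and passing between them via equality of nonzero spectra is exactly the bookkeeping the paper leaves implicit.
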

\begin{proof}
    This follows from substituting $k=2$, $\ell=1$ into \cref{t:du}, appealing to \cref{p:chis-pi} for the $\chi^2$ result.
\end{proof}



\section{Parallel sampling for Ising models}

To apply \cref{t:k-glauber} to the Ising model for $\ve{J}<1$, we use the fact that the Ising model satisfies approximate tensorization of entropy. 
\begin{thm}\label{t:atoe-ising}
    Suppose $\ve{J}<1$. Then $\mu_{J,h}$ satisfies approximate tensorization of entropy with constant $\rc{1-\ve{J}}$. 
\end{thm}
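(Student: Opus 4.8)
The plan is to prove the equivalent form from \cref{p:atoe-equiv}: for every $f\ge 0$ on $\Bn$,
\[\Ent_{\mu_{J,h}}[f]\le \rc{1-\ve J}\sumo in \E_{\mu_{J,h}}\ba{\Ent_{\mu_{J,h}(X_i=\cdot\mid X_{\sim i})}[f]},\]
where $\sim i$ denotes all coordinates but $i$. I would use the Hubbard--Stratonovich (Gaussian) lift of the Ising model --- the discrete counterpart of the proximal-sampler decomposition behind $k$-Glauber. Since the diagonal of $J$ is irrelevant on $\Bn$ (as $x_i^2=1$), we may add a diagonal matrix so that $0\prec J\prec I$; then $\mu_{J,h}$ is the $x$-marginal of the law $\pi$ on $\Bn\times\R^n$ under which $\pi(x\mid y)=\prodo in \nu_{y_i+h_i}(x_i)$ is a product of single-spin measures and $\pi(y\mid x)$ is Gaussian, equivalently $\pi_Y(y)\propto\exp\pa{-\rc2 y^\top J^{-1}y+\sumo in\log\cosh(y_i+h_i)}$.

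The argument then has three steps. First, by the chain rule for entropy, $\Ent_{\mu_{J,h}}[f]=\E_{y\sim\pi_Y}[\Ent_{\pi(x\mid y)}[f]]+\Ent_{\pi_Y}[\bar f]$ with $\bar f(y)=\E_{\pi(x\mid y)}[f]$. Because $\log\cosh$ has second derivative at most $1$, $\nabla^2(-\log\pi_Y)\succeq J^{-1}-I$, so $\pi_Y$ is strongly log-concave and, by Bakry--\'Emery, satisfies a log-Sobolev inequality with constant $\tfrac{\ve J}{1-\ve J}$ (here $\ve J=\la_{\max}(J)$ since $J\succ 0$); combining this with the identity $\pl_{y_j}\log\bar f=\Cov_{\pi(x\mid y)}(f,x_j)/\E_{\pi(x\mid y)}[f]$, a Cauchy--Schwarz step over the remaining coordinates, and the elementary one-variable bound $\Cov_\nu(g,s)^2/\E_\nu[g]\le 2\Ent_\nu[g]$ for a $\pm1$-valued $s$, gives $\Ent_{\pi_Y}[\bar f]\le\tfrac{\ve J}{1-\ve J}\E_y[\Ent_{\pi(x\mid y)}[f]]$, hence $\Ent_{\mu_{J,h}}[f]\le\tfrac1{1-\ve J}\E_y[\Ent_{\pi(x\mid y)}[f]]$. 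Second, since $\pi(x\mid y)$ is a product measure its entropy tensorizes exactly: $\E_y[\Ent_{\pi(x\mid y)}[f]]=\sumo in\E_\pi[\Ent_{\pi(x_i\mid y)}[f]]$. Third, the single-site conditional of $\mu_{J,h}$ is the mixture $\mu_{J,h}(X_i=\cdot\mid X_{\sim i}=x_{\sim i})=\E_{y\sim\pi(y\mid x_{\sim i})}[\pi(X_i=\cdot\mid y)]$, so by concavity of $\nu\mapsto\Ent_\nu[g]$ we get $\E_{\mu_{J,h}}[\Ent_{\mu_{J,h}(X_i=\cdot\mid X_{\sim i})}[f]]\ge\E_\pi[\Ent_{\pi(x_i\mid y)}[f]]$; summing over $i$ and chaining with the first two steps yields the displayed inequality with constant $\tfrac1{1-\ve J}$.

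The main obstacle is the passage from spectral to entropic control: it is exactly the one-variable bound $\Cov_\nu(g,s)^2/\E_\nu[g]\le 2\Ent_\nu[g]$, together with checking that its tensorized form composes with the log-Sobolev constant of $\pi_Y$ to give the sharp factor $\tfrac1{1-\ve J}$, that does the work; this is the ``entropic'' strengthening of a mere covariance estimate. A secondary issue is that the diagonal-shift normalization only achieves $0\prec J\prec I$ when $\la_{\max}(J)-\la_{\min}(J)<1$, which is strictly stronger than $\ve J<1$; for a sign-indefinite $J$ with $\ve J<1$ the Gaussian lift as written is unavailable. To cover that case I would instead invoke the uniform covariance bound $\ve{\Cov(\mu_{J',h'})}\le(1-\ve{J'})^{-1}$ --- valid for all external fields $h'$ and all principal submatrices $J'$ of $J$ --- of \cite{eldan2022spectral}, and feed it into the entropic-independence machinery of \cite{anari2021entropic}, which outputs $\tfrac1{1-\ve J}$-approximate tensorization of entropy directly. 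Indeed, the shortest route to the theorem is to quote that the Ising model with $\ve J<1$ is $\tfrac1{1-\ve J}$-entropically independent at every scale \cite{anari2021entropic} and translate this, through the homogenization of \cref{d:hom} and the definition of approximate tensorization, into the stated bound.
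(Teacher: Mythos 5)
Your Hubbard--Stratonovich route is a genuinely different approach from the paper's, and it is essentially correct where it applies: the Bakry--\'Emery bound on $\pi_Y$, the identity $\partial_{y_j}\log\bar f=\Cov_{\pi(x\mid y)}(f,x_j)/\bar f$, the Cauchy--Schwarz step $(\E a)^2/\E b\le\E[a^2/b]$ to localize to one coordinate, the scalar inequality $\Cov_\nu(g,s)^2\le2\E_\nu[g]\Ent_\nu[g]$ (which, after normalizing $\E_\nu[g]=1$, is exactly Pinsker's inequality on the two-point space), the chain rule, and the Jensen step for the mixture representation of $\mu(X_i\mid X_{\sim i})$ all fit together and produce the constant $1/(1-\ve J)$ when $J\succ0$. (One small slip: ``its entropy tensorizes exactly'' should read $\le$, not $=$, but the inequality runs the right way.) However, as you note yourself, the Gaussian lift requires $0\prec J+cI\prec I$ for some diagonal shift $c$, i.e.\ $\la_{\max}(J)-\la_{\min}(J)<1$, which does not follow from $\ve J<1$ (take $J=\diag(0.9,-0.9)$). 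So this argument does not establish the theorem in the stated generality, and even where it applies, the shift degrades the constant when $\la_{\min}(J)<0$.

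The gap is that your fallback is not actually available as a one-line citation. Proposition 32 of \cite{anari2021entropic} (\cref{p:atoe-needle} in the paper) gives approximate tensorization of entropy only for \emph{rank-one} Ising models $\mu_{uu^\top,h}$; the paper's proof bridges from rank one to general $J$ via the \emph{needle decomposition} of \cite{eldan2022spectral} (\cref{thm:needle-decomposition}), writing $\mu_{J,h}=\int\mu_{u,v}\,d\pi(u,v)$ with $\ve u\le\ve J$ and $\E_{\mu_{u,v}}[f]$ constant in $(u,v)$, and then carefully passing the rank-one inequality through the mixture using the chain rule for entropy and the fact that $\Ent_{\pi(\cdot\mid X_{\sim i}=x_{\sim i})}[\E_{\mu_y(X_i=\cdot\mid\cdot)}f]\ge0$. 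That decomposition step is exactly what is missing from your proposal; ``feed it into the entropic-independence machinery'' and ``translate through homogenization'' elide it. If you want a complete argument in the full regime $\ve J<1$ with the sharp constant, the needle decomposition (or an equivalent mixture into needles) is the ingredient you need to add; the Hubbard--Stratonovich calculation you wrote is a nice self-contained proof, but only in the PSD (up to shift) sub-regime.
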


The proof is in \cref{s:ising}. We first introduce a generic guarantee for approximate rejection sampling in \cref{s:ars}, based on establishing concentration for the difference of the log-pdfs. In \cref{s:conc}, we show that using an approximating product distribution---chosen as the solution to a variational problem---is sufficient as a proposal distribution for $\mu_{J,h}$ when $\ve{J}_F$ is small enough, using the Hanson-Wright inequality. We put everything together in \cref{s:proof}, combining the speedup of $k$-Glauber dynamics, known mixing for the Ising model, guarantee on the approximate rejection sampler, with a careful analysis of the recursion in the algorithm.

\subsection{Approximate tensorization of entropy for the Ising model}
\label{s:ising}

For $\ve{J}<1$, we prove approximate tensorization of entropy for the Ising model $\mu_{J,h}$ (\cref{t:atoe-ising}) by using the fact that it holds for rank-1 Ising models and using the 
needle decomposition in \cite{eldan2022spectral}. This is the same way that the modified log-Sobolev inequality is proved in \cite{eldan2022spectral}.

\begin{pr}[{\cite[Proposition 32]{anari2021entropic}}]
\label{p:atoe-needle}
    Suppose $\ve{u}<1$. Then $\mu_{uu^\top,h}$ satisfies approximate tensorization of entropy with constant $\rc{1-\ve{u}^2}$. 
\end{pr}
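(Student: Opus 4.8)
The plan is to realize the rank-one Ising model $\mu:=\mu_{uu^\top,h}$ as a one-dimensional Gaussian mixture of \emph{product} measures (the Hubbard--Stratonovich / needle decomposition), prove approximate tensorization with constant $1$ \emph{within} each product component, and control the \emph{across-component} fluctuation by a log-Sobolev inequality for the mixing density, which is strongly log-concave precisely because $\ve u<1$. Concretely, since $\an{x,uu^\top x}=\an{u,x}^2$ and $e^{s^2/2}=\E_{g\sim\cal N(0,1)}[e^{gs}]$, we have $\mu(x)\propto \E_{g\sim\cal N(0,1)}[\exp(\an{gu+h,x})]$, so $\mu=\E_{g\sim\rho}[\nu_g]$ where $\nu_g(x)\propto\exp(\an{gu+h,x})$ is a product measure on $\{\pm1\}^n$ with partition function $Z_g=\prod_i 2\cosh(gu_i+h_i)$ and $\rho(g)\propto e^{-g^2/2}Z_g$. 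Then $\log\rho(g)=\mathrm{const}-\fc12 g^2+\sum_i\log\cosh(gu_i+h_i)$, and since $(\log\cosh)''\le 1$ we get $-(\log\rho)''\ge 1-\ve u^2>0$; thus $\rho$ is $(1-\ve u^2)$-strongly log-concave, so by Bakry--Émery it satisfies $\Ent_\rho[G]\le\rc{2(1-\ve u^2)}\E_\rho\ba{(G')^2/G}$ for all $G\ge 0$.

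By \cref{p:atoe-equiv} it suffices to prove, for every $f\ge 0$, that $\Ent_\mu[f]\le\rc{1-\ve u^2}\sum_{k}\E_\mu\ba{\Ent_{\mu(X_k=\cdot\,|\,X_{\sim k})}[f]}$. The decomposition of entropy along the mixture gives $\Ent_\mu[f]=\E_{g\sim\rho}\ba{\Ent_{\nu_g}[f]}+\Ent_\rho[F]$ with $F(g):=\E_{\nu_g}[f]$. For the first term, each $\nu_g$ is a product measure, hence enjoys \emph{exact} subadditivity of entropy $\Ent_{\nu_g}[f]\le\sum_k\E_{\nu_g}\ba{\Ent_{\nu_{g,k}}[f]}$, where $\nu_{g,k}=\nu_g(X_k=\cdot\,|\,X_{\sim k})$ is the one-site marginal and depends only on $g$. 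Since $\mu(X_k=\cdot\,|\,X_{\sim k}=x_{\sim k})$ is the $\rho(\cdot\,|\,X_{\sim k}=x_{\sim k})$-mixture of the measures $\nu_{g,k}$ and $\nu\mapsto\Ent_\nu[\phi]$ is concave in $\nu$, averaging over $x_{\sim k}$ (and using that the law of $(x_{\sim k},g)$ is ``$x_{\sim k}\sim\mu$ then $g\sim\rho(\cdot\,|\,x_{\sim k})$'', which equals ``$g\sim\rho$ then $x_{\sim k}\sim\nu_{g,\sim k}$'') yields
\[
\E_{g\sim\rho}\ba{\Ent_{\nu_g}[f]}\le\sum_k\E_{g\sim\rho}\E_{\nu_g}\ba{\Ent_{\nu_{g,k}}[f]}\le\sum_k\E_\mu\ba{\Ent_{\mu(X_k=\cdot\,|\,X_{\sim k})}[f]}.
\]

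For the across-component term, differentiating in $g$ and using $\partial_g\nu_g(x)=\nu_g(x)\pa{\an{u,x}-\E_{\nu_g}\an{u,X}}$ gives $F'(g)=\Cov_{\nu_g}(\an{u,X},f)=\sum_k u_k\Cov_{\nu_g}(X_k,f)$, so $(F')^2\le\ve u^2\sum_k\Cov_{\nu_g}(X_k,f)^2$ by Cauchy--Schwarz. Writing $\Cov_{\nu_g}(X_k,f)=\E_{x_{\sim k}\sim\nu_{g,\sim k}}\ba{\Cov_{\nu_{g,k}}(X_k,f(\cdot,x_{\sim k}))}$, a second Cauchy--Schwarz in $x_{\sim k}$ combined with the two-point inequality $\Cov_{\nu_{g,k}}(X_k,\phi)^2\le 2\,\E_{\nu_{g,k}}[\phi]\cdot\Ent_{\nu_{g,k}}[\phi]$ for $\phi\ge0$ gives $\Cov_{\nu_g}(X_k,f)^2/F(g)\le 2\,\E_{\nu_g}\ba{\Ent_{\nu_{g,k}}[f]}$. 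Feeding this into the log-Sobolev inequality for $\rho$ and applying the concavity bound of the previous paragraph once more,
\[
\Ent_\rho[F]\le\rc{2(1-\ve u^2)}\cdot 2\ve u^2\sum_k\E_{g\sim\rho}\E_{\nu_g}\ba{\Ent_{\nu_{g,k}}[f]}\le\fc{\ve u^2}{1-\ve u^2}\sum_k\E_\mu\ba{\Ent_{\mu(X_k=\cdot\,|\,X_{\sim k})}[f]}.
\]
Adding the two contributions gives the constant $1+\fc{\ve u^2}{1-\ve u^2}=\fc1{1-\ve u^2}$.

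The main obstacle is not conceptual but a matter of tracking constants: one must prove the two-point inequality with the sharp constant $2$, since a loss there (or in the two Cauchy--Schwarz steps, which are worst-case tight) would spoil the exact factor $\rc{1-\ve u^2}$. On $\{\pm1\}$ with $\nu(+1)=p$, $q=1-p$, one has $\Cov_\nu(X,\phi)^2=4pq\,\Var_\nu(\phi)$; normalizing $\E_\nu[\phi]=1$ and writing $\phi=(1+qt,\,1-pt)$, the claim reduces to $\psi(t):=p(1+qt)\log(1+qt)+q(1-pt)\log(1-pt)-2p^2q^2t^2\ge 0$ on the range $t\in(-1/q,1/p)$. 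Here $\psi(0)=\psi'(0)=0$ and $\psi''(t)=pq\pa{\fc{q}{1+qt}+\fc{p}{1-pt}-4pq}\ge 0$, because $t\mapsto\fc{q}{1+qt}+\fc{p}{1-pt}$ attains its minimum value $4pq$ at $t=\fc{q-p}{2pq}$; hence $\psi$ is convex with a critical point at $0$, so $\psi\ge 0$. (When a coordinate slice of $f$ vanishes the corresponding ratio is read as $0$, or one replaces $f$ by $f+\ep$ and lets $\ep\to 0$; and since $\nu_g>0$ everywhere, $F>0$ whenever $f\not\equiv 0$, so the log-Sobolev step is applied to a strictly positive function.)
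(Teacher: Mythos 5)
Your argument is correct and complete, and it gives a self-contained proof of a statement the paper only cites. I verified the key steps: the Hubbard--Stratonovich decomposition $\mu=\E_{g\sim\rho}[\nu_g]$ with $\rho(g)\propto e^{-g^2/2}\prod_i 2\cosh(gu_i+h_i)$; the strong log-concavity $-(\log\rho)''\ge 1-\ve u^2$ from $(\log\cosh)''\le 1$; the Bakry--\'Emery log-Sobolev inequality in the form $\Ent_\rho[G]\le\tfrac1{2\kappa}\E_\rho[(G')^2/G]$; the entropy chain rule $\Ent_\mu[f]=\E_\rho[\Ent_{\nu_g}[f]]+\Ent_\rho[F]$; the identification $\mu(X_k=\cdot\,|\,x_{\sim k})=\E_{g\sim\rho(\cdot|x_{\sim k})}[\nu_{g,k}]$ together with concavity of $\nu\mapsto\Ent_\nu[\phi]$; the conditioning identity $\Cov_{\nu_g}(X_k,f)=\E_{x_{\sim k}}[\Cov_{\nu_{g,k}}(X_k,f)]$ valid for products; and the two-point bound $\Cov_\nu(X,\phi)^2\le 2\E_\nu[\phi]\Ent_\nu[\phi]$, whose proof via $\psi''(t)=pq(\tfrac{q}{1+qt}+\tfrac{p}{1-pt}-4pq)\ge 0$ with minimum attained at $t=\tfrac{q-p}{2pq}$ is correct and gives the sharp constant $2$ (tested against $\phi=1\pm\ep$, where it is asymptotically tight). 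The constant bookkeeping $1+\tfrac{\ve u^2}{1-\ve u^2}=\tfrac1{1-\ve u^2}$ is exact. This paper simply cites \cite[Proposition~32]{anari2021entropic}; that reference derives rank-one approximate tensorization within the entropic-independence framework (via fractional log-concavity of the generating polynomial and entropic stability under external fields). Your route is genuinely different --- it replaces that machinery with the elementary Gaussian-mixture decomposition, a scalar Bakry--\'Emery inequality for the mixing density, and a two-point covariance--entropy bound --- and in this sense it is more self-contained and closer in spirit to the Bauerschmidt--Bodineau/Eldan--Chen style of arguments, while recovering exactly the same constant.
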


\begin{thm}[Needle decomposition of Ising measures \cite{eldan2022spectral}]
\label{thm:needle-decomposition}
Consider an Ising model $\mu=\mu_{J,h}$ on $\Bn$ with $J\succeq 0$. Let $f : \{\pm 1\}^n \to \mathbb{R}$ be any function. There exists a mixture decomposition (depending on $f$)
\[ \mu(x) = \int \mu_{u,v}(x) \,d\pi(u,v) \]
where $\pi$ is a probability measure on $\mathbb{R}^{2n}$ such that:
\begin{enumerate}
    \item $\pi$-almost surely, $\mu_{u,v}$ is a probability measure of the form
    \[ \mu_{u,v}(x) = \frac{1}{Z_{u,v}} \exp\pa{\frac{1}{2} \an{u, x}^2 + \an{v, x}}, \]
    i.e., a rank one Ising model (``needle''). Furthermore $\ve{u} \le \ve{J}$.
    \item $\E_{\mu_{u,v}}{f(X)} = \E_{\mu}{f(X)}$ $\pi$-almost surely.
\end{enumerate}
\end{thm}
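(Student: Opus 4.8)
The plan is to follow the strategy of \cite{eldan2022spectral}: first realize $\mu$ as a mixture of rank-one Ising models via a Hubbard--Stratonovich (Gaussian integral) identity, which gives everything but property~(2), and then run the same decomposition as a continuous-time measure-valued martingale whose ``$f$-marginal'' $\E_{\mu_t}[f]$ is held frozen, which forces property~(2).

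\emph{The easy part.} Diagonalize $J=\sum_{k=1}^n\lambda_k v_kv_k^\top$ with $\lambda_1=\ve J\ge\lambda_2\ge\cdots\ge 0$, put $c=\min\{\lambda_1,\lambda_1^2\}$, $u=\sqrt c\,v_1$, and split $J=uu^\top+Q$ with $Q=\sum_{k\ge 2}\lambda_k v_kv_k^\top+(\lambda_1-c)v_1v_1^\top\succeq 0$; the choice of $c$ gives $\ve u=\sqrt c\le\ve J$. Since $\exp\pa{\tfrac12\an{x,Qx}}=\E_{y\sim\mathcal N(0,Q)}\ba{e^{\an{y,x}}}$, reweighting $\mathcal N(0,Q)$ by the partition function of $\mu_{uu^\top,h+y}$ gives $\mu=\int\mu_{u,h+y}\,d\pi_0(y)$, where $\mu_{u,v}(x)\propto\exp\pa{\tfrac12\an{u,x}^2+\an{v,x}}$ is a rank-one needle and $\pi_0$ is a probability measure on $\R^n$. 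This already gives~(1) and the norm bound, but the pieces need not share a common $f$-mean, so~(2) can fail.

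\emph{Upgrading to~(2).} Replace the one-shot integral by a continuous interpolation: let $B_t$ be standard Brownian motion on $\R^n$ and $(C_t)_{t\in[0,1]}$ an adapted path of PSD matrices with $\int_0^1 C_s\,ds=Q$, and run the measure-valued martingale $d\mu_t(x)=\mu_t(x)\an{x-\E_{\mu_t}[X],\,C_t^{1/2}\,dB_t}$ from $\mu_0=\mu$. Then $\mu_t$ stays an Ising model with quadratic part $J_t=J-\int_0^t C_s\,ds$, so $J_1=uu^\top$ and $\mu_1=\mu_{u,h_1}$ is a needle with the \emph{same} $u$ (hence $\ve u\le\ve J$); and for any $g$, $\E_{\mu_t}[g]$ is a martingale, with $d\E_{\mu_t}[f]=\an{\Cov_{\mu_t}(f,X),\,C_t^{1/2}\,dB_t}$. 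The key point is that if $C_t$ is chosen so that $\Cov_{\mu_t}(f,X)\in\ker C_t$ for all $t$, then $\E_{\mu_t}[f]$ has zero quadratic variation, hence is constant and equal to $\E_\mu[f]$. Taking $\pi$ to be the law of $(u,h_1)$, the martingale property gives $\mu=\E[\mu_1]=\int\mu_{u,v}\,d\pi(u,v)$; property~(1) and $\ve u\le\ve J$ are built into the construction, and property~(2) holds because $\E_{\mu_1}[f]=\E_\mu[f]$.

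\emph{Main obstacle.} The crux is producing the control $(C_t)$: an adapted PSD-valued path whose kernel contains the \emph{moving} vector $\Cov_{\mu_t}(f,X)$ at every time and which nonetheless integrates to the prescribed matrix $Q$ over $[0,1]$ (so that $J_t$ genuinely collapses onto $uu^\top$) --- i.e.\ the single moving constraint must never obstruct reaching the target --- together with well-posedness of the measure-valued SDE. These technical points could alternatively be handled by a careful discrete-time recursive analogue, in the spirit of the Lov\'asz--Simonovits localization lemma, peeling one eigendirection of the quadratic part at a time while keeping every piece's $f$-mean pinned, at the cost of heavier bookkeeping. The remaining verifications --- that $\mu_t$ is an Ising model with quadratic part $J_t$, that $\E_{\mu_t}[g]$ is a martingale, and that $\mu_1$ is genuinely a needle (not a point mass) with $h_1$ integrable --- are routine It\^o calculus and bounded-support moment estimates, where I would not expect surprises.
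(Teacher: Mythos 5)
The paper does not prove this theorem; it is imported as a black box from \cite{eldan2022spectral}, so there is no in-paper argument to compare against. Your stochastic-localization plan is in the right spirit, and you are right that the crux is constructing an adapted PSD control $(C_t)$ whose kernel tracks $\Cov_{\mu_t}(f,X)$ while its time-integral hits a prescribed target.

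However, the obstacle you flag is not a technicality you can defer --- in the form you set up, it is a genuine obstruction. Fixing $u$ deterministically from the top eigenvector of $J$ commits the entire localization budget to the fixed matrix $Q=J-uu^\top$, and the constraints $C_t\succeq 0$, $\int_0^1 C_t\,dt=Q$ force $\operatorname{range}(C_t)\subseteq\operatorname{range}(Q)$ for a.e.\ $t$ (take $w\in\ker Q$: then $\int_0^1 w^\top C_t w\,dt=0$ with a nonnegative integrand, so $C_tw=0$ a.e.). When $Q$ is rank-deficient the added requirement $\Cov_{\mu_t}(f,X)\in\ker C_t$ becomes unsatisfiable: for instance if $J$ has rank $1$ and $\ve{J}<1$, your choice gives $u=\ve{J}\,v_1$ and $Q=(\ve{J}-\ve{J}^2)\,v_1v_1^\top$, so $C_t$ must be a nonnegative multiple of $v_1v_1^\top$, and whenever $\Cov_{\mu_t}(f,X)\not\perp v_1$ (a generic condition; e.g.\ $n=2$, $v_1=(1,1)/\sqrt 2$, $h=0$, $f(x)=x_1$) the only admissible control is $C_t=0$, which freezes the process at $\mu_0=\mu$ and makes $\int_0^1 C_t\,dt=0\ne Q$. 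This failure occurs precisely in the regime $\ve{J}<1$ the paper cares about. The repair is to let the needle direction $u$ be \emph{random} --- a function of the realization --- e.g.\ by stopping the localization at the first random time the quadratic part drops to rank one, rather than steering toward a single pre-chosen $uu^\top$; equivalently, a Lov\'asz--Simonovits-style bisection that peels off one data-dependent direction per step while pinning $\E_{\mu_t}[f]$. You gesture at this discrete alternative but do not carry it out, and that is where the actual proof lives, including the well-posedness and stopping-time arguments you wave off as routine.
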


\begin{proof}[Proof of \cref{t:atoe-ising}]
Let $C=\rc{1-\ve{J}}$. 
We use the needle decomposition (\cref{thm:needle-decomposition}) to decompose the entropy, noting $\E_{\mu_y}[f]$ is constant. Let $y=(u,v)$ in the needle decomposition.
\begin{align}
\nonumber
    \Ent_\mu[f] &= 
    \E_{y\sim \pi} [\Ent_{\mu_y}[f]] + \Ent_{y\sim \pi} [\E_{\mu_y}f]
    = \E_{y\sim \pi} [\Ent_{\mu_y}[f]]\\
    &\le \E_{y\sim \pi} C \sum_{i=1}^n \E_{x\sim \mu_y} [\Ent_{\mu_y(X_i=\cdot|X_{\sim i}=x_{\sim i})}[f]] 
    \label{e:use-atoe}\\
    &= C \E_{x\sim \mu} \sum_{i=1}^n \E_{y\sim \pi(\cdot |X_{\sim i}=x_{\sim i})}[\Ent_{\mu_y(X_i=\cdot|X_{\sim i}=x_{\sim i})}[f]] 
    \label{e:atoe-ising}
\end{align}
In~\eqref{e:use-atoe} we use \cref{p:atoe-needle} and \cref{p:atoe-equiv}. Here, $\pi(\cdot |X_{\sim i}=x_{\sim i})$ denotes the conditional distribution of $y$ given $x_{\sim i}$, when we have $y\sim \pi$ and $x\sim \mu_y$. 
Also by entropy decomposition,
\begin{align*}
    \Ent_{\mu(X_i=\cdot |X_{\sim i}=x_{\sim i})}[f] 
    &= \E_{y\sim \pi(\cdot |X_{\sim i}=x_{\sim i})} [\Ent_{\mu_y(X_i=\cdot|X_{\sim i}=x_{\sim i})}[f]]
    + \Ent_{\pi(\cdot |X_{\sim i}=x_{\sim i})}[\E_{\mu_y(X_i=\cdot|X_{\sim i}=x_{\sim i})} [f]].
\end{align*}
Taking expectation over $x\sim \mu$, 
\begin{align*}
    \E_{x\sim \mu} \Ent_{\mu(X_i=\cdot |X_{\sim i}=x_{\sim i})} [f]
    &\ge \E_{x\sim \mu}\E_{y\sim \pi(\cdot |X_{\sim i}=x_{\sim i})} [\Ent_{\mu_y(X_i=\cdot|X_{\sim i}=x_{\sim i})}[f]]
\end{align*}
Applying this to \eqref{e:atoe-ising} 
gives 
\begin{align*}
    \Ent_\mu[f] &\le C \sumo in \E_{x\sim \mu}[ \Ent_{\mu(X_i=\cdot |X_{\sim i}=x_{\sim i})}[f]].
\end{align*}
By \cref{p:atoe-equiv}, $\mu$ hence satisfies $C$-approximate tensorization of entropy.
\end{proof}

\subsection{Approximate rejection sampling}
\label{s:ars}
Conditional distributions of the Ising model are again Ising models. We will show that with large probability, if we pick a random subset of not-too-large size, then we can approximate the distribution of those coordinates with a product distribution, and hence use the product distribution as a proposal for approximate rejection sampling. We discuss further the choice of product distribution (given in Algorithm~\ref{a:qars}) in Section~\ref{s:conc}.

First, we give a generic guarantee for the rejection sampling \cref{a:ars}, which can be used whenever the log-ratio between desired and proposal distributions $\ln \dd{P}{Q}$ has sufficiently decaying exponential tails. Note that because the normalizing constants are unknown, we draw two samples and use one as a reference to decide whether to accept the other. 
\cref{l:ars} appears as \cite[Lemma 2]{fan2023improved} specialized to the distribution they consider, but holds more generally. We give the proof for completeness.

\begin{algorithm}[h!]
\caption{Quadratic approximate rejection sampler ($\mathsf{QuadraticApproxRejectionSampler}$)}
\begin{algorithmic}[1]
\INPUT Hamiltonian $H$ in the form $H(x) = C + \an{h,x} + \rc 2\an{x,Ax} + H_{\ge 3}(x)$, $\de$ such that $\ve{A^{\invdiameter}}_F\le \de < c_3$ and $0\preceq A \preceq (1-c)I$, error $\ep$. 
\State Let $u_0= \mathbf 0$.
\For{$t$ from 1 to 
$T=\Te\pa{\rc{c} \ln \pf{\sqrt n}{\de}}$}
    \State Let $u_t = A^{\invdiameter}\tanh(h+u_{t-1})$.
\EndFor
\State Let $\wh h = u_T$. 
\OUTPUT $\mathsf{ApproxRejectionSampler}(q(x) \propto \exp\pa{\langle{h+\wh h,x}\rangle, g(x)= H(x) - \langle{h + \wh h, x}\rangle, \pf{2}{\ep}^{\fc{\de}{c_3-\de}}}$ (See Algorithm~\ref{a:ars}.)
\end{algorithmic}
\label{a:qars}
\end{algorithm}

\begin{algorithm}[h!]
\caption{Approximate rejection sampler ($\mathsf{ApproxRejectionSampler}$)}
\begin{algorithmic}[1]
\INPUT Oracle for sampling from $Q$, function $g$ such that $\dd{P}{Q} \propto e^{g}$, error parameter $c$.
\Repeat{ \blu{(For parallel implementation, run $\ce{c}$ times simultaneously and take the first success.)}}
    \State Draw $X,Z\sim Q$.
    \State Let $R=\exp(g(X)-g(Z))$.
    \State Draw $U\sim \mathsf{Uniform}([0,1])$.
\Until{$U\le \rc{c}R$}
\OUTPUT $X$.
\end{algorithmic}
\label{a:ars}
\end{algorithm}

\begin{lem}
\label{l:ars}
Let $\wh P$ be the distribution of the output of Algorithm~\ref{a:ars}. 
Then 
\begin{align*}
\dd{P}{Q}(X) &= \fc{\E[R|X]}{\E R} & \dd{\wh P}{Q} (X)&= \fc{\E[\min\{R,c\}|X]}{\E[\min\{R,c\}]}.
\end{align*}
The acceptance probability is $p_{\textup{accept}}= \rc{c}\E \min\{R,c\} = \rc{c}(\E R - \E[(R-c)\one_{R\ge c}])$ and the TV distance is bounded by
\[
\TV(\wh P, P) \le \fc{\E[(R-c)\one_{R\ge c}]}{\E R}.
\]
For $c\ge1$, the acceptance probability is at least $\rc{2c}$.
\end{lem}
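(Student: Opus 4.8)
The plan is to read everything off a single iteration of the loop, using that the iterations are i.i.d.\ and the output equals $X$ from the first accepted iteration, so that $\wh P$ is the law of $X$ conditioned on acceptance in a single iteration. Condition on the value of $X$ in one iteration: since $U\sim\mathsf{Uniform}([0,1])$ is independent of $(X,Z)$ and $R>0$, the iteration accepts with conditional probability $\E\ba{\min\bc{\rc cR,1}\mid X} = \rc c\,\E\ba{\min\bc{R,c}\mid X}$. The usual rejection-sampling bookkeeping (output density relative to $Q$ is proportional to the per-iteration acceptance probability, then normalize) then yields the stated formula for $\dd{\wh P}Q$ together with $p_{\textup{accept}} = \rc c\E\min\bc{R,c}$. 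For $\dd PQ$ I would use that $X,Z$ are i.i.d.\ from $Q$, so $\E[R\mid X=x] = e^{g(x)}\,\E_{Z\sim Q}[e^{-g(Z)}]$ while $\E R = \E_{X\sim Q}[e^{g(X)}]\cdot\E_{Z\sim Q}[e^{-g(Z)}]$, whose ratio is $e^{g(x)}/\E_Q[e^g] = \dd PQ(x)$ because $\dd PQ\propto e^g$. The alternative expression for $p_{\textup{accept}}$ is just the identity $\min\bc{R,c} = R-(R-c)\one_{R\ge c}$.

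For the TV bound I would pass to $\dd{\wh P}P = \frac wm$, where $w(x) := \E[\min\bc{R,c}\mid X=x]\big/\E[R\mid X=x]\in[0,1]$ and $m := \E_P[w] = \E\min\bc{R,c}/\E R$; note that $1-m$ is exactly the claimed bound. Since $\E_P[w-m]=0$ we get $\TV(\wh P,P) = \frac1{2m}\E_P\ab{w-m} = \frac1m\E_P\ba{(m-w)^+}$, so it suffices to show $\E_P\ba{(m-w)^+}\le m(1-m)$. Writing $q := \Pr_P[w\ge m]$: on one hand $(m-w)^+\le m\one_{w<m}$ gives $\E_P\ba{(m-w)^+}\le m(1-q)$; on the other, $\E_P\ba{(m-w)^+} = m\Pr_P[w<m]-\E_P[w\one_{w<m}]$ and $\E_P[w\one_{w<m}] = m-\E_P[w\one_{w\ge m}]\ge m-q$ (since $w\le1$), giving $\E_P\ba{(m-w)^+}\le q(1-m)$. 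As $\min\bc{m(1-q),\,q(1-m)}\le m(1-m)$ for every $q\in[0,1]$ (the left side is maximized at $q=m$), the bound follows.

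For the acceptance lower bound I would exploit the symmetry of the construction: swapping $X$ and $Z$ (both i.i.d.\ $Q$) shows $R\stackrel{d}{=}1/R$, and since $\min\bc{R,1}+\min\bc{1/R,1}\ge1$ pointwise, taking expectations gives $\E\min\bc{R,1}\ge\rc2$; hence for $c\ge1$, $p_{\textup{accept}} = \rc c\E\min\bc{R,c}\ge\rc c\E\min\bc{R,1}\ge\rc{2c}$.

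The main obstacle is the TV estimate: either one-sided bound on $\E_P\ba{(m-w)^+}$ alone yields only $\TV(\wh P,P)\le \E[(R-c)\one_{R\ge c}]/\E\min\bc{R,c}$, i.e.\ with denominator $\E\min\bc{R,c}$ rather than $\E R$; it is the combination of \emph{both} one-sided bounds, via $\min\bc{m(1-q),q(1-m)}\le m(1-m)$, that sharpens the denominator to $\E R$. Everything else is routine accounting for the loop.
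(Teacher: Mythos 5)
Your proof is correct. The derivations of the two Radon--Nikodym formulas, the acceptance probability, and its lower bound for $c\ge 1$ match the paper's in substance (for the lower bound you bound $\E\min\{R,1\}\ge \tfrac12$ via the pointwise inequality $\min\{R,1\}+\min\{1/R,1\}\ge 1$, while the paper bounds $\E\min\{R,c\}\ge \Pj(R\ge 1)\ge\tfrac12$; both are essentially the same symmetry argument).

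The TV bound, however, is where you take a genuinely different and noticeably more intricate route. The paper stays on the $Q$-side and uses the representation $\TV(\wh P,P)=\E_Q\bigl(\tfrac{dP}{dQ}-\tfrac{d\wh P}{dQ}\bigr)^+$; the single observation $cp_{\textup{accept}}=\E\min\{R,c\}\le \E R$ then lets it replace the denominator $cp_{\textup{accept}}$ by $\E R$ in the second fraction, after which the positive part becomes $\E[(R-c)\one_{R\ge c}\mid X]/\E R$ exactly and the bound drops out in one line. You instead pass to $\tfrac{d\wh P}{dP}=w/m$ with $w\in[0,1]$ and $m=\E_P w$, reduce to $\E_P[(m-w)^+]\le m(1-m)$, and prove this by combining the two one-sided bounds $\E_P[(m-w)^+]\le m(1-q)$ and $\E_P[(m-w)^+]\le q(1-m)$ (with $q=\Pr_P[w\ge m]$), using $\min\{m(1-q),q(1-m)\}\le m(1-m)$. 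This is valid -- the algebra checks out -- and it is a nice self-contained inequality for bounded Radon--Nikodym ratios, but the paper's replacement of denominators is a shortcut that sidesteps the need for the two-sided argument entirely. Your observation that each one-sided bound alone only yields the weaker denominator $\E\min\{R,c\}$ is accurate; it just isn't an obstacle on the paper's route.
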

\begin{proof}
    We calculate
    \begin{align*}
        \E[R|X] &= \E[e^{g(X)-g(Z)}|X] = e^{g(X)}\E[e^{-g(Z)}] \\
        \E[R] &= \E[\E[R|X]] = 
        \E[e^{g(X)}] \E [e^{-g(Z)}],
    \end{align*}
    so $\fc{\E[R|X]}{\E[R]}  = \fc{e^{g(X)}}{\E[e^{g(X)}]} = \dd PQ (X)$.
    Note $\dd{\wh P}{Q}(X)$ is the probability of acceptance given $X$ divided by the total probability of acceptance, which we calculate:
    \begin{align*}
        p_{\textup{accept}}(X):=\Pj\ba{U\le \rc cR|X} & = \E\ba{\min\bc{\fc Rc, 1}|X} = \rc c 
        \E[\min\{R, c\}|X]
        \\
        p_{\textup{accept}}=\Pj\ba{U\le \rc cR} &= \E\ba{\Pj\ba{U\le \rc cR|X}}
        = \rc c \E[\E[\min\{R, c\}|X]] = \rc c \E[\min\{R, c\}].
    \end{align*}
    Dividing gives $\dd{\wh P}{Q} = 
    \fc{\E[\min\{R,c\}|X]}{\E[\min\{R,c\}]}$.
    Then
    \begin{align*}
        \TV(\wh P, P)
        &\le \E_{X\sim Q} \max\bc{0, \dd PQ(X)- \dd{\wh P}{Q}(X)}\\
        &\le \E_{X\sim Q} \max\bc{0, 
        \fc{\E[R|X]}{\E R} - \fc{\E[\min\{R,c\}|X]}{cp_{\textup{accept}}}}
        \\
        &\le \E_{X\sim Q} \max\bc{0, 
        \fc{\E[R|X]}{\E R} - \fc{\E[\min\{R,c\}|X]}{\E R}}& \text{because }cp_{\textup{accept}}\le \E R\\
        & \le \E_{X\sim Q}\fc{\E[(R-c)\one_{R\ge c}|X]}{\E R} = \fc{\E[(R-c)\one_{R\ge c}]}{\E R} .&
    \end{align*}
    Finally, note that $\Pj(R\ge 1)\ge \rc2$ by symmetry, so $p_{\textrm{accept}}\ge \rc c \Pj(R\ge 1) \ge \rc{2c}$. 
\end{proof}

\subsection{Concentration}
\label{s:conc}

To obtain concentration of the ratio in~\cref{l:ars}, we need a version of the Hanson-Wright inequality. We first state the classical inequality.
\begin{thm}[{Hanson-Wright Inequality, \cite[Theorem 6.2.1]{vershynin2018high}}]
\label{t:hwi}
There is a constant $c$ such that the following holds.
Let $X=(X_1,\ldots, X_n)\in \R^n$ be a random vector with independent, mean-zero, $K$-sub-gaussian coordinates. Let $A\in \R^{n\times n}$ be a matrix. Then for every $t\ge 0$,
\[
\Pj\pa{|\an{X,AX} - \E \an{X,AX}|\ge t}\le 2\exp\ba{-c \min\bc{\fc{t^2}{K^4\ve{A}_F^2}, \fc{t}{K^2\ve{A}}}}.
\]
\end{thm} 
We will use the following version, which is a consequence of~\cite[Corollary 2]{sambale2019modified} (by taking $\Ga(f) = \ve{\nb f}$ and noting that product distributions on $\Bn$ satisfy a uniform modified log-Sobolev inequality) and allows a general function $f$. 
\begin{thm}[\cite{sambale2019modified}]\label{t:hwi-general}
    There is a constant $c$ such that the following holds. Let $X=(X_1,\ldots, X_n)\in \Bn$ be a random vector with independent coordinates. Let $f:\Bn\to \R$ be a function. Then for every $t\ge 0$,
    \[
\Pj\pa{
    |f(X) - \E f(X)|\ge t
} \le 
2\exp\ba{
    -c \min \bc{
        \fc{t^2}{\E[\ve{\nb f}^2]},
        \fc{t}{\max_{x\in \Bn} \ve{\nb^2f}_F}
    }
}.
    \]
\end{thm}
Note the extra term $\E[\ve{\nb f}^2]$ compared to 
Theorem~\ref{t:hwi}
which requires the random variables to be centered.
This means that we cannot simply take $Q = \mu_{h} := \mu_{O,h}$ and $P = \mu_{J,h}$ for the reason that $\E[\ve{\nb_x(\an{x,Ax})}^2]$ can be $\Om(n)$, while we need concentration to $O(1)$. 
Instead, in order to apply Theorem~\ref{t:hwi-general} for $f=\ln \dd PQ$, we would like to $\nb f$ to be centered, that is, $\E_Q \nb f = 0$. 
For this, we need to solve the variational problem
    \begin{align}
        \label{e:fp}
        \E_{\mu_{h+h^*}} A^{\invdiameter}x &= h^* 
        .
    \end{align}
We do this by fixed point iteration. Note this is a special case of ``gradient" descent for Lipschitz and strongly monotone operators \cite{chen1997convergence,loizou2021stochastic}.
\begin{lem}[Fixed point iteration]
\label{l:fp}
Let $(X,d)$ be a metric space, $c>0$, and suppose $F:X\to X$ is $(1-c)$-Lipschitz (so it is a contraction mapping). 
Let $x_0\in X$ and $x_t = F^{(t)}(x_0)$. Then 
\[
d(F(x_t) , x_t) \le (1-c)^t d(F(x_0), x_0)
\]
and hence for $t=\Om\pa{\rc c \ln \pf{d(F(x_0),x_0)}{\ep}}$, we have 
$d(F(x_t) , x_t)\le \ep$. 
\end{lem}
\begin{proof}
    We have $d(F(x_t),x_t) = d(F^{(t+1)}(x_0), F^{(t)}(x_0)) \le (1-c)^t d(F(x_0), x_0)$ by induction. Hence, it suffices to choose $t$ such that $t\ln \prc{1-c} \ge \ln \pf{d(F(x_0),x_0)}{\ep}$, which gives the result.
\end{proof}
\begin{lem}\label{l:varl}
Suppose that $A$ is symmetric positive semi-definite with $A \preceq (1-c)I$. 
Let \[
F(u) = A^{\invdiameter} \tanh(h+u).
    \]
Then for $t = \Om\pa{\rc{c}\log\pf{\sqrt n}{\ep}}$, we have that $\wh h:=F^{(t)}(\mathbf 0)$ satisfies
    \begin{align}
        \label{e:afp}
        \ve{\E_{\mu_{h+\wh h}} A^{\invdiameter}x - \wh h} &\le \ep.
    \end{align}
\end{lem}
\begin{proof}
    Note that all diagonal entries of $A$ are contained in $[0,1-c]$, so $-(1-c)I \preceq A^{\invdiameter}\preceq (1-c)I$.
    Combining this with the fact that $\tanh$ is 1-Lipschitz, we obtain that $F$ is $(1-c)$-Lipschitz. %
    Note that $\ve{F(\mathbf 0) - \mathbf 0}\le \sqrt{n}$. The result then follows from~\cref{l:fp} and the fact that $F(u) = \E_{\mu_{h+u}}A^{\invdiameter}x$.
\end{proof}
Using this, for small enough $\ve{A}_F$, we can obtain the exponential tails necessary to bound the TV-distance in Lemma~\ref{l:ars}. This fits in with the general fact that Ising models with small $\ve{J}_F$ are well-approximated by product distributions~\cite{jain2019mean}, giving approximation guarantees for variational methods in this regime.
We state the following lemma more generally with a higher-order term, so that we can also apply it for the $p$-spin model.
\begin{lem}\label{l:pacc}
There is a constant $c_3$ such that the following holds. Suppose $H(x)=\an{h,x} + \rc 2\an{x,Ax} + H_{\ge 3}(x)$ where $A$ is symmetric and $H_{\ge 3}(x) = \sum_{|I|\ge 3} a_Ix^I$ contains the terms of degree $\ge 3$. If $\de<c_3$,
\[\max_{x\in \Bn}
\ve{\nb^2 H(x)}_F
\le \de\quad \text{ and }\quad \max_{x\in \Bn} \ve{\nb H_{\ge 3}(x)}\le \de,\] then the output of $\mathsf{QuadraticApproxRejectionSampler}$ (Algorithm~\ref{a:qars}) is at most $\ep$ in TV distance from $\mu$, and the acceptance probability in the call to $\mathsf{ApproxRejectionSampler}$ is at least $\rc{2c} = \rc 2 \pf{\ep}{2}^{\fc{\de}{c_3-\de}}$.  
\end{lem}
In the special case that $H_{\ge 3}(x)=0$, the assumption simplifies to $\ve{A^{\invdiameter}}_F\le \de$. 
\begin{proof}
    Let
    \[
f(x) = H(x) - \an{h+\wh h, x} = \rc 2 \an{x,Ax} - \an{\wh h, x} + H_{\ge 3}(x).
    \]
    To use Theorem~\ref{t:hwi-general}, we calculate $\E[\ve{\nb f}^2]$. 
    First note that 
    $\E_{x\sim \mathsf{Uniform}(\Bn)} \nb^2 H_{\ge 3}(x)=O$ (because for any $i,j\in [n]$ and $|I|\ge 3$, we have $\E_{x\sim \mathsf{Uniform}(\Bn)} x^{I\bs \{i,j\}} = 0$) so 
    by Jensen's inequality
    \begin{align*}
        \ve{A^{\invdiameter}}_F &= 
        \ve{A^{\invdiameter} + \E_{x\sim \mathsf{Uniform}(\Bn)} \nb^2 H_{\ge 3}(x)}_F\\
        &\le 
        \E_{x\sim \mathsf{Uniform}(\Bn)} \ve{A^{\invdiameter} +  \nb^2 H_{\ge 3}(x)}_F
        \le \max_{x\in \Bn} \ve{\nb^2 H(x)}_F \le \de.
    \end{align*}
    Let $\ol x = \E_{\mu_{h+\wh h}}x$. 
    From \cref{l:varl} we have that the output $\wh h$ of fixed point iteration satisfies $\ve{A^{\invdiameter}\ol x - \wh h}\le \de$.
    We then have
\begin{align}
\nonumber
    \E_{\mu_{h+\wh h}}[\ve{\nb f}^2]
    &= \E_{\mu_{h+\wh h}}\ba{\ve{A^{\invdiameter}x-\wh h + \nb H_{\ge 3}(x)}^2}\\
    \nonumber 
    &\le 2\E_{\mu_{h+\wh h}}\ba{\ve{A^{\invdiameter}(x-\ol x) + A^{\invdiameter}\ol x - \wh h}^2} + 2\E_{\mu_{h+\wh h}}\ba{\ve{\nb H_{\ge 3}(x)}^2}\\
    \nonumber
    &= 2\E_{\mu_{h+\wh h}}\ba{\ve{A^{\invdiameter}\ol x - \wh h}^2 + \ve{A^{\invdiameter}(x-\ol x)}^2 + \ve{\nb H_{\ge 3}(x)}^2}\\
    &\le 2\ba{\ve{A^{\invdiameter}\ol x-\wh h}^2 + \ve{A^{\invdiameter}}_F^2 + \E_{\mu_{h+\wh h}} \ve{\nb H_{\ge 3}(x)}^2}\le 6\de^2
    \label{e:E-nbf2}
\end{align}
using the fact that the entries of $x-\ol x$ are independent, mean 0, with variance at most 1.
Hence, by Theorem~\ref{t:hwi-general}, $f(X) - \E f(X)$ is $O(\de)$-sub-exponential, and so is $f(Z)-f(X)$, and there exists $c_3$ so that
\[
\Pj\pa{|f(Z)-f(X)|\ge t} \le 2e^{-\fc{c_3t}{\de}}.
\]
Then for $c=\pf{2}{\ep}^{\fc{\de}{c_3-\de}}$, 
\begin{align}
\nonumber
    \E[(R-c)\one_{R\ge c}]
    &\le \int_{\ln c}^\iy e^{t}\cdot  \Pj(f(Z)-f(X)\ge t)\,dt\\
    &\le \int_{\ln c}^\iy e^t 2 e^{-\fc{c_3t}{\de}}\,dt
    \le 2\int_{\ln c}^\iy e^{-(\fc{c_3}{\de} - 1)t} \,dt = 2c^{-\pa{\fc{c_3}{\de}-1}} = \ep. 
    \label{e:power-tail}
\end{align}
Moreover, by Jensen's inequality, $\E R\ge e^{\E[f(Z)-f(X)]}=1$.  Hence by \cref{t:hwi-general}, the output is at most $\ep$ in TV distance from $\mu$ and the acceptance probability is at least $\rc{2c}$.
\end{proof}

\subsection{Analysis of the Parallel Ising Sampler}
\label{s:proof}

 \begin{lem}\label{l:prod}
 Let $S\subeq [n]$, fix $x_{S^c}\in \{\pm 1\}^{S^c}$, and let $P$ be the distribution on $\{\pm 1\}^S$ with mass function 
 $p(x)=\mu_{J,h}(X_S=x |X_{S^c}=x_{S^c})$, and let $Q$ be the product distribution in on $\{\pm 1\}^S$ with mass function $q(x) \propto \exp\pa{\an{J_{S\times S^c}x_{S^c}+ h_S, x}}$.
 Then the following hold.
 \begin{enumerate}
     \item $\dd PQ(x) \propto \exp\pa{\rc 2 \an{x, J_{S\times S}x}}$. 
     \item $\KL(P\|Q)\le \ve{J_{S\times S}}\cdot |S|$.
 \end{enumerate}
\end{lem}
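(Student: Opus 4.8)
The plan is a direct computation of the two densities. For part (1), I would expand the quadratic form over the block decomposition $[n] = S \sqcup S^c$: using symmetry of $J$,
\[
\an{x, Jx} = \an{x_S, J_{S\times S}x_S} + 2\an{x_S, J_{S\times S^c}x_{S^c}} + \an{x_{S^c}, J_{S^c\times S^c}x_{S^c}},
\]
and $\an{h,x} = \an{h_S, x_S} + \an{h_{S^c}, x_{S^c}}$. Substituting into $\mu_{J,h}$ and freezing $x_{S^c}$, every term not involving $x_S$ becomes a constant (depending only on the fixed $x_{S^c}$), so
\[
p(x) \propto \exp\pa{\tfrac12\an{x, J_{S\times S}x} + \an{J_{S\times S^c}x_{S^c} + h_S,\, x}}.
\]
Dividing by $q(x) \propto \exp\pa{\an{J_{S\times S^c}x_{S^c}+h_S,\, x}}$ yields $\dd PQ(x) \propto \exp\pa{\tfrac12\an{x, J_{S\times S}x}}$, which is (1).

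For part (2), set $g(x) = \tfrac12\an{x, J_{S\times S}x}$, so by (1) we have $\dd PQ(x) = e^{g(x)}/Z$ with $Z = \E_{x\sim Q}[e^{g(x)}]$ the normalizer that makes $P$ a probability measure. Then
\[
\KL(P\|Q) = \E_{x\sim P}[g(x)] - \ln Z.
\]
The key estimate is that for $x\in\{\pm1\}^S$ one has $\ve{x}^2 = |S|$, hence $|g(x)| \le \tfrac12\ve{J_{S\times S}}\cdot|S|$. This gives $\E_{x\sim P}[g(x)] \le \tfrac12\ve{J_{S\times S}}\cdot|S|$, and since $Z$ is an average of the values $e^{g(x)} \ge e^{-\frac12\ve{J_{S\times S}}|S|}$ we get $\ln Z \ge -\tfrac12\ve{J_{S\times S}}\cdot|S|$; adding the two bounds gives $\KL(P\|Q)\le \ve{J_{S\times S}}\cdot|S|$.

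There is no real obstacle here; the only points to watch are the factor of $2$ on the cross term from symmetry of $J$ in (1), and the fact that the conditional normalizer in (1) legitimately depends on the (fixed) $x_{S^c}$. One could instead bound $\ln Z \ge \E_Q[g]$ by Jensen and then estimate $\E_P[g]-\E_Q[g]$, but the crude two-sided operator-norm bound on $g$ is cleaner and only costs a constant factor.
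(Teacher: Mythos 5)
Your proposal is correct and takes essentially the same approach as the paper: part (1) is the same block expansion of the quadratic form, and part (2) uses the same crude two-sided bound $|g(x)| \le \tfrac12\ve{J_{S\times S}}\,|S|$ to control both pieces (the paper phrases it as bounding $\dd PQ \le e^{\ve{J_{S\times S}}|S|}$ pointwise and then taking $\E_P\ln$, which is arithmetic-identical to your $\E_P[g]-\ln Z$ split).
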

 \begin{proof}
Because $x_{S^c}$ is constant, expanding the quadratic gives
\[
\mu_{J,h}(X_S=x_S|X_{S^c}=x_{S^c}) \propto 
\exp\pa{\rc 2 (2\an{x_S,J_{S\times S^c}x_{S^c}} + \an{x_S, J_{S\times S}x_S}) + \an{h_S,x}}.
\]
Dividing by $q(x_S)$ gives (1). 

For (2), we note that for $x\in \{\pm 1\}^S$,  $\ab{\rc 2\an{x, J_{S\times S}x}}\le \rc 2 \ve{J_{S\times S}} \ve{x}^2\le \rc 2  \ve{J_{S\times S}} |S|$.
Hence 
\[
\dd PQ(x) = \fc{\exp\pa{\rc 2 \an{x,J_{S\times S}x}}}{\int \exp\pa{\rc 2 \an{x,J_{S\times S}x}}\,dQ(x)} \le 
\fc{e^{\rc 2\ve{J_{S\times S}}|S|}}{e^{-\rc 2\ve{J_{S\times S}}|S|}} = 
e^{\ve{J_{S\times S}}|S|}
\]
and $\KL(P\|Q) = \E_P \ln \dd PQ\le \ve{J_{S\times S}}|S|$.
 \end{proof}

\begin{lem}[{Bernstein's inequality for supermartingales \cite[(1.6)]{freedman1975tail}}]
\label{l:bern}
    Let $X_n$ be a martingale adapted to $\cal F_n$. Suppose that $|X_{n+1}-X_{n}|\le L$ and $\E [|X_{n+1}-X_{n}|^2 |\cal F_n]\le \si^2$ with probability 1. Then 
    \[
\Pj(X_n-X_0 \ge t) \le \exp\pa{-\fc{t^2}{2(tL + n\si^2)}}. 
    \]
\end{lem}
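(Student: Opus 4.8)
The plan is to prove this by the exponential (Chernoff) method applied to the martingale $X_n$, which is the standard route to the Bernstein/Freedman inequality. Write $D_k = X_k - X_{k-1}$, so that $\E[D_k\mid \cal F_{k-1}] = 0$ (only $\le 0$ is actually needed, matching the ``supermartingale'' form), $|D_k|\le L$, and $\E[D_k^2 \mid \cal F_{k-1}]\le \si^2$. The first step is a per-increment moment-generating-function bound: for $\la>0$, since $x \mapsto (e^x - 1 - x)/x^2$ is nondecreasing and $\la D_k \le \la L$, one has the pointwise inequality $e^{\la D_k} \le 1 + \la D_k + \psi(\la) D_k^2$ with $\psi(\la) = (e^{\la L} - 1 - \la L)/L^2$. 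Taking $\E[\,\cdot \mid \cal F_{k-1}]$ and using the two hypotheses gives $\E[e^{\la D_k}\mid \cal F_{k-1}] \le 1 + \psi(\la)\si^2 \le e^{\psi(\la)\si^2}$.

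Next I would assemble the exponential supermartingale $M_k = \exp\!\big(\la(X_k - X_0) - k\,\psi(\la)\si^2\big)$. The bound above yields $\E[M_k\mid \cal F_{k-1}]\le M_{k-1}$, and $M_0 = 1$, so $\E M_n \le 1$. Markov's inequality then gives
\[
\Pj(X_n - X_0 \ge t) = \Pj\!\big(M_n \ge e^{\la t - n\psi(\la)\si^2}\big) \le e^{-\la t + n\psi(\la)\si^2}.
\]

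Finally I would bound $\psi$ and optimize $\la$: the elementary inequality $e^x - 1 - x \le \tfrac{x^2/2}{1 - x/3}$ (valid for $0\le x < 3$) gives $\psi(\la) \le \tfrac{\la^2/2}{1-\la L/3}$ whenever $\la L < 3$, and the choice $\la = \tfrac{t}{n\si^2 + tL/3}$ (which automatically keeps $\la L < 3$, the case $\si=0$ being trivial) makes $1 - \la L/3 = \tfrac{n\si^2}{n\si^2 + tL/3}$ and collapses the exponent to $-\tfrac{t^2}{2(n\si^2 + tL/3)} \le -\tfrac{t^2}{2(n\si^2 + tL)}$, which is the claimed bound (in fact slightly stronger). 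I do not expect a real obstacle, since the whole argument is routine once the per-increment step is in place; the one point needing care is exactly that per-increment MGF bound — one must use the \emph{one-sided} pointwise inequality (legitimate because $D_k\le L$) so that it survives conditioning on $\cal F_{k-1}$, and must check that only $\E[D_k\mid\cal F_{k-1}]\le 0$ is invoked.
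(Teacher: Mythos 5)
The paper does not give a proof of this lemma at all---it is quoted as a known black-box result from Freedman~\cite{freedman1975tail}. Your argument is a correct and complete self-contained derivation, and it is precisely the standard exponential-martingale (Chernoff) route used in Freedman's original proof: the monotonicity of $x\mapsto(e^x-1-x)/x^2$ gives the one-sided per-increment MGF bound $\E[e^{\la D_k}\mid\cal F_{k-1}]\le e^{\psi(\la)\si^2}$, the exponential process $M_k=\exp(\la(X_k-X_0)-k\psi(\la)\si^2)$ is a supermartingale, Markov's inequality plus the elementary bound $e^x-1-x\le \tfrac{x^2/2}{1-x/3}$ and the optimal choice of $\la$ yield $\exp\bigl(-t^2/(2(n\si^2+tL/3))\bigr)$, which is in fact slightly stronger than the stated $\exp\bigl(-t^2/(2(n\si^2+tL))\bigr)$. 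Your observation that only $\E[D_k\mid\cal F_{k-1}]\le 0$ is used is exactly right and reconciles the ``supermartingale'' title with the ``martingale'' hypothesis, and the degenerate case $\si=0$ is correctly handled as trivial.
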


We are now ready to prove our main theorem on the parallel Ising sampler. 

\begin{proof}[Proof of Theorem~\ref{t:main}.]
We first note that all lines in Algorithm \ref{a:pising} take logarithmic time with $\poly(n)$ processors (e.g., by a parallel implementation of matrix-vector multiplication). Note that a random subset of specified size $s$ can be selected by generating a random number for each index, using a parallel sorting algorithm~\cite{bitton1984taxonomy}, and then selecting the smallest $s$ elements. 
We will ignore logarithmic overhead for the rest of the proof. 


~

\noindent \textbf{Running time is bounded with high probability.} 
We consider a tree associated with a run of the algorithm, where each node is labeled with a set, constructed as follows. Each node represents a time that $\mathsf{ParallelIsingSampler}$ is called, and each leaf node represents a time that $\mathsf{ApproxRejectionSampler}$. 
Start with a root node $v_1$ labeled with $S_1=[n]$. A node has $T$ children, where $T$ is the number calculated in line \ref{st:T} of the algorithm. Each node is labeled with subset of indices marking out the submatrix $J_{S\times S}$ it is given.

Now consider exploring the tree in the following breadth-first manner. 
We will define a list $B_t$ which will contain the vertices at the boundary of explored territory and a filtration $\cal F_t$. 
Let $B_0=(v_1)$ and $\cal F_0$ be the trivial $\si$-algebra. Given $B_t$ and $\cal F_t$, if $B_t$ is non-empty, define $B_{t+1}$ and $\cal F_{t+1}$ as follows. Let $v_{t+1}$ be the first vertex in the list $B_t$, and let $B_{t+1}$ be defined from $B_t$ by removing $v_{t+1}$ from $B_t$ and adding its children. Let $S_{t+1}$ denote the set of indices associated with $v_{t+1}$, considered as a set-valued random variable, and $\cal F_{t+1}=\si(\cal F_t, S_{t+1})$. Let $M_t=|B_t|$. We have $M_1=\fl{ C_2\ln \pf{n}{\ep}\fc{n}s}$, and wish to bound the first time $\tau$ such that $M_\tau=0$. We redefine $M_{\tau+k} =-k$ (for sake of making $M_t$ a supermartingale, as we will show below).

For $t\ge 2$, consider $M_{t}-M_{t-1}|\cal F_{t-1}$. Let $v$ denote the parent of $v_{t}$, and suppose $v$ is associated with the set $R$, with $|R|=m$. 
Then $|S_t|=s:=\ce{\fc{c_1 m}{\pfrobslack
\ln \pf{n}{\ep}\ve{J_{R\times R}^{\invdiameter}}_F}}$. 
(We choose $c_1\le \rc 2c_3$ to ensure that we always have $s\le m$.)
Let $D_t$ be the number of new children added. 
If $\ve{J_{S_t\times S_t}^{\invdiameter}}_F\le c_3$ or $s=1$, then $v_t$ is a leaf and $D_t=0$. 
Now consider $\ve{J_{S_t\times S_t}^{\invdiameter}}_F> c_3$. 
In the current call to the algorithm, $s'=\ce{\fc{c_1s}{\pfrobslack\ln \pf n\ep \ve{J_{S_t\times S_t}^{\invdiameter}}_F}}$. 
Then $D_t\le
C_2\ln \pf{n}{\ep} \fc{s}{s'}\le 
{ \fc{C_2 \pfrobslack\ln \pf{n}{\ep}^2\ve{J_{S_t\times S_t}^{\invdiameter}}_F}{c_1}}$. In either case, $M_t-M_{t-1}=D_t-1$. 
We have that
\begin{align}
\label{e:dt-ineq}
    D_t &\le \fc{C_2 \pfrobslack\ln \pf{n}{\ep}^2
    \ve{J_{S_t\times S_t}^{\invdiameter}}_F \one\ba{\ve{J_{S_t\times S_t}^{\invdiameter}}_F> c_3}}{c_1}.
\end{align}
Hence, by Cauchy-Schwarz and Chebyshev's inequality,
\begin{align}
\nonumber
    \E[D_t|\cal F_{t-1}] 
    &\le  \fc{C_2\pfrobslack\ln \pf{n}{\ep}^2}{c_1}
    \E\ba{ \ve{J_{S_t\times S_t}^{\invdiameter}}_F^2 | \cal F_{t-1}}^{1/2} \Pj\ba{\ve{J_{S_t\times S_t}^{\invdiameter}}_F> \fc{c_3}{\frobslack}\Big| \cal F_{t-1}}^{1/2}\\
    \nonumber
    &\le \fc{C_2\pfrobslack\ln \pf{n}{\ep}^2}{c_1}
    \E\ba{ \ve{J_{S_t\times S_t}^{\invdiameter}}_F^2| \cal F_{t-1}}^{1/2} 
    \cdot 
    \fc{\E\ba{\ve{J_{S_t\times S_t}^{\invdiameter}}_F^2| \cal F_{t-1}}^{1/2} }{c_3/\pa{\frobslack}}\\
    &= \fc{C_2\pfrobslack^2\ln \pf{n}{\ep}^2}{c_1c_3}
    \E\ba{ \ve{J_{S_t\times S_t}^{\invdiameter}}_F^2| \cal F_{t-1}}.
    \label{e:ET}
\end{align}
Now because $S_t$ is uniformly chosen at random from subsets of $R$ of size $s$,
\begin{align}
\nonumber
    \E\ba{ \ve{J_{S_t\times S_t}^{\invdiameter}}_F^2| \cal F_{t-1}}
    &=\E_{S\sim \mathsf{Uniform}\binom{R}{s}} \ba{\ve{J_{S\times S}^{\invdiameter}}_F^2}\\
&= \sum_{i,j\in R, i\ne j} \pf{s}{m}^2 J_{ij}^2 = \pf{s}{m}^2 \ve{J_{R\times R}^{\invdiameter}}_F^2 \le \fc{4c_1^2}{\pa{\frobslack}^2\ln\pf n\ep^2},
\label{e:JF}
\end{align}
where we use the fact that $J_{ii}^{\invdiameter}=0$, all off-diagonal entries have probability $\pf{s}{m}^2$ of being included in $S_t$, and $s>1$. 
Combining~\eqref{e:ET} and~\eqref{e:JF} gives
\begin{align*}
    \E[D_t|\cal F_{t-1}] \le \fc{4C_2c_1}{c_3} .
\end{align*}
Choosing $c_1$ small enough (depending on $C_2, c_3$), we can ensure that $\E[M_t-M_{t-1}|\cal F_{t-1}] = \E[D_t-1|\cal F_{t-1}]\le -\rc 2$, so that $M_t+\fc t2$ is a supermartingale for $t\ge 1$. By Doob's decomposition we can write $M_t = A_t + M_t'$ where $A_{t+1}\le A_1-\fc t2$ is a predictable decreasing sequence and $M_t'$ is a martingale.

We now bound the variance. Using~\eqref{e:dt-ineq},
\begin{align*}
\E[(M_{t}'-M_{t-1}')^2|\cal F_{t-1}]&\le 
    \E[D_t^2|\cal F_{t-1}]\\
    &\le \fc{C_2^2\pfrobslack^2\ln \pf n\ep^4 }{c_1^2} 
    \E\ba{ \ve{J_{S_t\times S_t}^{\invdiameter}}_F^2| \cal F_{t-1}}
    \\
    &\le 4C_2^2 \ln \pf{n}{\ep}^2,
\end{align*}
where we use the bound~\eqref{e:JF}.
Finally,  $|M_{t+1}'-M_t'|\le \fc{C_2 \pfrobslack\ln \pf n\ep^2}{c_1}\ve{J^{\invdiameter}}_F$ with probability 1. Let $T_0$ be the $T$ computed in line~\ref{st:T} in the first step of the algorithm. 
By Bernstein's inequality for martingales (\cref{l:bern}), for $t\ge C\ln^4\pf n{\ep} \max\bc{\ve{J^{\invdiameter}}_F,1}\ge T_0$ for an appropriate constant $C$ (depending on $C_2,c_1, C_4$), 
\begin{align*}
    \Pj(M_{t+1}> 0) 
    &=
    \Pj\pa{(M_{t+1}-M_1)> -T_0}
    \le 
    \Pj\pa{M_{t+1}'-M_1'>\fc{t}{2}-T_0}
    \le 
    \eph. 
\end{align*}
This shows that with probability $\ge 1-\fc{\ep}4$, there are at most $t_{\max} = C\ln^4\pf n{\ep} \max\{\ve{J}_F,1\}$ nodes.

Finally, we note that in the call to $\mathsf{ApproxRejectionSampler}$, the parameter needed to obtain error $\ep_{\mathrm{step}}$ is
\begin{align*}
    c &= \pf{2}{\ep_{\mathrm{step}}}^{\fc{\de}{c_3-\de}} = \pf{2}{\ep_{\mathrm{step}}}^{\rc{\log(2/\ep_{\mathrm{step}})}} = e
\end{align*}
and the acceptance probability is $\ge \rc{2c}= \rc{2e}$.

The number of tries until acceptance is a geometric random variable, which is subexponential, so standard concentration bounds show that the total number of tries is at most $O(\ln\prc{\ep})$ times the number of calls, with probability $\ge 1-\fc{\ep}2$. 
Putting everything together, we obtain $O(\max\{\ve{J}_F,1\}\poly\log\pf{n}{\ep})$ running time with probability $\ge 1-\ep$.

~

\noindent \textbf{Output is close in TV distance.} Let $A$ be a large constant to be determined.

Now consider coupling $y=y^{(0)}$ with a sequence of random variables $y^{(1)},\ldots$, defined inductively as follows. Start with all vertices of the tree of recursive calls unmarked. Now given $y^{(t)}$, choose a node (in a fixed manner) all of whose children are marked, and mark it; then replace the output of that call to $\mathsf{ParallelIsingSampler}$ by a sample from the true distribution. 
We now choose constants so that $\TV(\cal D(y^{(t)}), \cal D(y^{(t+1)}))\le \fc{\ep}{n^A}$. There are two kinds of replacements to consider, a leaf node and a non-leaf node. 

A leaf node corresponds to a call to  $\mathsf{ApproxRejectionSampler}$. If $c_3$ is small enough and 
\blu{$C_4=A$}, then by \cref{l:pacc} and~\ref{l:ars}, the output of $\mathsf{ApproxRejectionSampler}$ is within $\fc{\ep}{n^A}$ of the $\mu_{J_{R\times R},h}$. 

A non-leaf node corresponds to $T$ recursive calls to $\mathsf{ParallelIsingSampler}$. 
Here we must appeal to mixing for the Ising model. By Theorem~\ref{t:atoe-ising}, approximate tensorization of entropy holds with constant $\rc{c}$. 
Hence by Theorem~\ref{t:k-glauber}, 
there is a constant $C_0'$ such that if $C_0 = C_0'c$, then for any $s$, $t\cdot \fc ns$ steps of $s$-Glauber dynamics results in a distribution $\nu_t$ satisfying
\begin{align*}
    \TV(\nu_t\|\mu_{J,h}) \le 
    \sqrt{\rc 2 \KL(\nu_t\|\mu_{J,h})} 
    \le \sqrt{\rc 2 \KL(\nu_0\|\mu_{J,h}) e^{-C_0t}}.
\end{align*}
With the product initialization, we have by Lemma~\ref{l:prod}(2) (applied to the whole matrix) that $\KL(\nu_0\|\mu_{J,h})\le \ve{J}n\le 2n$. Hence there exists a constant $C_2'$ such that if $C_2=C_2'A/c$, then with $T = C_2\ln \pf n\ep \fc{n}{s}$ steps, $\TV(\nu_T\|\mu_{J,h})\le 
\fc{\ep}{n^A}$. 
By \cref{l:prod}(1), for the Ising model $\mu_{J_{R\times R}, h}$ the conditional distribution of $X_S$ given $X_{R\bs S}=y_{R\bs S}$ is exactly the Ising model $\mu_{J_{S\times S}, J_{S\times R\bs S}y_{R\bs S}+h_S}$. Given that the conditional distributions are sampled exactly, then the only error is that from not having fully mixed, which we set to be $\fc{\ep}{n^A}$. 

This chain of coupled random variables establishes $\TV(\cal D(y), \cal D(y^{(t)}))\le \fc{t\ep}{n^A}$. Moreover, for $t>t_{\max}$, $\TV(\cal D(y^{(t)}),\mu_{J,h})\le \fc{\ep}{2}$ by our high-probability bound, as the root node in $y^{(t)}$ will have been replaced with a perfect sample with probability $\ge 1-\fc{\ep}2$. 
It remains to note that $t_{\max}=C\ln^4\pf n{\ep} \max\{\ve{J}_F,1\}$ with $C$ depending polynomially on $A$. Hence we can choose $A$ such that $\fc{t_{\max}\ep}{n^A}\le \eph$, and this finishes the proof. 
\end{proof}

\section{Parallel sampling for $p$-spin model}

    

Our starting point is the following theorem, which bootstraps the spectral gap of~\cite{adhikari2022spectral} into approximation tensorization of entropy.
\begin{thm}[\cite{anari2023universality}]
\label{t:atoe-p-spin}
    There is an absolute constant $A$ such that if 
    \begin{align}
    \label{e:Cbe}
    C(\be):&=\sum_{p=2}^{\iy} \sqrt{p^3\ln p}\cdot \be_p\le A
\end{align}
    and \begin{align*}D(\be) := \sum_{p=2}^{\iy} \sqrt{2^p p^3 \ln p}\cdot \be_p<\iy,
    \end{align*}
    then with probability $\ge 1-\exp(-\Om(n))$ over $g$, $\mu_{\be, g,h}$ satisfies approximate tensorization of entropy with constant depending only on $D(\be)$.
\end{thm}

\subsection{Concentration}
\label{s:conc-p-spin}
As in Section~\ref{s:conc}
We would like to bound for all $|S|=s$ and $x\in \Bn$ 
the vector 
$\nb H_{x_{S^c}} (x_S)$ and the matrix 
$\nb^2 H_{x_{S^c}} (x_S)$ in Frobenius norm. The arguments are similar to those in~\cite{tomioka2014spectral,adhikari2022spectral,anari2023universality}. 
We write
\begin{align*}
    (\nb H_{x_{S^c}}(x_S))_i
    &= h_i + \sum_{p=2}^{\iy} \sumr{J\ni i}{|J|=p} g_J x^{J\bs \{i\}}\\
    &= h_i + \ub{\sum_{p=2}^{\iy} \fc{\be_p \sqrt{p!}}{n^{\fc{p-1}2}} \sumr{J\ni i}{|J|=p, |J\cap S|\le 2} g_J x^{J\bs \{i\}}}{=:(\ga_{S,\le 2}(x))_i} + \ub{\sum_{p=3}^{\iy} \fc{\be_p \sqrt{p!}}{n^{\fc{p-1}2}} \sumr{J\ni i}{|J|=p, |J\cap S|\ge 3} g_J x^{J\bs \{i\}}}{=:(\ga_{S,\ge 3}(x))_i}\\
    (\nb^2 H_{x_{S^c}}(x_S))_{ij}
    &= \sum_{p=2}^{\iy} \fc{\be_p \sqrt{p!}}{n^{\fc{p-1}2}} \sumr{J\supeq \{i,j\}}{|J|=p} g_J x^{J\bs \{i,j\}}.
\end{align*}
Note that $\ga_{\ge3}(x)$ is the higher-degree error term that now arises when using Lemma~\ref{l:pacc}. 
A complication is that for a given $x\in \Bn$, the entries of $\nb H_{x_{S^c}}(x_S)$ and $\nb^2 H_{x_{S^c}}(x_S)$ are not independent, because $g_J$ appears in all entries where $i\in J$ or $\{i,j\}\subeq J$. 
To bound $\ga_{\ge 3}(x)$, write it as 
\begin{align*}
    \ga_{\ge 3}(x) &= B_{S,x}^{(1)} g &\text{where } B_{S,x}^{(1)} &\in \R^{S\times \binom{[n]}{\ge 2}}\\
    && (B_{S,x}^{(1)})_{i,J}& = \fc{\be_p\sqrt{p!}}{n^{\fc{p-1}2}} \one_{i\in J, |J\cap S|\ge 3} x^J.
\end{align*}
Here, $\binom{[n]}{\ge p}$ denotes the subsets of size at least $p$, and 
we view $g$ as a random Gaussian vector in $\R^{\binom{[n]}{\ge 2}}$.
To bound $\ve{\nb^2 H_{x_{S^c}}(x_S)}_F$, we consider it as a vector in $\R^{\binom{S}2}$ and write 
\begin{align*}
    \vc(\nb^2 H_{x_{S^c}}(x_S))
    &= B_{S,x}^{(2)}g
    & \text{where }
    B_{S,x}^{(2)} &\in \R^{\binom S2\times \binom{[n]}{\ge 2}}\\
    &&(B_{S,x}^{(2)})_{I,J} &=\fc{\be_p\sqrt{p!}}{n^{\fc{p-1}2}} \one_{I\in J} x^J.
\end{align*}
For a matrix $B$ let $|B|$ denote the matrix whose entries are the absolute values of entries of $B$. Let $A_{S}^{(i)} =  |B_{S,x}^{(i)}|$ for $i=1,2$. (Note this does not depend on $x$.) For the bounds below, let
\[
C_k(\be) = \sum_{p\ge 2} p^k \be_p^2.
\]
We note that $C_3(\be)=O(C(\be)^2)$ where $C(\be)$ is defined in~\eqref{e:Cbe}.
\begin{lem}
\label{l:B1}
    For all $|S|=s$, $x\in \Bn$, the following bounds hold for $B=B^{(1)}_{S,x}$:
    \begin{align*}
        \ve{B}_F^2 &\lesssim \fc{s^3C_3(\be)}{n^2} \\
        \ve{B}^2 = \ve{B^\top B} = \ve{BB^\top} &\lesssim \fc{s^2C_3(\be)}{n^2}
        \\
        \ve{B^\top B}_F &\lesssim \fc{s^{2.5}C_3(\be)}{n^2}
    \end{align*}
\end{lem}
\begin{proof}
    By symmetry, each row of $B$ has the same 2-norm. 
    Let $i\in S$. 
    Counting the number of sets $J$ such that $i\in J$, $|J|=p$, and $|J\cap S|=p_1$ gives (using Vandermonde convolution to evaluate the sum)
    \begin{align*}
        \ve{B}_F^2 &= s \sum_{p\ge 3} \fc{\be_p^2 p!}{n^{p-1}} \sumr{p=p_1+p_2}{p_1\ge 3} 
        \binom{s-1}{p_1-1} \binom{n-s}{p_2}\\
        &\le s \sum_{p\ge 3} \fc{\be_p^2 p!}{n^{p-1}}\sumr{p=p_1+p_2}{p_1\ge 3}
        \fc{(s-1)(s-2)}{(p_1-1)(p_1-2)} \binom{s-3}{p_1-3} \binom{n-s}{p_2}\\
        &\le \fc{s^3}{2} \sum_{p\ge 3} \fc{\be_p^2 p!}{n^{p-1}} \binom{n-3}{p-3} \le \fc{s^3 C_3(\be)}{2n^2}.
    \end{align*}
    For $\ve{B}$, note $BB^\top$ is symmetric so its operator norm can be bounded by the $\ell_\iy\to \ell_\iy$ norm: $\ve{B}^2 = \ve{BB^\top}
        \le \max_i \sum_j |(BB^\top)_{i,j}|$. Now
    \begin{align*}
        \max_i \sum_j |(BB^\top)_{i,j}|
        &\le  \sum_{p\ge 3}
        \fc{\be_{p}^2 p!}{n^{p-1}}
        \sumr{p_1+p_2=p}{p_1\ge 3} 
        \ab{\set{(K,j)}{K\ni i, |K\cap S|=p_1, |K\cap S^c|=p_2, j\in K}}\\
        &= \sum_{p\ge 3} \fc{\be_p^2 p!}{n^{p-1}}\sumr{p_1+p_2=p}{p_1\ge 3} \binom{s-1}{p_1-1}\binom{n-s}{p_2}p_1\\
        &= \sum_{p\ge 3} \fc{\be_p^2 p!}{n^{p-1}}\sumr{p_1+p_2=p}{p_1\ge 3}\fc{s-1}{p_1-1} \binom{s-2}{p_1-2}\binom{n-s}{p_2}p_1\\
        &\le \fc{3s}2 \sum_{p\ge 3} \Bigg( \fc{\be_p^2 p!}{n^{p-1}}\sumr{p_1+p_2=p}{p_1\ge 2} \binom{s-2}{p_1-2}\binom{n-s}{p_2} - \binom{n-s}{p-2}\Bigg)\\
        &\le \fc{3s}{2} \pa{\sum_{p\ge 3}\fc{\be_p^2 p!}{n^{p-1}}\pa{\binom{n-2}{p-2} - \binom{n-s}{p-2}}}\\
        &\le \fc{3s}2 \pa{\sum_{p\ge 3} \fc{\be_p^2 p!}{n^{p-1}} \binom{n-3}{p-3} (s-2)}\le \fc{3s^2C_3(\be)}{2n^2}.
    \end{align*}
    The last inequality follows from $\ve{B^\top B}_F\le \ve{B}\ve{B}_F$.
\end{proof}
For any $x\in \{\pm 1\}^n$ and $s\le \sqrt n$, we hence have by Hanson-Wright~\ref{t:hwi} that, for some constant $C'(\be)$ such that $C'(\be)\to 0$ as $C(\be)\to 0$,  
    \begin{align*}
    \Pj\ba{|\ga_{S,\ge 3}(x)|^2 - 
    \E |\ga_{S,\ge3}(x)|^2}
    &= 
        \Pj[|g^\top B_{S,x}^{(1)\top} B_{S,x}^{(1)} g-\E g^\top B_{S,x}^{(1)\top } B_{S,x}^{(1)} g|\ge t]\\
        &\le 
        2\exp\ba{-c \min\bc{\fc{t^2}{\ve{B_{S,x}^{(1)\top } B_{S,x}^{(1)}}_F^2}, \fc{t}{\ve{B_{S,x}^{(1)\top } B_{S,x}^{(1)}}}}}
        \\
        &\le 2\exp\pa{-\fc{3t}{C'(\be)}},
    \end{align*}
    by substituting the bounds in Lemma~\ref{l:B1}.
    Hence, for $s\le \sqrt n$ and $t\ge C'(\be)$, this is $\le 8^{-n}$.
    Moreover,
    \begin{align*}
        \E g^\top B_{S,x}^{(1)\top } B_{S,x}^{(1)} g &= 
        \Tr(B_{S,x}^{(1)\top } B_{S,x}^{(1)}) = \ve{B_{S,x}^{(1)}}_F^2 \lesssim   \fc{s^3 C'(\be)^2}{n^2} 
        \lesssim C'(\be)^2.
    \end{align*}
    Taking a union bound over all $|S|=s$ and $x\in \{\pm 1\}^n$ shows that for some constant $C''(\be)\to 0$ as $C(\be)\to 0$ that 
    \begin{align}
    \label{e:ga3}
        \Pj\pa{
\forall S\sub [n], |S| = s, \,
\forall x\in \{\pm 1\}^{n}, \,
\ve{\ga_{S,\ge 3}(x)}^2\le 
C''(\be)
        }\ge 1-2^{-n}.
    \end{align}
\begin{lem}
For all $|S|=s$ and $x\in \Bn$, the following bounds hold for $B=B_{S,x}^{(2)}$:
\begin{align*}
    \ve{B}_F^2 &
    \lesssim \fc{s^2C_2(\be)}{n}\\
    \ve{B}^2 = 
    \ve{B^\top B} = 
    \ve{BB^\top} &\lesssim
     \fc{C_3(\be)}{n}\\
    \ve{B^\top B}_F &\lesssim 
     \fc{sC_3(\be)}{n}.
\end{align*}
\end{lem}
\begin{proof}
    By symmetry, each row of $B$ has the same 2-norm. 
    Let $I\in \binom{S}{2}$. 
    The number of sets of size $p$ containing $I$ is $\binom{n-2}{p-2}$, so 
    \[
\ve{A_I}^2 = \sum_{p\ge 2} \binom{n-2}{p-2}\fc{\be_p^2p!}{n^{p-1}}
\le \sum_{p\ge 2} \fc{n^{p-2}}{(p-2)!} \fc{\be_p^2 p!}{n^{p-1}} \le \fc{C_2(\be)}{n}.
    \]
    Multiplying by $\binom s2\le \fc{s^2}2$ gives $\ve{B}_F^2$. 

    For $\ve{B}$, we bound by the $\ell_\iy\to \ell_\iy$ norm: $\ve{B}^2 = \ve{BB^\top}\le  \max_I \sum_J |(BB^\top )_{I,J}|$. Now
    \begin{align*}
        \sum_J |(BB^\top)_{I,J}|
        &\le \sum_{(K,J): K\supeq I, |J|=2, J\subeq K}
        \fc{\be_{|K|}^2 |K|!}{n^{|K|-1}}\\
        &=\sum_{p\ge 2}
        \fc{\be_{p}^2 p!}{n^{p-1}}
        \sum_{p_1+p_2=p} 
        \ab{\set{(K,J)}{K\supeq I, |K\cap S|=p_1, |K\cap (S^c)|=p_2, |J|=2, J\subeq K}}\\
        &=\sum_{p\ge 2}
        \fc{\be_{p}^2 p!}{n^{p-1}}
        \sum_{p_1+p_2=p} 
        \binom{s}{p_1-2} \binom{n-s}{p_2} \binom{p_1}2\\
        &= \sum_{p\ge 2}
        \fc{\be_{p}^2 p!}{n^{p-1}}
\ba{\binom{n-s}{p-2} + s\binom{n-s}{p-3}3 +     
        \sum_{p_1+p_2=p, p_1\ge 4} 
        \binom{s}{p_1-2} \binom{n-2}{p_2} \binom{p_1}2}\\
                &\le \sum_{p\ge 2}
        \fc{\be_{p}^2 p!}{n^{p-1}}
\ba{\binom{n-s}{p-2} + s\binom{n-s}{p-3}3 +     
        \sum_{p_1+p_2=p, p_1\ge 4} 
        \binom{s}{p_1-4} \binom{n-s}{p_2} }\\
        &\le \sum_{p\ge 2}
        \fc{\be_{p}^2 p!}{n^{p-1}}
\ba{\binom{n-s}{p-2} + s\binom{n-s}{p-3}3 +     
        \binom{n}{p-4}}\\
        &\le
        \sum_{p\ge 2} \fc{\be_{p}^2 p!}{n^{p-1}} \fc{n^{p-2}}{(p-2)!}
        +
        3\sum_{p\ge 3} \fc{s\be_{p}^2 p!}{n^{p-1}} \fc{n^{p-3}}{(p-3)!}
        + 
        \sum_{p\ge 4} \fc{\be_{p}^2 p!}{n^{p-1}} \fc{n^{p-4}}{(p-4)!}\\
        &\le 
        2\cdot \sum_{p\ge 2} \fc{\be_{p}^2 p(p-1)}{n}
        + 
        3\cdot \sum_{p\ge 3} \fc{s\be_{p}^2 p(p-1)(p-2)}{n^2}
        \le \fc{2C_2(\be) + 3C_3(\be)}{n}\lesssim \fc{C_3(\be)}{n}.
    \end{align*}

    Finally, the last inequality follows from $\ve{B^\top B}_F \le \ve{B} \ve{B}_F $. 
\end{proof}
The same argument using Hanson-Wright and a union bound then shows
    \begin{align}
\label{e:HH}
        \Pj\pa{
\forall S\sub [n], |S| = s, \,
\forall x\in \{\pm 1\}^{n}, \,
\ve{\nb^2 H_{x_{S^c}}(x_S)}_F^2\le 
C''(\be)
        }\ge 1-2^{-n}
    \end{align}
    for $C''(\be)\to 0$ as $C(\be)\to 0$.
    \begin{lem}
        Consider the call to $\mathsf{QuadraticApproxRejectionSampler}$ in line~\ref{st:call-qars-pspin} of Algorithm~\ref{a:ppspin}. There is $\de$ such that if $C(\be)<\de$, with probability $1-2\cdot 2^{-n}$ over $g$, no matter the choice of $|S|\le \sqrt n$, the output of $\mathsf{QuadraticApproxRejectionSampler}$ is at most $\ep_{\textrm{step}}$ in TV distance from $\mu_{\be, g, h}(X_{S}=\cdot | X_{S^c} = x_{S^c})$ and the acceptance probability in $\mathsf{ApproxRejectionSampler}$ is at least $\fc{\estep}{4}$. 
    \end{lem}
    \begin{proof}
        From~\eqref{e:ga3} and~\eqref{e:HH}, we get 
        \begin{align}
        \label{e:good}
\Pj\pa{\forall S\sub [n], |S| = s, \,
\forall x\in \{\pm 1\}^{n}, \,
\ve{\ga_{S, \ge 3}(x)}^2\le C''(\be), \, 
\ve{\nb^2 H_{x_{S^c}}(x_S)}_F^2\le 
C''(\be) }\ge 1-2\cdot 2^{-n}.
        \end{align}
        Note that choosing $\de' = \fc{c_3}2$ and ensuring 
        $C''(\be)\le \de^{\prime 2}$, we have that the accuracy parameter passed to $\mathsf{ApproxRejectionSampler}$ is 
        $c=\pf{2}{\estep}^{\fc{\de'}{c_3-\de'}} = \fc{2}{\estep}$. 
        Moreover, there exists $\de>0$ such that $C(\be)<\de$ implies $C''(\be)<\de^{\prime2}$.
        For $C''(\be) \le \de' = \fc{c_3}2$, we hence have by \cref{l:pacc} that, under the good event in~\eqref{e:good} that for any $S$ and $x_{S^c}$, the output of $\qars$ is at most $\estep$ in TV distance from $\mu_{\be, g, h}(X_{S}=\cdot | X_{S^c} = x_{S^c})$. The acceptance probability is at least $\fc{1}{2c} = \fc{\estep}4$.
    \end{proof}

\subsection{Analysis of parallel $p$-spin sampler}

We will need to bound the KL divergence starting from the initial distribution (cf. Lemma~\ref{l:prod}). 
\begin{lem}\label{l:p-spin-init}
    Given $c_1$, there is a constant $C_2$ such that 
    \[
\Pj\pa{
    \KL(\mu_{\be, g, h}\| \mu_{h}) \le n\cdot C_2 \sqrt{C_0(\be)}
} \ge 1-e^{-c_1n}.
    \]
    where $\mu_h(x) \propto e^{\an{h,x}}$ is a product distribution.
\end{lem}
We bound this through a standard argument bounding the maximum of the Hamiltonian.
\begin{lem}
\label{l:max-H}
Given $c_1>0$, there is a constant $C_2>0$ such that 
    \[
\Pj\pa{\forall x\in \{\pm 1\}, \,|H_{\be, g, h}^{\ge2}(x)| \le n\cdot  C_2 \sqrt{C_0(\be)} } \ge 1-e^{-c_1 n}.
    \]
\end{lem}
\begin{proof}
For a fixed $x\in \Bn$, note $H^{\ge 2}(x)$ is Gaussian. We compute the variance of $H^{\ge 2}(x)$ over the randomness in $g$:
        \begin{align*}
            \Var\pa{H_{\be, g, h}^{\ge 2}(x)}
            & = \sum_{p\ge 2} \fc{\be_p^2 p!}{n^{p-1}} \binom{n}{p} \le C_0(\be)n.
        \end{align*}
        A standard tail bound gives that for $t\ge 1$, $\Pj\pa{|H_{\be, g, h}^{\ge 2}(x)|\ge t\sqrt{C_0(\be) n}}\le 2\cdot \rc{\sqrt{2\pi}}e^{-\fc{t^2}{2}}$ and a union bound gives 
        \[
\Pj\pa{\forall x\in \{\pm 1\}, \,|H_{\be, g, h}^{\ge2}(x)| \le t \sqrt{C_0(\be) n}} \ge 1-2^n e^{-\fc{t^2}{2}}.
        \]
        Choosing $t$ on the order of $\sqrt n$ then gives the result.
\end{proof}

\begin{proof}[Proof of~\cref{l:p-spin-init}]
    Note that under the event in~\cref{l:max-H},
\[
\dd{\mu_{\be, g, h}}{\mu_h}
= \fc{\E_{\mu_h} \exp(H^{\ge 2}(x))}{\exp(H^{\ge 2}(x))} \le \exp \pa{2n C_2 \sqrt{C_0(\be)}}
\]
and $\KL(\mu_{\be, g, h}\|\mu_h) = \E_{\mu_{\be, g, h}} \ln \pa{\dd{\mu_{\be, g, h}}{\mu_h}}\le 2n C_2\sqrt{C_0(\be)}$. 
\end{proof}

\begin{proof}[Proof of~\cref{t:main-p-spin}]
    By Theorem~\ref{t:atoe-p-spin} (approximate tensorization for $p$-spin model) and Theorem~\ref{t:k-glauber} (speedup of $k$-Glauber), there is $C_0$ depending on $D(\be)$ such that with probability $\ge 1-e^{-\Om(n)}$ over $g$,  
for any $s$, $t\cdot \fc ns$ steps of $s$-Glauber dynamics 
results in a distribution $\nu_t$ satisfying
\begin{align*}
    \TV(\nu_t\|\mu_{\be, g, h}) \le 
    \sqrt{\rc 2 \KL(\nu_t\|\mu_{\be, g, h})} 
    \le \sqrt{\rc 2 \KL(\nu_0\|\mu_{\be, g, h}) e^{-C_0t}}.
\end{align*}
(Alternatively, note that the same local-to-global argument that shows approximate tensorization for the $p$-spin model can be used to show contraction for $\Dn{k}{k-\ell}$ instead of $\Dn{k}{k-1}$,  \cite[Theorem 38]{anari2023universality}, \cite[Theorem 20]{anari2022optimal}.)
With the product initialization $\nu_0=\mu_h$, we have Lemma~\ref{l:p-spin-init} that with probability $1-e^{-\Om(n)}$, $\KL(\nu_0\|\mu_{J,h})\le O(C(\be)n)$. Hence there exists $C_2=C_2'A/C_0$ such that under these good events, with $T = \fc{C_2}{c_1}\ln \pf n\ep \sqrt n$ steps of $c_1\sqrt n$-Glauber dynamics, $\TV(\nu_T\|\mu_{\be, g, h})\le 
\fc{\ep}{2}$. 

By a coupling argument, the actual run of Algorithm~\ref{a:ppspin} incurs additional error equal to the sum of the errors from each call of $\qars$. By choosing $\estep\le \fc{\ep}{2T}$, the total TV error is at most $\ep$. 
Finally, by trying $O\prc{\estep}$ proposals in parallel in $\mathsf{ApproxRejectionSampler}$, we can ensure that with probability $\ge 1-\ep$, the total time is at most $O\pa{\ln \prc{\ep}}$ times the number of calls to $\mathsf{ApproxRejectionSampler}$. 
\end{proof}




\printbibliography
\appendix
\section{Comparison with \cite{liu2022simple}}
\label{s:comp}
We show that the result of \cite{liu2022simple} also gives \Cref{t:main} with a different algorithm, but cannot be used to derive \Cref{t:main-p-spin}. 
Rather than define a different Markov chain as in our work, \cite{liu2022simple} consider the original (continuous-time) Glauber dynamics and give a faster way to simulate a block of updates in parallel by a belief propagation type algorithm. Their algorithm is a Las Vegas algorithm in that it is guaranteed to faithfully simulate Glauber dynamics, with a high-probability bound on the parallel running time, so the desired accuracy $\ep$ does not need to be given in advance. Our algorithm does need $\ep$ as the proposal distribution introduces bias which needs to be controlled.
\begin{df}
Let $\mu$ be a distribution on $\prodo in \Om_i$.
    Define the \vocab{Dobrushin influence matrix} of $\mu$ (for Glauber dynamics) by $\rh\in \R^{n\times n}$ where
    \[
\rh_{ij} := 
\max_{\tau, \tau' \in \prod_{k\ne j} \Om_k: \tau_{\sim i} = \tau'_{\sim i}} \TV( \mu(X_j = \cdot | X_{\sim j }= \tau) , \mu(X_j = \cdot | X_{\sim j }= \tau') ).
    \]
\end{df}
For Ising models, we can bound the Dobrushin matrix as follows.
\begin{pr}
\label{e:dobrushin}
Suppose $J\in \R^{n\times n}$ is symmetric. 
    The Dobrushin influence matrix of $\mu_{J,h}$ has entries $\rh_{ij}\le |J_{ij}|$. Therefore, $\ve{\rh}_2 \le \ve{J}_F$. 
\end{pr}
\begin{proof}
    We have that for $\tau_{\sim i} = \tau'_{\sim i}$, $\tau_i=1$, $\tau'_i=-1$ that 
    \begin{align*}
& \TV( \mu(X_j = \cdot | X_{\sim j }= \tau) , \mu(X_j = \cdot | X_{\sim j }= \tau') )\\
& = \rc 2 \ab{
\pa{(\mu(X_j = 1 | X_{\sim j }= \tau) - \mu(X_j = -1 | X_{\sim j }= \tau))
 - (\mu(X_i = 1 | X_{\sim j }= \tau') - \mu(X_j = -1 | X_{\sim j }= \tau'))
 }} \\
&= \rc 2 \ab{\tanh \pa{\sum_{k\ne j} J_{jk} \tau_k + h_j} - \tanh \pa{\sum_{k\ne j} J_{jk} \tau_k' + h_j}} \\
&\le \rc 2 |2 J_{ji}| = |J_{ij}|
    \end{align*}
where we use the fact that $\tanh$ is 1-Lipschitz and 
$\sum_{k\ne j} J_{jk} \tau_k - \sum_{k\ne j} J_{jk} \tau_k' = 2J_{ji}$. Taking the maximum over $\tau, \tau'$ gives the bound on $\rh_{ij}$. 
The second result follows from $\ve{\rh}_2 \le \ve{\rh}_F \le \ve{J}_F$. 
\end{proof}

Define the continuous-time Glauber dynamics by associating each site $v\in V$ with a rate-1 Poisson clock and update the spin of $v$ according to Glauber dynamics whenever it rings.
\begin{thm}[{\cite[Theorem 1.2, 5.1]{liu2022simple}}]
\label{t:ly22}
    Consider a positive undirected graphical model $\mu$ on $\Om^V$ with graph $G=(V,E)$, $|V|=n$, $|E|=m$. 
    Assume that for some $p\in [1,\iy]$ that $\ve{\rh}_p\le C$.
    Consider continuous-time Glauber dynamics on $V$. 
    There is a Las Vegas parallel algorithm that given $T=O(n^{O(1)})$, outputs $X_T$ with  $O(CT+\ln n)$ depth (time) using $\wt O(m + n|\Om|^2)$ processors.
\end{thm}
Note that the main theorem in \cite{liu2022simple} is stated for $C=O(1)$, but an examination of the proof shows that it works for general $C$ with the extra $C$ factor in the depth. This follows from keeping the $C$ factor in their Lemma 3.4 in the proof of Theorem 5.1.

Combining~\Cref{e:dobrushin} and~\Cref{t:ly22} with the fast $O\pa{\fc{1}{1-\ve{J}}\ln \pf{n}{\ep}}$ mixing time of continuous-time Glauber dynamics for the Ising model with $\ve{J}<1$ \cite{anari2021entropic} gives~\Cref{t:main}.\footnote{\cite{anari2021entropic} consider the discrete-time chain; the mixing times for the discrete-time and continuous-time chains differ by a factor of $n$ \cite[\S2]{liu2022simple}.} 

We note however, that the bound on the Dobrushin influence matrix fails to be sublinear in $n$ for the $p$-spin model, so that their results are unable to derive \Cref{t:main-p-spin}. Consider a 3-spin model
\[
\mu_{\be, g,h}(x) \propto 
\exp(H_{\be, g, h}(x)), 
\quad \text{where }
H_{\be, g, h}(x) = \fc{\be\sqrt{6}}{n} \sum_{1\le i_1<i_2<i_3\le n} g_{\{i_1,i_2,i_3\}}x_{i_1}x_{i_2}x_{i_3} + \sumo in h_ix_i 
\]
where $g_{\{i_1,i_2,i_3\}}\sim N(0,1)$. 
As before, we can bound (switching indices for convenience), for $\tau_{\sim j} = \tau'_{\sim j}$, $\tau_j=1$, $\tau'_j=-1$,
\begin{align*}
    &\TV( \mu(X_i = \cdot | X_{\sim i }= \tau) , \mu(X_i = \cdot | X_{\sim i }= \tau') )\\
    &= \rc 2 \ab{
    \tanh \pa{\fc{\be \sqrt 6}n\sum_{k_1<k_2,\,i\nin \{k_1,k_2\}} g_{\{i,k_1,k_2\}} \tau_{k_1} \tau_{k_2} + h_i }
    - \tanh \pa{\fc{\be \sqrt 6}n\sum_{k_1<k_2,\,i\nin \{k_1,k_2\}} g_{\{i,k_1,k_2\}} \tau'_{k_1} \tau'_{k_2} + h_i }
    } \\
    &\le \rc 2 \cdot \fc{\be \sqrt 6}{n} 
    \ab{2\sum_{k\ne i,j} g_{\{i,j,k\}}\tau_k}
\end{align*}
However, choosing $\tau_k = \sgn (g_{\{i,j,k\}})$ and noting that $g_{\{i,j,k\}}$ is order 1, we obtain a bound of order 1 for each entry of $\rh$. Considering $h=0$, we can also consider the linearization of $\tanh$ around 0 to show that in fact $\rh_{ji} = \Te\pa{\max_{\tau\in \{\pm 1\}^{[n]\bs \{i,j\}}} \fc{\be}n \ab{\sum_{k\ne i,j} g_{\{i,j,k\}}\tau_k}\wedge 1} = \Te(1)$ with high probability. Under this event, $\ve{\rh}_p$ is of order 1 for any $p\in [1,\iy]$. 

\end{document}